\theoremstyle{plain}
\newtheorem{theorem}{Theorem}
\newtheorem{lemma}{Lemma}
\newtheorem{proposition}{Proposition}
\newtheorem{corollary}{Corollary}
\theoremstyle{definition}
\newtheorem{definition}{Definition}
\newtheorem{remark}{Remark}
\journal{}
\newcommand{\totalref}[2]{{#1 \ref{#2}}}
\date{}
\begin{document}
\begin{frontmatter}
\title{State complexity of one-way quantum finite automata together with classical states}

\author{Ligang Xiao}
\author{Daowen Qiu\corref{one}}

\cortext[one]{Corresponding author (D. Qiu). {\it E-mail addresses:} issqdw@mail.sysu.edu.cn (D. Qiu)}

\address{
   Institute of Quantum Computing and Computer Theory, School of Computer Science and Engineering, Sun Yat-sen University, Guangzhou 510006, China }

\begin{abstract}
\emph{One-way quantum finite automata together with classical states} (1QFAC) proposed in [Journal of Computer and
  System Sciences 81~(2) (2015) 359--375] is a new \emph{ one-way quantum finite automata} (1QFA) model that integrates \emph{quantum finite automata} (QFA) and \emph{deterministic finite automata} (DFA). This model uses classical states to control the evolution and measurement of quantum states. As a quantum-classical hybrid model, 1QFAC recognize all regular languages. It was shown that the state complexity of 1QFAC for some languages is essentially superior to that of DFA and other 1QFA. However, the relationships and balances between quantum states and classical states are still not clear in 1QFAC, for example, how to reduce a quantum state by adding classical states, and vice versa. In this paper, our goal is to clarify state complexity problems for 1QFAC. We obtain the following results: (1) We optimize the bound given by Qiu et al. that characterizes the relationship between quantum basis state number and classical state number of 1QFAC as well as the state number of its corresponding minimal DFA for recognizing any given regular language. (2) We give an upper bound showing that how many classical states are needed if the quantum basis states of 1QFAC are reduced without changing its recognition ability. (3) We give a lower bound of the classical state number of 1QFAC for recognizing any given regular language, and the lower bound is exact if the given language is finite. (4) We show that 1QFAC are exponentially more succinct than DFA and \emph{probabilistic finite automata} (PFA) for recognizing some regular languages that can not be recognized by measure-once 1QFA (MO-1QFA), measure-many 1QFA (MM-1QFA) or multi-letter 1QFA. (5) We reveal essential relationships between 1QFAC, MO-1QFA and multi-letter 1QFA, and induce a result regarding a quantitative relationship between the state number of multi-letter 1QFA and DFA.
\end{abstract}

\begin{keyword}
Quantum finite automata \sep State complexity \sep Regular languages \
\end{keyword}

\end{frontmatter}


\section{Introduction}\label{introduction}
Quantum computing has shown its great advantages in some aspects, such as Shor's factoring algorithm \cite{shor1994algorithms} 
  and Grover's search algorithm \cite{grover1996a}. Shor's  algorithm is exponentially faster than the corresponding known classical algorithms in factoring a large number and could be used to crack the RSA cryptosystem. Grover's search algorithm has square root acceleration compared to classical algorithms in finding a target of an unstructured set. In a way, these examples regard the power of quantum Turing machines. Nowadays it is still difficult to create large-scale universal quantum computers. Thus, another orientation is to consider more restricted theoretical models of quantum computers, such as \emph{one-way quantum finite automata} (1QFA).
	
	1QFA are restricted theoretical models of quantum computers with finite memory and their tape heads only move one cell to the right at each step. There are many kinds of 1QFA, the most basic of which are \emph{measure-once} 1QFA (MO-1QFA) proposed by Moore and Crutchfield \cite{moore2000quantum} and \emph{measure-many} 1QFA (MM-1QFA) proposed by Kondacs and Watrous \cite{kondacs1997on}. Others include  Latvian QFA  \cite{ABG06},  1QFA with control language \cite{bertoni2003quantum},    Ancilla QFA \cite{Pas00},  \emph{multi-letter} 1QFA \cite{belovs2007multi},  one-way quantum finite automata together with classical states (1QFAC) \cite{QIU20151QFAC}, etc. In MO-1QFA, each evolution depends on the last letter received and there is only one measurement performed after reading the input string. Unlike MO-1QFA, measurement in MM-1QFA is performed after reading each symbol. A $k$-letter 1QFA is ``measure-once" like MO-1QFA, but each of its evolution depends on the last $k$ letters received currently. Indeed, $k+1$-letter 1QFA have stronger ability of language recognizability than $k$-letter 1QFA \cite{QY09}, and
 a 1-letter 1QFA is an MO-1QFA exactly.

	MO-1QFA, MM-1QFA and multi-letter 1QFA accept proper subsets of regular languages with bounded error. MM-1QFA can accept more languages than MO-1QFA with bounded error \cite{golovkins2002probabilistic}, and multi-letter 1QFA can accept some regular languages not accepted by MM-1QFA \cite{bertoni2003quantum,characterizations}.  For more models of 1QFA, we can refer to \cite{bell2021on,bhatia2019quantum,nishimura2009an,say2014quantum,yakaryilmaz2010succinctness,yamakami2014one}. Besides 1QFA, there are many other important quantum automata such as \emph{two-way QFA} proposed and studied by  Kondacs, Watrous, and Ambainis \cite{AW02,kondacs1997on}.

	Recently, Qiu et al. proposed a new model of one-way finite automata that integrates 1QFA and DFA, namely, 1QFA \emph{together with classical states} (1QFAC) \cite{QIU20151QFAC}. We describe the computing procedure roughly. At the beginning, 1QFAC is in an initial classical state and an initial quantum state. After reading each input symbol, the current classical state together with current input symbol assigns a unitary transformation to act on the current quantum state, and the current classical state is updated by means of classical transformations. When the last input symbol has been scanned, a measurement  in terms of the last classical state is assigned to perform on the final quantum state, producing a result of accepting or rejecting the input string. Qiu et al. \cite{QIU20151QFAC} have proved that 1QFAC only accept all regular languages, and investigated  other basic problems of 1QFAC. They first gave the bound that uses state number of 1QFAC for recognizing any given regular language to bound the state number of its corresponding minimal DFA. Then, they showed that 1QFAC are exponentially more concise than DFA in some languages that can not be recognized by MO-1QFA or  MM-1QFA or can not be recognized by multi-letter 1QFA. Moreover, they solved the equivalence problem and minimization problem of 1QFAC.

	However, the bound mentioned above is not tight enough and it remains a number of state complexity problems of 1QFAC to be solved. More specifically, there are three pending problems:  given any regular language $L$ and given a 1QFAC recognizing $L$, if another 1QFAC recognizes $L$ with fewer quantum basis states, how many classical states does it need to be added? What is the lower bound of the classical state number of 1QFAC for recognizing any given regular language? Whether the state complexity of 1QFAC for some languages is superior to that of DFA, PFA and other 1QFA? So in this paper, our goals are to solve these problems. We study these state complexity problems of 1QFAC and reveal essential relationships between 1QFAC, MO-1QFA and multi-letter 1QFA.
It is worth mentioning that the state complexity between 1QFA and DFA has been studied (for example, \cite{bianchi2014size,li2016lower} and the references therein).


	The remainder of the paper is organized as follows. In Section \ref{sec:preliminaries}, we introduce related notations and recall the definitions of a number of one-way finite automata, including DFA, PFA, MO-1QFA, MM-1QFA, multi-letter 1QFA and 1QFAC. Then in Section \ref{sec:state-complexity}, we study state complexity problems of 1QFAC. We optimize the quantitative relationship between the state number of 1QFAC and DFA that was given in \cite{QIU20151QFAC}, clarify the trade-offs between quantum basis states and classical states and give a lower bound of the classical state number of 1QFAC for recognizing any given regular language. Afterwards in Section \ref{sec:state_succinctness}, we give an example showing that 1QFAC are exponentially more succinct than DFA and PFA in given regular languages but none of MO-1QFA, MM-1QFA and multi-letter 1QFA can recognize them. Finally in Section \ref{sec:simulation}, we reveal essential relationships between 1QFAC, MO-1QFA and multi-letter 1QFA, and induce a result regarding a quantitative relationship between the state number of multi-letter 1QFA and DFA.


\section{Preliminaries}\label{sec:preliminaries}
	In this section, we review related notations in quantum computing and recall one-way QFA. For the details, we can refer to \cite{bhatia2019quantum, nielsen2000quantum}.

\subsection{Basic notations in quantum computing}
	We recall some notations that are useful in this paper, for the details we can refer to \cite{nielsen2000quantum}. We denote by $\mathbb{R},\mathbb{C},\mathbb{N},\mathbb{Z}$ and $\mathbb{Z}^+$ the sets of real number, complex number, natural number, integer and positive integer, respectively. Let $\mathbb{F}$ be a number field. We denote by $\mathbb{F}^{m\times n}$ the set of $m\times n$ matrices whose entries are in $\mathbb{F}$, and we write $\mathbb{F}^n$ instead of $\mathbb{F}^{n\times 1}$. We denote by $I$ and $O$ the identity matrix and zero matrix, respectively. Given $M\in\mathbb{C}^{n\times m}$, we denote by $M^*,M^T,M^{\dag}$ its conjugate, transpose and adjoint, respectively, where $M^{\dag}=(M^T)^*$. $M\in \mathbb{C}^{n\times n}$ is said to be a unitary matrix if $MM^{\dag}=M^{\dag}M=I$. $P\in \mathbb{C}^{n\times n}$ is said to be a projective matrix if $P=P^{\dag}$ and $P^2=P$.
    For any $A\in \mathbb{C}^{n_1\times m_1}$ and any $B\in \mathbb{C}^{n_2\times m_2}$, their \emph{tensor product} $A\otimes B\in \mathbb{C}^{(n_1n_2)\times (m_1m_2)}$ is defined as
$$A\otimes B=\begin{bmatrix}
A_{11}B & \cdots & A_{1n}B \\
\vdots & \ddots & \vdots \\
A_{n1}B & \cdots & A_{nn}B\\
\end{bmatrix}.$$
Tensor product satisfies $(A\otimes B)\cdot(C\otimes D)=(AC)\otimes(BD)$. We denote by $A^{\otimes n}$ the tensor product of $n$'s matrices $A$. We use the Dirac notation $|\cdot\rangle$ to denote a column vector in $\mathbb{C}^n$, and its conjugate is $\langle\cdot|=|\cdot\rangle^{\dag}$. $|\phi\rangle|\psi\rangle$ and $\langle\phi|\psi\rangle$ represent $|\phi\rangle \otimes|\psi\rangle$ and $\langle\phi||\psi\rangle$, respectively.
	
As usual, $\mathbb{C}^n$ is an $n$-dimension Hilbert space with usual inner product   $\langle \phi|\psi \rangle$ for $|\phi\rangle, |\psi\rangle\in \mathbb{C}^n$, and
	 its norm  $\||\phi\rangle\|=\sqrt{\langle \phi|\phi \rangle}$. Given a finite vector set $B=\{|q_i\rangle: i=1,\dots, n\}$, we denote by $\mathcal{H}(B)$ the Hilbert space $\{\Sigma_i\alpha_i |q_i\rangle:\text{for all }  \alpha_i\in \mathbb{C}\}$, that is spanned by $B$ (or generated by $B$).

	Quantum states can be described by the vectors in $\mathbb{C}^n$ with norm 1, while evolutions of quantum states can be described by unitary matrices. A projective measurement can be described by a set of projective matrices $\{P_i\}$, where $\Sigma_i P_i=I$ and $P_iP_j=O$ if $i\not=j$. If we use $\{P_i\}$ to measure a quantum state $|\phi\rangle$, the probability of obtaining outcome $i$ is $\|P_i|\phi\rangle\|^2$, and the quantum state collapses to $P_i|\phi\rangle/\|P_i|\phi\rangle\|$ if the outcome is $i$.

\subsection{One-way finite automata}
 	In this subsection, we serve to recall the definitions of one-way (quantum) finite automata including DFA, PFA, MO-1QFA, MM-1QFA, multi-letter 1QFA and 1QFAC. We assume the readers are familiar with the basic notations in formal languages and automata theory \cite{hopcroft1979}. For the sake of convenience, we denote by $e_\epsilon$ the empty string, and denote by $|x|$ the length of string $x$.

\begin{definition}[DFA]
	A DFA $\mathcal{A}$ is defined as a quintuple $\mathcal{A}=(S, \Sigma, s_0, \delta, F)$, where $S$ is a set of finite states, $s_0$ is the initial state, $\Sigma$ is a finite input alphabet, $\delta: S\times \Sigma \rightarrow S$ is a transition map, and $F\subseteq S$ is the set of accepting states.
\end{definition}

	Given a transition map $\delta$, we extend $\delta$ as follows:
$$
\begin{cases}
\delta(s,e_\epsilon)=s,\\
\delta(s,w\sigma)=\delta(\delta(s,w),\sigma), &w\in \Sigma^*,\sigma\in\Sigma.
\end{cases}
$$

	For any input string $x$, if $\delta(s_0,x)\in F$, then $\mathcal{A}$ accepts $x$, otherwise $\mathcal{A}$ rejects $x$.

\begin{definition}[PFA \cite{tzeng1992a}]
	A PFA $\mathcal{A}$ is defined as a quintuple $\mathcal{A}=(S,\Sigma,M,\rho,F)$, where $S$ is a set of finite states, $\rho\in\mathbb{R}^{1\times |S|}$ is the initial-state distribution, $\Sigma$ is a finite input alphabet, $M\in\mathbb{R}^{|S|\times |S|}$ is a stochastic matrix where the $(i,j)$th entry of $M$ represents the probability of the $i$th state in $S$ transiting to the $j$th state in $S$, and $F\subseteq S$ is a set of accepting states.
\end{definition}

	Let $\eta_F$ be a vector in $\mathbb{R}^{|S|}$ such that for $1\leq i\leq |S|$, if $s_i\in F$, then the $i$th entry of $\eta_F$ is 1, otherwise the $i$th entry of $\eta_F$ is 0. Given an input string $x=\sigma_1\cdots \sigma_n \in \Sigma^*$, the probability of $\mathcal{A}$ accepting $x$ is $\rho M(\sigma_1)\dots M(\sigma_n) \eta_F$.

\begin{definition}[MO-1QFA]
	An MO-1QFA $\mathcal{A}$ is defined as a quintuple $\mathcal{A}=(Q, \Sigma, |\psi_0\rangle, \{U_{\sigma}\}_{\sigma \in \Sigma}, Q_{acc})$, where $Q$ is a set of finite states that form an orthonormal basis in the Hilbert space $\mathbb{C}^{|Q|}$ (we call them ``quantum basis states" ), $|\psi_0\rangle$ is the initial quantum state which is a unit vector in $\mathbb{C}^{|Q|}$, $\Sigma$ is a finite input alphabet, $U_{\sigma}$ is a unitary matrix on $\mathbb{C}^{|Q|}$ for each $\sigma \in \Sigma$ and $Q_{acc}\subseteq Q$ is the set of accepting states.
\end{definition}

	For any input string $x=\sigma_1\cdots \sigma_n \in \Sigma^*$, MO-1QFA $\mathcal{A}$ works as follows: at the beginning, $\mathcal{A}$ is in the initial quantum state $|\psi_0\rangle$, and the tape head will scan the input string from left to right. When the character $\sigma_i$ is being scanned, the unitary transformation $U_{\sigma_i}$ acts on the current quantum state and then the current quantum state changes. When the last letter $\sigma_n$ has been scanned, $\mathcal{A}$ is in the final quantum state $|\psi_x \rangle=U_{\sigma_n}\cdots U_{\sigma_1}|\psi_0\rangle$. Finally, a measurement is performed on $|\psi_x \rangle$ and the accepting probability is
$${\rm{Prob}}_{\mathcal{A},acc}(x)=\left\| P_{acc} |\psi_x\rangle \right\|^2,$$
where $P_{acc}=\Sigma_{a\in Q_{acc}}|a\rangle \langle a|$ is the projector onto the subspace generated by $Q_{acc}$.

\begin{definition}[multi-letter 1QFA]
	A $k$-letter 1QFA $\mathcal{A}$ is defined as a quintuple $\mathcal{A}=(Q, \Sigma, |\psi_0\rangle, \{U_{w}\}_{w \in (\{\Lambda\}\cup \Sigma)^k}, Q_{acc})$, where $Q, \Sigma, |\psi_0\rangle$, $Q_{acc}$ are the same as those in MO-1QFA, $U_{w}$ is a unitary operator for each $w \in (\{\Lambda\}\cup \Sigma)^k$, and $\Lambda\not\in\Sigma$ is the blank letter.
\end{definition}

	The working process of a $k$-letter 1QFA is almost the same as MO-1QFA, except that each evolution of its quantum state depends on the last $k$ letters received currently. A $1$-letter 1QFA is actually an MO-1QFA. For any input string $x=\sigma_1\cdots \sigma_n$, unlike MO-1QFA, the final quantum state of $k$-letter 1QFA $\mathcal{A}$ is
\begin{equation}
|\psi_x\rangle=
\begin{cases}\label{final_state_ML-1QFA}
U_{\Lambda^{k-n}\sigma_1\sigma_2\cdots \sigma_n}\cdots U_{\Lambda^{k-2}\sigma_1\sigma_2}U_{\Lambda^{k-1}\sigma_1}|\psi_0\rangle,&\text{if}\ n<k,\\
U_{\sigma_{n-k+1}\sigma_{n-k+2}\cdots \sigma_n}\cdots U_{\Lambda^{k-2}\sigma_1\sigma_2}U_{\Lambda^{k-1}\sigma_1}|\psi_0\rangle,&\text{if}\ n\geq k.
\end{cases}
\end{equation}

\begin{definition}[MM-1QFA]
	An MM-1QFA $\mathcal{A}$ is defined as a 6-tuple $\mathcal{A}=(Q, \Sigma, |\psi_0\rangle, \{U_{\sigma}\}_{\sigma \in \Sigma\cup\{\$\}}, Q_{acc}, Q_{rej})$, where $Q, \Sigma, |\psi_0  \rangle$, $U_{\sigma}$ are the same as those in MO-1QFA, $\$ \not\in \Sigma$ denotes the right end-mark, $Q_{acc}\subseteq Q$ and $Q_{rej}\subseteq Q$ satisfying $Q_{rej}\cap Q_{acc}=\emptyset$ denote accepting and rejecting states, respectively. In addition, $Q_{non}=Q-Q_{acc}-Q_{rej}$ denotes non-halting states.
\end{definition}

 	The working process of an MM-1QFA is similar to MO-1QFA, except that after each unitary transformation, a projective measurement consisting of $\{P_{acc},P_{rej},P_{non}\}$ is performed on the current state, where $P_{acc}=\Sigma_{a\in Q_{acc}}|a\rangle \langle a|$, $P_{rej}=\Sigma_{a\in Q_{rej}}|a\rangle \langle a|$, $P_{non}=\Sigma_{a\in Q_{non}} |a\rangle \langle a|$. If the measurement outcome is ``accept" (or ``reject"), $\mathcal{A}$ halts and accepts (or rejects) the input string, otherwise $\mathcal{A}$ continues working until the last measurement is performed.

\begin{definition}[1QFAC\cite{QIU20151QFAC}]
	A 1QFAC $\mathcal{A}$ is defined as a 9-tuple $\mathcal{A}=(S, Q, \Sigma, \Gamma, s_0, |\psi_0\rangle, \delta, \mathbb{U}, \mathcal{M})$, where $Q,\Sigma, |\psi_0\rangle$ are the same as those in MO-1QFA, $S$ is a finite set of classical states, $\Gamma$ is the finite output alphabet, $s_0$ is the initial classical state, $\delta: S\times \Sigma \rightarrow S$ is the classical transition map, $\mathbb{U}=\{U_{s\sigma}\}_{s\in S,\sigma \in \Sigma}$ where $U_{s\sigma}$ is a unitary operator for each $s\in S$ and $\sigma\in \Sigma$, and $\mathcal{M}=\{\mathcal{M}_s\}_{s\in S}$ where $\mathcal{M}_s=\{P_{s,\gamma}\}_{\gamma\in \Gamma}$ is a projective measurement over $\mathcal{H}(Q)$ with outcomes in $\Gamma$ for each $s\in S$.
\end{definition}

	For any input string $x=\sigma_1\cdots \sigma_n$, $\mathcal{A}$ works as follows: $\mathcal{A}$ starts at the initial classical state $s_0$ and initial quantum state $|\psi_0\rangle$. The tape head will scan the input string from left to right. When the character $\sigma_i$ is being scanned, the transformation $U_{s\sigma_i}$ acts on the current quantum state, where $s$ is the current classical state and $s$ is changed to $\delta(s,\sigma_i)$. When $\sigma_n$ has been scanned, the final quantum state and classical state are $|\psi_x \rangle$ and $s_x$, respectively. Finally, the measurement $\mathcal{M}_{s_x}$ is performed on $|\psi_x \rangle$ and the probability of producing $\gamma$ is
$${\rm{Prob}}_{\mathcal{A},\gamma}(x)=\left\| P_{{s_x},\gamma} |\psi_x \rangle \right\|^2,$$
where $s_x=\delta(s_0,x)$ and $|\psi_x \rangle=U_{\delta(s_0,\sigma_1\cdots \sigma_{n-1})\sigma_n}\cdots U_{\delta(s_0,\sigma_1)\sigma_2}U_{s_0\sigma_1}|\psi_0 \rangle$.

	In this paper, we just consider $\Gamma=\{\text{``accept"},\text{``reject"}\}$. We say that a language $L$ is recognized by a QFA with isolated cut-point if ${\rm{Prob}}_{acc}(x)\geq \lambda+\epsilon$ for $x\in L$ and ${\rm{Prob}}_{acc}(x)\leq \lambda-\epsilon$ for $x\not\in L$ for some $\lambda$ in $(0,1)$ and $\epsilon>0$. More precisely, we say $L$ is recognized by the QFA with cut-point $\lambda$ isolated by $\epsilon$. We say a language $L$ is recognized by a QFA with one-side error $\epsilon$ if ${\rm{Prob}}_{acc}(x)=1$ for $x\in L$ and ${\rm{Prob}}_{acc}(x)\leq\epsilon$ for $x\not\in L$.

\section{State complexity of 1QFAC}\label{sec:state-complexity}
	State complexity is an important subject of finite automata, which  reflects the computational and space complexity and how much computing resources is required to some extent. In this section, we further study state complexity problems concerning 1QFAC. We give the quantitative relationships between the state number of 1QFAC and DFA, clarify the trade-offs between quantum basis states and classical states, and give a lower bound of the classical state number of 1QFAC for recognizing any given regular language.


\subsection{Quantitative relationships between the state number of 1QFAC and DFA}
	Given a regular language $L$, if a 1QFAC recognizes $L$ with isolated cut-point and with $n$ quantum basis state and $k$ classical states, and the minimal DFA of $L$ has $m$ states, then what are the quantitative relationships between $n$,$k$ and $m$? Qiu et al. \cite{QIU20151QFAC}  proved that $m=2^{O(kn)}$. Here our purpose is to exponentially optimize the bound to $m=k2^{O(n)}$. In Section \ref{sec: state_succinctness} we further show that our bound is actually tight.

\begin{definition}
Given a 1QFAC $\mathcal{A} = (S, Q, \Sigma,\Gamma, s_0, |\psi_0 \rangle, \delta , \mathbb{U}, \mathcal{M} )$ and a string $x=x_1x_2\cdots x_n$, where $\mathbb{U}=\{U_{s\sigma}\}_{s\in S,\sigma \in \Sigma}$, we introduce the following notations for 1QFAC $\mathcal{A}$.
\begin{itemize}
\item $s_y$: $\forall y\in \Sigma^*,s_y=\delta(s_0,y)$. The notation can be used in DFA as well.
\item $U_{s,x}$: $\forall s\in S,U_{s,x}=U_{s_{n-1}x_n}\cdots U_{s_1x_2}U_{sx_1}$, where $s_1=\delta(s,x_1)$, $s_i=\delta(s_{i-1},x_i),i=2,3,\dots,n$, we specify $U_{s,e_\epsilon}=I$.
\item $|\psi_x \rangle$: $|\psi_x \rangle=U_{s_0,x}|\psi_0 \rangle$.
\item $|\phi_x  \rangle$: $|\phi_x  \rangle=|s_x \rangle|\psi_x \rangle$, where $\{|s\rangle:s\in S\}$ is an orthonormal base of $\mathbb{C}^{|S|}$. 
\end{itemize}
\end{definition}



	First we need the following lemmas.

\begin{lemma}[\cite{QIU20151QFAC}]\label{number_of_elements}
	Let $V_{\theta}\subseteq \mathbb{C}^n$ satisfy that for any $|\phi_1\rangle, |\phi_2\rangle \in V_{\theta},\left\| |\phi_1 \rangle-|\phi_2\rangle \right\|\geq \theta$ holds if $|\phi_1\rangle \not=|\phi_2\rangle$. Then $|V_{\theta}|\leq (1+\frac{2}{\theta})^{2n}.$
\end{lemma}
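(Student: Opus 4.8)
The plan is to prove this by the standard volume‑packing (sphere‑packing) estimate, carried out after identifying $\mathbb{C}^n$ with the Euclidean space $\mathbb{R}^{2n}$. Splitting each complex coordinate into its real and imaginary parts gives an $\mathbb{R}$‑linear bijection $\mathbb{C}^n\to\mathbb{R}^{2n}$ that preserves $\|\cdot\|$, so all distances among elements of $V_\theta$ are unchanged. For the bound to be meaningful the elements of $V_\theta$ must lie in a bounded region; as the lemma is applied to sets of quantum states, I take each $|\phi\rangle\in V_\theta$ to be a unit vector (indeed $\||\phi\rangle\|\le 1$ would suffice), so that $V_\theta$ is contained in the closed Euclidean ball $B(\mathbf{0},1)$ of radius $1$ about the origin. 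If $\theta>2$ the claim is immediate, since any two unit vectors are within distance $2$ and hence $|V_\theta|\le 1$; so I may assume $0<\theta\le 2$.

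First I would associate to each point $|\phi\rangle\in V_\theta$ the open ball $B(|\phi\rangle,\theta/2)\subseteq\mathbb{R}^{2n}$. By the separation hypothesis and the triangle inequality these balls are pairwise disjoint: a common point $v$ of $B(|\phi_1\rangle,\theta/2)$ and $B(|\phi_2\rangle,\theta/2)$ with $|\phi_1\rangle\ne|\phi_2\rangle$ would give $\||\phi_1\rangle-|\phi_2\rangle\|\le\|v-|\phi_1\rangle\|+\|v-|\phi_2\rangle\|<\theta$, a contradiction. Moreover each $B(|\phi\rangle,\theta/2)$ is contained in the enlarged ball $B(\mathbf{0},1+\theta/2)$, since its center lies in $B(\mathbf{0},1)$.

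Then I would compare Lebesgue volumes in $\mathbb{R}^{2n}$, using that the volume of a Euclidean ball of radius $r$ equals $c_{2n}r^{2n}$ for a constant $c_{2n}$ depending only on the dimension. Disjointness together with the containment yields
$$|V_\theta|\,c_{2n}(\theta/2)^{2n}=\sum_{|\phi\rangle\in V_\theta}\mathrm{vol}\big(B(|\phi\rangle,\theta/2)\big)\le\mathrm{vol}\big(B(\mathbf{0},1+\theta/2)\big)=c_{2n}(1+\theta/2)^{2n},$$
and cancelling $c_{2n}$ and rearranging gives $|V_\theta|\le\big((1+\theta/2)/(\theta/2)\big)^{2n}=(1+2/\theta)^{2n}$, as claimed.

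This argument is entirely routine, so there is no genuine obstacle. The only points that need attention are recording the (implicit) hypothesis that the elements of $V_\theta$ are unit vectors — without some norm bound the statement is false, since one can fit infinitely many pairwise $\theta$‑separated points in $\mathbb{C}^n$ — and disposing of the degenerate case $\theta>2$, which is immediate.
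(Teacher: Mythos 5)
Your proof is correct and is essentially the standard volume-packing argument used for this lemma in the cited source \cite{QIU20151QFAC} (the present paper only quotes the lemma without reproving it): identify $\mathbb{C}^n$ isometrically with $\mathbb{R}^{2n}$, place disjoint open balls of radius $\theta/2$ around the points, and compare volumes inside the ball of radius $1+\theta/2$. You are also right to flag the implicit hypothesis that the elements of $V_\theta$ are unit (or norm-bounded) vectors, which is how the lemma is always applied here and without which the statement as written would be false.
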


\begin{lemma}\label{vector_distance}
Let P be a projective operator on $\mathbb{C}^n $, and let $|\phi_1 \rangle,|\phi_2 \rangle$ be unit vectors in $\mathbb{C}^n$. Then $| \left\|P|\phi_1\rangle \right\|^2-\left\|P|\phi_2\rangle \right\|^2| \leq  2\left\|\ |\phi_1 \rangle-|\phi_2\rangle\ \right\|$.
\end{lemma}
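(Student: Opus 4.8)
The plan is to reduce the difference of the two squared norms to a product of two factors, each of which is easy to bound. First I would factor
$$\bigl\| P|\phi_1\rangle \bigr\|^2 - \bigl\| P|\phi_2\rangle \bigr\|^2 = \Bigl( \bigl\| P|\phi_1\rangle \bigr\| - \bigl\| P|\phi_2\rangle \bigr\| \Bigr)\Bigl( \bigl\| P|\phi_1\rangle \bigr\| + \bigl\| P|\phi_2\rangle \bigr\| \Bigr).$$
The second factor is controlled immediately: since $P$ is a projection we have $\|P|\phi_i\rangle\| \le \||\phi_i\rangle\| = 1$ for $i=1,2$ (this follows from $\|P|\phi_i\rangle\|^2 = \langle\phi_i|P^{\dag}P|\phi_i\rangle = \langle\phi_i|P|\phi_i\rangle \le \langle\phi_i|\phi_i\rangle$, using $P=P^{\dag}$ and $P^2=P$ together with the fact that $I-P$ is also a projection and hence positive semidefinite), so the sum is at most $2$.

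For the first factor I would invoke the reverse triangle inequality in $\mathbb{C}^n$,
$$\Bigl| \bigl\| P|\phi_1\rangle \bigr\| - \bigl\| P|\phi_2\rangle \bigr\| \Bigr| \le \bigl\| P|\phi_1\rangle - P|\phi_2\rangle \bigr\| = \bigl\| P\bigl(|\phi_1\rangle - |\phi_2\rangle\bigr) \bigr\|,$$
and then bound $\|P\bigl(|\phi_1\rangle - |\phi_2\rangle\bigr)\| \le \||\phi_1\rangle - |\phi_2\rangle\|$, again because a projection is norm non-increasing. Multiplying the two bounds gives
$$\Bigl| \bigl\| P|\phi_1\rangle \bigr\|^2 - \bigl\| P|\phi_2\rangle \bigr\|^2 \Bigr| \le 2\,\bigl\| |\phi_1\rangle - |\phi_2\rangle \bigr\|,$$
which is the claim. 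An alternative route, if one prefers to avoid the factorization, is to write the difference as $\langle\phi_1|P|\phi_1\rangle - \langle\phi_2|P|\phi_2\rangle$, insert and subtract the cross term $\langle\phi_1|P|\phi_2\rangle$, and apply the Cauchy--Schwarz inequality to each of the two resulting inner products; both approaches yield the same constant.

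There is essentially no serious obstacle here: the only point requiring care is making sure each bounding step uses a genuine property of projective matrices (idempotence and self-adjointness, which give $\|P\|\le 1$ as an operator) rather than just the general inequality for arbitrary operators, and keeping track that the unit-vector hypothesis on $|\phi_1\rangle,|\phi_2\rangle$ is exactly what caps the second factor at $2$. The lemma will then be used, presumably in combination with \totalref{Lemma}{number_of_elements}, to relate an $\epsilon$-separated set of quantum-classical states $|\phi_x\rangle$ to the acceptance probabilities that distinguish strings, so the constant $2$ (rather than, say, a worse absolute constant) is the feature worth getting right.
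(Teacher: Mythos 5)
Your proposal is correct and follows essentially the same route as the paper's proof: factor the difference of squares, bound the sum of norms by $2$ using the unit-vector hypothesis, and bound the difference of norms via the reverse triangle inequality together with the fact that a projection is norm non-increasing. You simply spell out the justifications (e.g.\ $\|P|\phi_i\rangle\|\le 1$) that the paper leaves implicit.
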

\begin{proof}
$\left| \left\|P|\phi_1\rangle \right\|^2-\left\|P|\phi_2\rangle \right\|^2  \right|
=\left| \left\|P|\phi_1\rangle \right\|-\left\|P|\phi_2\rangle \right\|      \right| \cdot
 \left| \left\|P|\phi_1\rangle \right\|+\left\|P|\phi_2\rangle \right\|      \right|
\leq \left\|P|\phi_1\rangle -P|\phi_2\rangle \right\| \cdot 2
\leq$\\  $2\left\|\ |\phi_1 \rangle-|\phi_2\rangle\ \right\|$.
\end{proof}

	We recall Myhill-Nerode theorem \cite{hopcroft1979} that characterizes the relationship between the state number of  minimal DFA and the number of equivalence classes derived from its corresponding regular language.

\begin{definition}[\cite{QIU20151QFAC}]
Let $L$ be a regular language. Then an equivalence relation ``$\equiv_L$" is defined as: $\forall x,y\in \Sigma^*$, if $\forall z\in \Sigma^*,xz\in L$ if and only if $yz\in L$, then $x\equiv_L y$, otherwise $x\not\equiv_L y$.
\end{definition}

	 By the equivalence relation ``$\equiv_L$", $\Sigma^*$ is partitioned into some equivalence classes.

\begin{lemma}[Myhill-Nerode theorem\cite{hopcroft1979}] \label{Myhill-Nerode theorem}
	Given any regular language $L\subseteq \Sigma^*$, then the number of equivalence classes of $\Sigma^*$ induced by the equivalence relation ``$\equiv_L$" equals to the number of states in the minimal DFA accepting $L$.
\end{lemma}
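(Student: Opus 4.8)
The statement is the classical Myhill--Nerode theorem, so the plan is the standard two--sided argument. Write $N$ for the number of $\equiv_L$-classes. I would show (i) every DFA accepting $L$ has at least $N$ states, and (ii) there is one DFA with exactly $N$ states accepting $L$; combining these gives that a minimal DFA has exactly $N$ states. Regularity of $L$ is used only to guarantee $N<\infty$, which drops out of step (i). For (i), let $\mathcal{B}=(S,\Sigma,s_0,\delta,F)$ be any DFA with $L(\mathcal{B})=L$. The first thing to check is that $\delta(s_0,x)=\delta(s_0,y)$ forces $x\equiv_L y$: for every $z\in\Sigma^*$ one has $\delta(s_0,xz)=\delta(\delta(s_0,x),z)=\delta(\delta(s_0,y),z)=\delta(s_0,yz)$, hence $xz\in L\iff\delta(s_0,xz)\in F\iff\delta(s_0,yz)\in F\iff yz\in L$. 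Thus the class $[x]$ is a function of $\delta(s_0,x)$ alone, giving a well-defined surjection from the set of states of $\mathcal{B}$ reachable from $s_0$ onto the set of $\equiv_L$-classes; so $\mathcal{B}$ has at least $N$ states, and applying this to any DFA for the regular language $L$ also shows $N$ is finite.

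For (ii), I would build the canonical automaton $\mathcal{A}_L=(S',\Sigma,[e_\epsilon],\delta',F')$ with $S'=\{[x]:x\in\Sigma^*\}$, $\delta'([x],\sigma)=[x\sigma]$, and $F'=\{[x]:x\in L\}$, the key point being the two well-definedness verifications. If $[x]=[y]$, then for every $\sigma\in\Sigma$ and $z\in\Sigma^*$ we have $x(\sigma z)\in L\iff y(\sigma z)\in L$, so $[x\sigma]=[y\sigma]$ and $\delta'$ is well defined; taking $z=e_\epsilon$ shows $x\in L\iff y\in L$, so membership in $F'$ depends only on the class. A routine induction on $|x|$ gives $\delta'([e_\epsilon],x)=[x]$, whence $\mathcal{A}_L$ accepts $x$ iff $[x]\in F'$ iff $x\in L$; therefore $L(\mathcal{A}_L)=L$ and $|S'|=N$.

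Putting the pieces together, a minimal DFA for $L$ has at least $N$ states by (i) and at most $N$ states because $\mathcal{A}_L$ witnesses $N$, hence exactly $N$. The only genuinely delicate points are the two well-definedness checks for $\mathcal{A}_L$ and invoking regularity exactly once to conclude $N<\infty$; the rest is bookkeeping, which is why the standard reference \cite{hopcroft1979} is simply cited here rather than reproved.
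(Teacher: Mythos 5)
The paper does not prove this lemma at all---it is quoted verbatim from \cite{hopcroft1979} as a known classical result---so there is no in-paper argument to compare against. Your two-sided proof is the standard and correct one: part (i) (states of any DFA for $L$ surject onto the $\equiv_L$-classes via $\delta(s_0,x)\mapsto[x]$, which also yields $N<\infty$ from regularity) and part (ii) (the canonical quotient automaton with the two well-definedness checks) together pin the minimal DFA at exactly $N$ states, with no gaps.
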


	Now we are ready to improve the bound between 1QFAC and DFA mentioned above.

\begin{lemma}\label{number_of_class}
	Let $L$ be a regular language. 1QFAC $\mathcal{A}=(S, Q, \Sigma,\Gamma, s_0, |\psi_0 \rangle, \delta , \mathbb{U}, \mathcal{M} )$ recognizes $L$ with cut-point $\lambda$ isolated by $\epsilon$. Let $A_s=\{w: w\in\Sigma^*\ and\ s_w=s\}$ where $s\in S$, and let $t_s$ denote the number of equivalence classes of partitioning $A_s$ by the equivalence relation ``$\equiv_L$". Then $t_s\leq (1+\frac{2}{\epsilon})^{2n} $, where $n=|Q|$.
\end{lemma}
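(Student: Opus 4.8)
The plan is to attach to every $\equiv_L$-class contained in $A_s$ the final quantum state that $\mathcal{A}$ reaches on a chosen representative of that class, and to show that representatives of two different classes produce quantum states lying at distance at least $\epsilon$ in $\mathbb{C}^n$. Once that is established, these $t_s$ unit vectors form exactly the kind of set considered in Lemma~\ref{number_of_elements} with $\theta=\epsilon$, so $t_s\le(1+\tfrac{2}{\epsilon})^{2n}$ follows at once. If $t_s=0$ there is nothing to prove, so I assume $t_s\ge 1$ and fix one representative $w_C\in C$ for each class $C$, with associated unit vector $|\psi_{w_C}\rangle=U_{s_0,w_C}|\psi_0\rangle\in\mathbb{C}^n$ (it is a unit vector because $U_{s_0,w_C}$ is a product of unitaries).

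First I would record the composition property: for any $u,z\in\Sigma^*$ one has $U_{s_0,uz}=U_{s_u,z}\,U_{s_0,u}$, hence $|\psi_{uz}\rangle=U_{s_u,z}|\psi_u\rangle$; this is immediate from the definition of the operators $U_{s,\cdot}$. The key consequence is a \emph{path-independence} phenomenon: if $u,v\in A_s$, i.e.\ $s_u=s_v=s$, then reading a suffix $z$ after $u$ and after $v$ passes through the same classical states (namely $\delta(s,z_1),\delta(s,z_1z_2),\dots$, which depend only on $s$ and $z$), applies the same unitary $U_{s,z}$, and ends in the same classical state $s':=\delta(s,z)$. Therefore $\mathrm{Prob}_{\mathcal{A},acc}(uz)=\|P_{s',acc}U_{s,z}|\psi_u\rangle\|^2$ and $\mathrm{Prob}_{\mathcal{A},acc}(vz)=\|P_{s',acc}U_{s,z}|\psi_v\rangle\|^2$ with the \emph{same} projector $P_{s',acc}$ from $\mathcal{M}_{s'}$ and the \emph{same} unitary $U_{s,z}$.

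Next I would take two distinct classes $C_1,C_2$ and set $x=w_{C_1}$, $y=w_{C_2}$. Since $x\not\equiv_L y$ there is a suffix $z$ with exactly one of $xz,yz$ in $L$, so the isolation hypothesis gives $|\mathrm{Prob}_{\mathcal{A},acc}(xz)-\mathrm{Prob}_{\mathcal{A},acc}(yz)|\ge 2\epsilon$. Applying Lemma~\ref{vector_distance} with $P=P_{s',acc}$ to the unit vectors $U_{s,z}|\psi_x\rangle$ and $U_{s,z}|\psi_y\rangle$, and then using that $U_{s,z}$ is norm-preserving, yields $2\epsilon\le 2\|U_{s,z}|\psi_x\rangle-U_{s,z}|\psi_y\rangle\|=2\||\psi_x\rangle-|\psi_y\rangle\|$, i.e.\ $\||\psi_{w_{C_1}}\rangle-|\psi_{w_{C_2}}\rangle\|\ge\epsilon$. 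Thus the family $\{|\psi_{w_C}\rangle\}_C$ consists of $t_s$ unit vectors pairwise at distance at least $\epsilon$ in $\mathbb{C}^n$, and Lemma~\ref{number_of_elements} gives $t_s\le(1+\tfrac{2}{\epsilon})^{2n}$.

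I expect the only genuinely delicate point to be the path-independence step: one must check that the continuation of the computation on a suffix $z$ depends on the prefix \emph{only through the classical state it leaves $\mathcal{A}$ in}, so that $xz$ and $yz$ can be compared through a single common unitary $U_{s,z}$ and a single common final measurement $\mathcal{M}_{s'}$. This is precisely what allows the distance bound to survive the application of $U_{s,z}$ (by unitarity), and it is exactly the property the definition of 1QFAC is designed to guarantee; everything else is a routine combination of Lemmas~\ref{number_of_elements} and~\ref{vector_distance} with the isolation hypothesis.
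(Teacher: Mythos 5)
Your proposal is correct and follows essentially the same route as the paper's proof: both arguments use the fact that for $x,y\in A_s$ the continuation on a suffix $z$ involves the same unitary $U_{s,z}$ and the same projector $P_{\delta(s,z),acc}$, apply Lemma~\ref{vector_distance} together with the $2\epsilon$ gap and unitarity to get $\||\psi_x\rangle-|\psi_y\rangle\|\geq\epsilon$, and conclude via Lemma~\ref{number_of_elements}. The ``path-independence'' step you flag as delicate is exactly the identity the paper uses implicitly through its definition of $U_{s,z}$, so nothing is missing.
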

\begin{proof}
	For any $s\in S$, $\forall x,y \in A_s$, if $x\not\equiv_L y$, then $\exists z\in \Sigma^*$ such that $xz\in L$ whereas $yz\not\in L$ or $xz\not\in L$ whereas $yz\in L$. Without loss of generality, we assume the former case holds. That is,
\begin{equation}
{\rm Prob}_{\mathcal{A},acc}(xz)=\left\|P_{\delta(s,z),acc}U_{s,z}|\psi_x\rangle  \right\|^2\geq \lambda+\epsilon \text{,}
\end{equation}
\begin{equation}
{\rm Prob}_{\mathcal{A},acc}(yz)=\|P_{\delta(s,z),acc}U_{s,z}|\psi_y\rangle  \|^2\leq \lambda-\epsilon \text{.}
\end{equation}
By  \totalref{Lemma}{vector_distance}, we have
\begin{equation}\label{distance_gap}
\|\ |\psi_x \rangle-|\psi_y\rangle\ \|=\|\ U_{s,z}|\psi_x \rangle-U_{s,z}|\psi_y\rangle\ \|\geq \frac{1}{2}[(\lambda+\epsilon)-(\lambda-\epsilon)]=\epsilon .
\end{equation}
By  \totalref{Lemma}{number_of_elements}, we obtain
\begin{equation}
t_s\leq(1+\frac{2}{\epsilon})^{2n}.
\end{equation}
\end{proof}

 Now we have the following theorem that gives a better relationship between the state number of minimal DFA and 1QFAC for recognizing any regular language.
\begin{theorem}\label{1QFAC_size}
	Let $L$ be a regular language, and the minimal DFA accepting $L$ has $m$ states. Given cut-point $\lambda$ and isolation $\epsilon$, if a 1QFAC with $k$ classical states and $n$ quantum basis states recognizes $L$ with cut-point $\lambda$ isolated by $\epsilon$, then $m=k2^{O(n)}$ .
\end{theorem}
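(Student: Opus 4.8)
The plan is to deduce the bound from Lemma~\ref{number_of_class} and the Myhill--Nerode theorem (Lemma~\ref{Myhill-Nerode theorem}) by a short counting argument over the classical states. First I would record the elementary but crucial observation that the classical transition map makes $\{A_s\}_{s\in S}$ an honest partition of $\Sigma^*$: every string $w$ has a uniquely determined final classical state $s_w=\delta(s_0,w)$, so the blocks $A_s=\{w\in\Sigma^*: s_w=s\}$ are pairwise disjoint, cover $\Sigma^*$, and number exactly $k=|S|$ (some possibly empty). By Lemma~\ref{Myhill-Nerode theorem}, $m$ is the number of equivalence classes $C_1,\dots,C_m$ of $\Sigma^*$ under $\equiv_L$, so it suffices to bound this number.

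The main step is the combinatorial inequality $m\le\sum_{s\in S}t_s$. For a fixed $s$, the equivalence classes of $A_s$ under the restriction of $\equiv_L$ are precisely the nonempty sets among $C_1\cap A_s,\dots,C_m\cap A_s$, since restricting an equivalence relation to a subset yields exactly the traces of the original classes; hence $t_s=|\{i: C_i\cap A_s\neq\emptyset\}|$. Summing over $s$ and exchanging the order of summation,
\[
\sum_{s\in S}t_s=\sum_{s\in S}\bigl|\{i: C_i\cap A_s\neq\emptyset\}\bigr|=\sum_{i=1}^{m}\bigl|\{s\in S: C_i\cap A_s\neq\emptyset\}\bigr|\ \ge\ \sum_{i=1}^{m}1=m,
\]
where the inequality holds because each $C_i$ is nonempty and $\{A_s\}$ covers $\Sigma^*$, so every $C_i$ meets at least one block $A_s$.

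Finally I would plug in Lemma~\ref{number_of_class}, which gives $t_s\le(1+\tfrac{2}{\epsilon})^{2n}$ for every $s$, to obtain
\[
m\ \le\ \sum_{s\in S}t_s\ \le\ k\Bigl(1+\tfrac{2}{\epsilon}\Bigr)^{2n}.
\]
Since $\epsilon$ is a fixed constant, $(1+\tfrac{2}{\epsilon})^{2n}=2^{\,2n\log_2(1+2/\epsilon)}=2^{O(n)}$, so $m=k\cdot 2^{O(n)}$, as claimed. There is no real obstacle here: Lemma~\ref{number_of_class} already carries out the analytic heavy lifting — bounding, via the distance estimate of Lemma~\ref{vector_distance} and the packing bound of Lemma~\ref{number_of_elements}, the number of $\equiv_L$-classes inside a single classical block in terms of the quantum dimension and the isolation. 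The only point needing a moment's care is the trace/partition identity above, which replaces the crude $m=2^{O(kn)}$ of \cite{QIU20151QFAC} (where the classical and quantum parts were bounded together) by a bound that is merely linear in $k$.
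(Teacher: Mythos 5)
Your proposal is correct and follows exactly the paper's own route: combine Myhill--Nerode with Lemma~\ref{number_of_class} to get $m\le\sum_{s\in S}t_s\le k(1+\tfrac{2}{\epsilon})^{2n}=k2^{O(n)}$. The only difference is that you spell out the partition/trace argument behind $m\le\sum_{s}t_s$, which the paper leaves implicit.
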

\begin{proof}
	Let the 1QFAC be $\mathcal{A}=(S, Q, \Sigma,\Gamma, s_0, |\psi_0 \rangle, \delta , \mathbb{U}, \mathcal{M} )$. By  \totalref{Lemma}{Myhill-Nerode theorem} and Lemma \ref{number_of_class}, we have $m\leq \sum_{s\in S}t_s\leq k(1+\frac{2}{\epsilon})^{2n}$, where $t_s$ is defined in  \totalref{Lemma}{number_of_class}. That is $m=k2^{O(n)}$.
\end{proof}

	As a comparison, the bound shown in \cite{QIU20151QFAC} is $m=2^{O(kn)}$. Let $k,n,m$ be the same as those in  \totalref{Theorem}{1QFAC_size}. Then we have the following corollaries.

\begin{corollary}
	$\max(k,n)=\Omega(\log m)$.
\end{corollary}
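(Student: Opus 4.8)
The plan is to derive the corollary directly from Theorem~\ref{1QFAC_size} by taking logarithms of the inequality $m \le k(1+\tfrac{2}{\epsilon})^{2n}$, which is the concrete bound established in the proof of that theorem. Since $\epsilon$ is a fixed isolation parameter, the quantity $c := 2\log(1+\tfrac{2}{\epsilon})$ is a positive constant, and the bound reads $\log m \le \log k + c n$.

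From $\log m \le \log k + cn$, I would argue by cases on which of the two terms on the right dominates. If $\log k \ge cn$, then $\log m \le 2\log k$, so $\log k \ge \tfrac12 \log m$. Otherwise $cn > \log k$, and then $\log m < 2cn$, so $n > \tfrac{1}{2c}\log m$. In either case, $\max(k,n)$ — more precisely, $\max(\log k, n)$, and hence $\max(k,n)$ since $k \ge \log k$ for $k\ge 1$ — is at least a positive constant times $\log m$. This gives $\max(k,n) = \Omega(\log m)$, as claimed. One should note that the asymptotics are with respect to $m$ (equivalently, with respect to a family of regular languages of growing minimal-DFA size), with $\epsilon$ treated as a constant; the hidden constant in $\Omega(\cdot)$ depends on $\epsilon$ through $c$.

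There is essentially no obstacle here: the result is a one-line consequence of the theorem's explicit bound, and the only mild subtlety is bookkeeping about what is treated as constant (the isolation $\epsilon$) versus what grows ($m$, and correspondingly $k$ and $n$). If one wanted to be fully careful, the statement should perhaps be phrased as "for every family of regular languages $L$ with minimal-DFA size $m \to \infty$, any family of 1QFAC recognizing the $L$'s with a common isolation $\epsilon$ must have $\max(k,n) = \Omega(\log m)$," but given the conventions already set in \totalref{Theorem}{1QFAC_size} (where $\epsilon$ is similarly fixed and the $O(n)$ hides an $\epsilon$-dependent constant), the short form in the corollary is the natural reading. I would therefore present the proof as: take logarithms in the bound of \totalref{Theorem}{1QFAC_size}, split into the two cases above, and conclude.
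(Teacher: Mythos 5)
Your derivation is correct and matches the paper's (implicit) intent: the corollary is stated without proof as an immediate consequence of the explicit bound $m \le k(1+\frac{2}{\epsilon})^{2n}$ from Theorem~\ref{1QFAC_size}, and your take-logarithms-and-split-into-cases argument is exactly the standard way to extract $\max(k,n)=\Omega(\log m)$ from it. Your remarks on treating $\epsilon$ as a fixed constant are consistent with the paper's conventions.
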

\begin{corollary}
	If $k=O(\sqrt{m})$, then $n=\Omega(\log m)$.
\end{corollary}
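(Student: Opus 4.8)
The plan is to read off from the proof of Theorem~\ref{1QFAC_size} the explicit inequality it actually establishes, namely $m \le k\left(1+\frac{2}{\epsilon}\right)^{2n}$, rather than only its opaque asymptotic form $m=k2^{O(n)}$. Since the cut‑point $\lambda$ is isolated by a fixed $\epsilon>0$, the quantity $c:=\left(1+\frac{2}{\epsilon}\right)^{2}$ is a constant with $c>1$, and the inequality becomes $m \le k\,c^{\,n}$. This is the only ingredient needed; unlike the first corollary, the statement here cannot be obtained from $\max(k,n)=\Omega(\log m)$ alone, because a priori $k$ could be as large as $\Theta(\sqrt m)$, which dwarfs $\log m$. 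The point of the hypothesis $k=O(\sqrt m)$ is precisely to prevent the classical states from ``carrying'' the bound, thereby forcing the quantum part to be logarithmically large.

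Next I would substitute the hypothesis. Write $k \le C\sqrt{m}$ for some constant $C>0$ and all sufficiently large $m$. Plugging this into $m \le k\,c^{\,n}$ gives $m \le C\sqrt{m}\,c^{\,n}$, and dividing both sides by $\sqrt{m}>0$ yields $\sqrt{m} \le C\,c^{\,n}$. Taking logarithms then produces $\tfrac12\log m \le \log C + n\log c$, hence $n \ge \dfrac{\log m - 2\log C}{2\log c}$.

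Since $\log c>0$ is a constant independent of $m$, $n$, and $k$, the right‑hand side is $\Omega(\log m)$, which is the desired conclusion. There is no serious obstacle in this argument: the substantive work was already carried out in Lemma~\ref{number_of_class} and Theorem~\ref{1QFAC_size}. The only points requiring care are (i) using the quantitative inequality implicit in the proof of Theorem~\ref{1QFAC_size} instead of the $2^{O(n)}$ shorthand, and (ii) keeping the asymptotic bookkeeping honest — in particular noting that $\epsilon$ is a fixed parameter of the recognition, so $\log\!\left(1+\frac{2}{\epsilon}\right)$ is a genuine positive constant, and that the hypothesis $k=O(\sqrt m)$ only holds for large $m$, so the conclusion is likewise asymptotic.
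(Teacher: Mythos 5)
Your proof is correct and follows exactly the intended route: the paper leaves this corollary unproved, but it is meant to follow from the quantitative bound $m \le k\left(1+\frac{2}{\epsilon}\right)^{2n}$ established in the proof of Theorem~\ref{1QFAC_size}, and your substitution of $k \le C\sqrt{m}$ followed by taking logarithms is the standard derivation. Your remark that the hypothesis $k=O(\sqrt{m})$ is precisely what prevents the classical states from carrying the bound is an accurate reading of the corollary's purpose.
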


	In other words, if a 1QFAC has fewer classical states than the minimal DFA for recognizing some regular language $L$ in some degree ($k=O(\sqrt{m})$), then it needs at least $n=\Omega(\log m)$ quantum basis states.

\subsection{Upper bound on reducing quantum basis states by adding classical states}
	1QFAC is a model that integrates quantum states and classical states. Naturally, it is attractive to study the   trade-offs between quantum basis states and classical states. We focus on two cases: (1) How many classical states are needed if its quantum basis states of a 1QFAC are reduced without changing its recognition ability? (2) How many quantum basis states are needed if its classical states of a 1QFAC are cut down?

	In this section, we give an upper bound showing that how many classical states are needed if the quantum basis states of a 1QFAC are reduced without changing its recognition ability. In particular, we prove the bound is tight, that is, it is attainable. However, the proof of the bound being tight is complicated, so we need to define some concepts and give a number of results in advance.

First we define the following concepts.

\begin{itemize}
\item \emph{Unary DFA} $\mathcal{A}=(S, \Sigma, s_0, \delta, F)$: the alphabet of $\mathcal{A}$ contains only one element. We define its alphabet as $\{0\}$ usually, and the following notations are related to unary DFA $\mathcal{A}=(S, \Sigma, s_0, \delta, F)$.
\item \emph{s cycle}: $s$ is a state of $\mathcal{A}$ and $\exists x\in \Sigma^*$ such that $s_x=s$ (i.e. $s$ is reachable). it refers to the process of $\mathcal{A}$ reading several 0s from $s$ back to $s$.
\item \emph{Minimal s cycle}: the process of $\mathcal{A}$ reading several 0s from $s$ back to $s$ for the first time.
\item \emph{The length of an s cycle}: the number of 0s read by $\mathcal{A}$ during the $s$ cycle. Similarly, we can define ``\emph{the length of the minimal $s$ cycle}".
\item \emph{The length of the minimal cycle}: it is clear that the length of the minimal $s$ cycle are the same for any legal $s$, so we call them to ``\emph{the length of the minimal cycle}".
\end{itemize}

\begin{remark}
	Though the classical part of a unary 1QFAC is not a unary DFA, we can still use the above notations for the classical part of the 1QFAC if we only consider the classical states and their transitions.
\end{remark}

Now we present  the following theorem.

\begin{theorem}\label{reduce_quantum}
Let $L$ be a regular language that is recognized by a 1QFAC with cut-point $\lambda$ isolated by $\epsilon$ and with k classical states and n quantum basis states. Let $k'$ be the minimal number of classical states among all 1QFAC which recognizes $L$ with cut-point $\lambda$ isolated by $\epsilon$ and with fewer than $n$ quantum basis states. Then $k'=k2^{O(n)}$. In addition, the bound is tight.
\end{theorem}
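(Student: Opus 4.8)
The plan is to split the theorem into two parts: the upper bound $k' = k2^{O(n)}$, and the tightness (attainability) of this bound. For the upper bound, the key observation is that if a 1QFAC recognizes $L$ with cut-point $\lambda$ isolated by $\epsilon$ using $k$ classical states and $n$ quantum basis states, then by \totalref{Theorem}{1QFAC_size} the minimal DFA for $L$ has $m = k2^{O(n)}$ states. On the other hand, the minimal DFA itself is a 1QFAC with $m$ classical states and $1$ quantum basis state (one trivially simulates a DFA by a 1QFAC whose quantum part is a single state carrying an accept/reject distinction via the classical measurement, or more directly by letting the quantum state be constant and encoding acceptance in the classical state through the measurement $\mathcal{M}_{s}$). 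Since $1 < n$ (assuming $n \geq 2$; the case $n=1$ is degenerate since there is no 1QFAC with fewer than $1$ quantum basis state), this DFA-as-1QFAC witnesses a recognizer of $L$ with fewer than $n$ quantum basis states, so $k' \leq m = k2^{O(n)}$. This gives the upper bound with essentially no extra work beyond what is already proved.

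For tightness, I would exhibit, for infinitely many parameter choices, a regular language $L$ and a 1QFAC recognizing $L$ with $k$ classical states and $n$ quantum basis states such that every 1QFAC recognizing $L$ with isolated cut-point and fewer than $n$ quantum basis states requires $\Omega(k2^{cn})$ classical states for some constant $c > 0$. The natural candidate — and the reason the remark about unary DFA, minimal cycles, and cycle lengths was set up — is a unary language built around a number-theoretic construction: something like $L = \{0^j : j \bmod N \in R\}$ for a suitable modulus $N$ that is a product of the first several primes (or prime powers), exploiting the fact that an MO-1QFA on $n$ quantum basis states can "track position modulo $N$" for $N$ as large as $2^{\Theta(n)}$ by rotating on a tensor product of small cyclic blocks, whereas to detect divisibility by such an $N$ with few quantum basis states forces the classical part to have a long cycle, hence many classical states. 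Concretely, a 1QFAC with $k$ classical states arranged in a cycle and $n$ quantum basis states can recognize arithmetic conditions modulo roughly $k \cdot 2^{\Theta(n)}$; reducing $n$ by even one basis state removes a factor from the quantum modulus that must then be absorbed into the classical cycle length, forcing $k' = k2^{\Omega(n)}$.

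The main obstacle will be the tightness direction: one must (i) carefully design the language and the optimal 1QFAC so that the quantum part genuinely contributes a $2^{\Theta(n)}$ factor to the "effective modulus," using the standard construction where $U$ acts as independent rotations by angles $2\pi/p_i$ on orthogonal $2$-dimensional subspaces (so $n \approx 2r$ basis states give modulus $p_1 p_2 \cdots p_r = 2^{\Theta(n)}$ by the prime number theorem), and (ii) prove the matching lower bound — that any 1QFAC with fewer than $n$ quantum basis states recognizing this $L$ must have many classical states. Part (ii) is where \totalref{Theorem}{1QFAC_size} or a refinement of \totalref{Lemma}{number_of_class} re-enters: since the minimal DFA for this $L$ has $m = k2^{\Theta(n)}$ states (the modulus), any 1QFAC recognizing $L$ with $n' < n$ quantum basis states satisfies $m \leq k' (1 + 2/\epsilon)^{2n'}$, and one argues that the drop from $n$ to $n' \leq n-1$ in the exponent cannot be compensated except by a $2^{\Omega(n)}$ blow-up in $k'$ — provided the construction is chosen so that the quantum modulus is genuinely "squeezed" near the information-theoretic ceiling $(1+2/\epsilon)^{2n}$ rather than sitting far below it. Making these constants line up so that the $O(\cdot)$ in the lower bound matches the $O(\cdot)$ in the upper bound is the delicate part, and is presumably why the authors flag the proof as "complicated" and defer the supporting machinery on unary DFA cycles.
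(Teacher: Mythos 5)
Your upper bound argument is correct and is essentially the paper's: Theorem \ref{1QFAC_size} yields a DFA for $L$ with $k2^{O(n)}$ states, which is itself a 1QFAC with one quantum basis state, hence $k'\leq k2^{O(n)}$. (Both you and the paper implicitly use that exact recognition by a DFA counts as recognition with cut-point $\lambda$ isolated by $\epsilon$; this is harmless.)

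The tightness direction, however, has a genuine gap, and you have put your finger on it yourself. You choose the right family of witnesses --- unary modular languages whose modulus factors into a ``classical'' part and a ``quantum'' part, exactly as the paper does with $L(pq)=\{0^{zpq}:z\in\mathbb{N}\}$, where the large prime $p$ is tracked classically and the prime $q$ is tracked by an MO-1QFA on $n=\Theta(\log q)$ basis states --- but your proposed lower-bound mechanism is the counting inequality $m\leq k'(1+\frac{2}{\epsilon})^{2n'}$ from Theorem \ref{1QFAC_size}. With $m=pq=kq$ this gives only $k'\geq kq/(1+\frac{2}{\epsilon})^{2n'}$, which is $k2^{\Omega(n)}$ only if the achievable quantum modulus $q=2^{cn}$ has its constant $c$ exceeding $2\log_2(1+\frac{2}{\epsilon})$; since the counting bound is loose (it counts $\epsilon$-separated equivalence classes, not the modulus an MO-1QFA can actually realize), no known construction puts $q$ near that ceiling, and the argument does not close --- as you concede. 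The missing idea is to make the lower bound \emph{relative} rather than absolute: take $n$ to be, by definition, the minimal number of quantum basis states of any MO-1QFA recognizing $L(q)$ with cut-point $\lambda$ isolated by $\epsilon$, and then show that any 1QFAC for $L(pq)$ with $n'<n$ quantum basis states and fewer than $pq$ classical states can be converted into an MO-1QFA for $L(q)$ with $n'$ basis states, a contradiction that requires no constants to match. The conversion is where the unary-cycle machinery earns its keep: via the derived DFA one shows the classical cycle length $l_{\mathcal{A}}\leq k'<pq$ is divisible by $p$ but not by $q$, restricts attention to strings $w0^{zl_{\mathcal{A}}}$ along that cycle, and reads off an MO-1QFA with single-step unitary $U_{s_w,0^{l_{\mathcal{A}}}}$ that decides divisibility by $q$. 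Without this reduction (or an equivalent one), the tightness claim is not established.
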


For giving  the proof (especially the  the bound being tight), we need a number of lemmas and propositions.

\begin{lemma}\label{unary_DFA}
	Given a unary DFA $\mathcal{A}=(S, \Sigma, s_0, \delta, F)$ which recognizes $L$, if it reads 0s infinitely, then it holds that
\begin{enumerate}[(I)]
\item The length of an $s$ cycle of $\mathcal{A}$ is divisible by the length of the minimal cycle of $\mathcal A$, where $s\in S$.
\item If $L=L(d)=\{0^{zd}:z=0,1,\dots\}$ where $d\in \mathbb{Z}^+$ is a constant, then the length of the minimal cycle of $\mathcal A$ is divisible by $d$.
\end{enumerate}
\end{lemma}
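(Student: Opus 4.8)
The plan is to first pin down the ``$\rho$-shape'' structure of a unary DFA and only then read off the two divisibility claims. Since $S$ is finite, the sequence of states $s_{e_\epsilon}, s_{0}, s_{00},\dots$ visited while reading $0$s (here $s_{0^i}=\delta(s_0,0^i)$ in the paper's notation) must eventually repeat, so there are integers $t\ge 0$ (a tail length) and $\ell\ge 1$ (a cycle length) with $s_{e_\epsilon},s_0,\dots,s_{0^{t+\ell-1}}$ pairwise distinct and $s_{0^{i+\ell}}=s_{0^i}$ for all $i\ge t$. A reachable state $s$ admits an $s$ cycle exactly when it lies on this cycle, i.e. $s\in C:=\{s_{0^{t}},\dots,s_{0^{t+\ell-1}}\}$, since a tail state is visited at most once on any run; and on $C$ the map $f:=\delta(\cdot,0)|_C$ is the $\ell$-cycle permutation sending $s_{0^{t+i}}$ to $s_{0^{t+((i+1)\bmod \ell)}}$. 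I would state this as a short structural observation (no minimality of $\mathcal{A}$ is assumed, so unreachable states and a nontrivial tail are allowed, which is precisely why one must restrict to $C$).

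For part (I), reading $m\ge 1$ consecutive $0$s starting from $s\in C$ returns $\mathcal{A}$ to $s$ iff $f^{m}(s)=s$. Since $f$ is an $\ell$-cycle, $f^{m}$ is a product of $\gcd(m,\ell)$ cycles each of length $\ell/\gcd(m,\ell)$, so $f^{m}$ has a fixed point iff $\ell\mid m$, in which case $f^m=\mathrm{id}$. Hence the length of any $s$ cycle is a positive multiple of $\ell$, and the minimal one equals $\ell$ for every legal $s$. This simultaneously justifies that ``the length of the minimal cycle'' is well defined and equals $\ell$, and shows that the length of every $s$ cycle is divisible by it.

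For part (II), assume $L=L(d)$. Choose an integer $n_0\ge t$ with $d\mid n_0$, e.g. $n_0=d\lceil t/d\rceil$. Then $0^{n_0}\in L$; since $n_0\ge t$ and $n_0+\ell\ge t$ with $n_0+\ell\equiv n_0 \pmod{\ell}$, the runs on $0^{n_0}$ and $0^{n_0+\ell}$ end in the same state of $C$, so $0^{n_0+\ell}\in L$ as well, i.e. $d\mid n_0+\ell$. Together with $d\mid n_0$ this forces $d\mid \ell$, which by part (I) is exactly ``$d$ divides the length of the minimal cycle.''

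The only genuinely delicate step is the first one: getting the $\rho$-structure stated cleanly, making sure an ``$s$ cycle'' is meaningful precisely for $s\in C$, and that the per-state minimal cycle length is the common value $\ell$. After that, parts (I) and (II) are just the permutation computation with $f$ and an elementary residue argument modulo $\ell$ and $d$, respectively.
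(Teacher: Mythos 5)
Your proof is correct and follows essentially the same route as the paper: part (II) is the same residue argument (pick a string of $L$ whose state lies on the cycle, observe that appending one full cycle of $0$s keeps it in $L$, and conclude $d$ divides the cycle length), while for part (I) you simply spell out the $\rho$-shape and permutation facts that the paper dismisses as evident. The extra care in verifying that ``$s$ cycle'' is meaningful only for states on the cycle, and that the chosen witness string indeed reaches a cycle state, is a worthwhile tightening but not a different argument.
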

\begin{proof}
	(I) is evident. We prove (II). Consider the minimal $s_w$ cycle of $\mathcal A$, where $w\in L$. Suppose the length of the minimal $s_w$ cycle is $n$, then we have $w0^n\in L$. Since $d||w|$ and $d||w0^n|$, we obtain $d|n$, where $d|n$ means that $n$ is divisible by $d$.
\end{proof}

	Let $L$ be a regular language and 1QFAC $\mathcal{A}=(S, Q, \Sigma,\Gamma, s_0, |\psi_0 \rangle, \delta , \mathbb{U}, \mathcal{M} )$. We define ``$\equiv_{L,s}$" as: $\forall x,y\in \Sigma^*$, $x\equiv_{L, s}$ iff $x\equiv_L y$ and $s_x=s_y$.

	$\Sigma^*$ can be partitioned into several equivalence classes by ``$\equiv_{L, s}$", and we denote by $[x]$ the equivalence class where the string $x$ is in. We regard these equivalence classes as states in DFA $\mathcal{D}=(S',\Sigma, s_0', \delta', F)$, and define $s_0'=[e_\epsilon]$, $\delta'([x],\sigma)=[x\sigma]$, $F=\{[w]: w\in L\}$. We call DFA $\mathcal{D}$ is the derived DFA of $\mathcal{A}$ and $L$.

\begin{lemma}\label{cycle_length}
	Let unary 1QFAC $\mathcal{A}=(S, Q, \Sigma,\Gamma, s_0, |\psi_0 \rangle, \delta , \mathbb{U}, \mathcal{M} )$ with $n$ quantum basis states recognize $L$ with cut-point $\lambda$ isolated by $\epsilon$. Let DFA $\mathcal{D}$ be the derived DFA of $\mathcal{A}$ and $L$, let $l_{\mathcal{D}}$ and $l_{\mathcal{A}}$ be the length of the minimal cycle of $\mathcal{D}$ and the classical part of $\mathcal{A}$, respectively. Then $l_{\mathcal{D}}=l_{\mathcal{A}}l_0$, where $l_0\in \mathbb{Z}^+$ and $l_0\leq(1+\frac{2}{\epsilon})^{2n}$.
\end{lemma}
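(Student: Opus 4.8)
The plan is to show two divisibilities: $l_{\mathcal{A}} \mid l_{\mathcal{D}}$, and $l_{\mathcal{D}}/l_{\mathcal{A}}$ is bounded by $(1+\tfrac{2}{\epsilon})^{2n}$. For the first, observe that the derived DFA $\mathcal{D}$ records more information than the classical part of $\mathcal{A}$: a state of $\mathcal{D}$ is an equivalence class under ``$\equiv_{L,s}$'', which by definition refines the partition by classical state $s_x$. Hence there is a natural projection from states of $\mathcal{D}$ onto the reachable classical states of $\mathcal{A}$ commuting with the transition on $0$. Reading $l_{\mathcal{D}}$ zeros returns $\mathcal{D}$ to a fixed (reachable) state, so it returns the classical part of $\mathcal{A}$ to a fixed classical state; thus $l_{\mathcal{D}}$ is the length of some classical cycle, and by part (I) of Lemma~\ref{unary_DFA} (applied to the classical part of $\mathcal{A}$, using the Remark) it is divisible by $l_{\mathcal{A}}$. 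Write $l_{\mathcal{D}} = l_{\mathcal{A}} l_0$ with $l_0 \in \mathbb{Z}^+$.

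It remains to bound $l_0$. First I would fix a reachable state of $\mathcal{D}$, say $[0^{N}]$ for $N$ large enough that $0^{N}$ lies on the minimal $\mathcal{D}$-cycle, and observe that reading $l_{\mathcal{A}}$ zeros from $0^{N}$ returns the classical part of $\mathcal{A}$ to the same classical state $s := s_{0^{N}}$, but may change the quantum state. So define the vectors $|\psi_{0^{N}}\rangle, |\psi_{0^{N+l_{\mathcal{A}}}\rangle, |\psi_{0^{N+2l_{\mathcal{A}}}}\rangle, \dots$ — these are the quantum states after reading $N, N+l_{\mathcal{A}}, N+2l_{\mathcal{A}}, \dots$ zeros — all of which occur while the classical state equals $s$ at the corresponding checkpoint. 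The key claim is that among the vectors $|\psi_{0^{N + j l_{\mathcal{A}}}}\rangle$ for $j = 0, 1, \dots, l_0 - 1$, any two that are closer than $\epsilon$ in norm must in fact force the corresponding $\mathcal{D}$-states $[0^{N+jl_{\mathcal{A}}}]$ to coincide — but since these are $l_0$ distinct points on the minimal $\mathcal{D}$-cycle they are pairwise distinct, so the vectors are pairwise $\epsilon$-separated, and Lemma~\ref{number_of_elements} gives $l_0 \le (1+\tfrac{2}{\epsilon})^{2n}$.

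To justify the key claim, suppose $\| |\psi_{0^{N+il_{\mathcal{A}}}}\rangle - |\psi_{0^{N+jl_{\mathcal{A}}}}\rangle \| < \epsilon$ with $i \ne j$. Since $s_{0^{N+il_{\mathcal{A}}}} = s_{0^{N+jl_{\mathcal{A}}}} = s$, for every continuation $0^{t}$ the classical states $s_{0^{N+il_{\mathcal{A}}+t}}$ and $s_{0^{N+jl_{\mathcal{A}}+t}}$ agree and the unitary $U_{s,0^{t}}$ acting on each quantum state is the same; hence by Lemma~\ref{vector_distance} the accepting probabilities of $0^{N+il_{\mathcal{A}}+t}$ and $0^{N+jl_{\mathcal{A}}+t}$ differ by at most $2\epsilon' < 2\epsilon$ — wait, more carefully: $|{\rm Prob}_{acc}(0^{N+il_{\mathcal{A}}+t}) - {\rm Prob}_{acc}(0^{N+jl_{\mathcal{A}}+t})| \le 2\|U_{s,0^t}(|\psi_{0^{N+il_{\mathcal{A}}}}\rangle - |\psi_{0^{N+jl_{\mathcal{A}}}}\rangle)\| = 2\||\psi_{0^{N+il_{\mathcal{A}}}}\rangle - |\psi_{0^{N+jl_{\mathcal{A}}}}\rangle\| < 2\epsilon$. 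This is not quite the contradiction I want directly, so instead I would argue as in Lemma~\ref{number_of_class}: if $0^{N+il_{\mathcal{A}}} \not\equiv_L 0^{N+jl_{\mathcal{A}}}$, there is a witness $z$ with the accepting probabilities on opposite sides of the $\epsilon$-isolated cut-point, forcing $\| |\psi_{0^{N+il_{\mathcal{A}}}}\rangle - |\psi_{0^{N+jl_{\mathcal{A}}}}\rangle \| \ge \epsilon$, contradiction; hence $0^{N+il_{\mathcal{A}}} \equiv_L 0^{N+jl_{\mathcal{A}}}$, and together with equal classical states this gives $[0^{N+il_{\mathcal{A}}}] = [0^{N+jl_{\mathcal{A}}}]$, as claimed. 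The main obstacle is bookkeeping: making sure the ``checkpoint'' indices $N + j l_{\mathcal{A}}$ all genuinely land on the minimal $\mathcal{D}$-cycle and are pairwise inequivalent as $\mathcal{D}$-states (this uses $l_{\mathcal{D}} = l_{\mathcal{A}} l_0$ and the fact that $\mathcal{D}$'s cycle structure, by Lemma~\ref{unary_DFA}(I), is genuinely periodic with period $l_{\mathcal{D}}$ on the reachable part), and that the unitary applied after a checkpoint depends only on $s$ and the number of subsequent letters, which is exactly the content of the $U_{s,x}$ notation in the unary case.
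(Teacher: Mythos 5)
Your proposal is correct and follows essentially the same route as the paper: divisibility $l_{\mathcal{A}}\mid l_{\mathcal{D}}$ via the projection from $\mathcal{D}$-states onto classical states together with Lemma~\ref{unary_DFA}(I), and then bounding $l_0$ by observing that the $l_0$ distinct $\mathcal{D}$-states on the minimal cycle sharing one classical state $s$ are pairwise $\not\equiv_L$, so their quantum states are pairwise $\epsilon$-separated and the packing bound of Lemma~\ref{number_of_elements} applies. The paper compresses your explicit separation argument into a single citation of Lemma~\ref{number_of_class} (which already packages ``inequivalent strings in $A_s$ have $\epsilon$-separated quantum states''), but the underlying argument is identical.
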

\begin{proof}
	Let $\mathcal{A}=(S, Q, \Sigma,\Gamma, s_0, |\psi_0 \rangle, \delta , \mathbb{U}, \mathcal{M} )$. Suppose $\mathcal{A}$ and $\mathcal{D}$ read 0s simultaneously. According to the definition of ``$\equiv_{L,s}$", if $\mathcal{D}$ finishes a minimal $[x]$ cycle, then the classical part of $\mathcal{A}$ finishes an $s_x$ cycle as well. Hence, by (I) of \totalref{Lemma}{unary_DFA}, we have $l_0\in \mathbb{Z}^+$, and with \totalref{Lemma}{number_of_class}, we obtain $l_0\leq(1+\frac{2}{\epsilon})^{2n}$.
\end{proof}

	The following lemma gives the operation properties of 1QFAC and MO-1QFA.

\begin{lemma}[\cite{QIU20151QFAC}]\label{1QFAC_construction}
	Let $L_1$ and $L_2$ be regular languages over a finite alphabet $\Sigma$. $L_1$ is recognized by a minimal DFA with $k$ states and $L_2$ is recognized by an MO-1QFA with cut-point $\lambda$ isolated by $\epsilon$ and with $n$ quantum basis states.  Then $L_1\cap L_2, L_1\cup L_2, L_1\backslash L_2$ and $L_2\backslash L_1 $ can be recognized by some 1QFAC with cut-point $\lambda$ isolated by $\epsilon$ and with $k$ classical states and $n$ quantum basis states.
\end{lemma}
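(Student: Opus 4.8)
The plan is to build a single 1QFAC whose classical part is a copy of the minimal DFA for $L_1$ and whose quantum part is a copy of the MO-1QFA for $L_2$, and then to let the final measurement depend on the classical state so as to realize any of the four Boolean combinations. Concretely, let $\mathcal{A}_1=(S,\Sigma,s_0,\delta,F)$ be the minimal DFA for $L_1$ with $|S|=k$, and let $\mathcal{A}_2=(Q,\Sigma,|\psi_0\rangle,\{U_\sigma\}_{\sigma\in\Sigma},Q_{acc})$ be the MO-1QFA for $L_2$ with $|Q|=n$ and accepting projector $P_{acc}=\sum_{a\in Q_{acc}}|a\rangle\langle a|$. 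I would define $\mathcal{A}=(S,Q,\Sigma,\Gamma,s_0,|\psi_0\rangle,\delta,\mathbb{U},\mathcal{M})$ by keeping $S,s_0,\delta$ exactly as in $\mathcal{A}_1$, so that the classical state after reading $x$ is $s_x=\delta(s_0,x)$ with $s_x\in F$ iff $x\in L_1$, and by setting $U_{s\sigma}:=U_\sigma$ for every $s\in S$ and $\sigma\in\Sigma$. This keeps the classical state count at $k$ and the quantum basis state count at $n$, as required.

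The first point to verify is that making $U_{s\sigma}$ independent of $s$ decouples the quantum evolution from the classical trajectory: the final quantum state of $\mathcal{A}$ on $x=\sigma_1\cdots\sigma_n$ is $|\psi_x\rangle=U_{\sigma_n}\cdots U_{\sigma_1}|\psi_0\rangle$, exactly the final state of $\mathcal{A}_2$, so $\|P_{acc}|\psi_x\rangle\|^2\ge\lambda+\epsilon$ when $x\in L_2$ and $\le\lambda-\epsilon$ when $x\notin L_2$. I would then realize each operation purely through $\mathcal{M}=\{\mathcal{M}_s\}_{s\in S}$, using only the freedom to choose the accept projector of $\mathcal{M}_s$ among $I$ (always accept), $O$ (always reject), and $P_{acc}$ (follow $L_2$). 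For $L_1\cap L_2$ take $P_{acc}$ when $s\in F$ and $O$ otherwise; for $L_1\cup L_2$ take $I$ when $s\in F$ and $P_{acc}$ otherwise; for $L_2\setminus L_1$ take $P_{acc}$ when $s\notin F$ and $O$ otherwise. In each case ${\rm Prob}_{\mathcal{A},acc}(x)=\|P_{s_x,\text{acc}}|\psi_x\rangle\|^2$ equals either a constant ($1$ or $0$) or $\|P_{acc}|\psi_x\rangle\|^2$, and a short case split — using $\lambda+\epsilon\le 1$ and $\lambda-\epsilon\ge 0$ to see that the constants $1$ and $0$ clear the two thresholds (the degenerate cases $L_2=\emptyset$ or $L_2=\Sigma^*$ being immediate) — shows that the target language is recognized with cut-point $\lambda$ isolated by $\epsilon$.

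The remaining case $L_1\setminus L_2=L_1\cap\overline{L_2}$ is the one I expect to be the main obstacle, since it forces the accept projector $I-P_{acc}$ on the states $s\in F$, and complementing the quantum outcome shifts the threshold: $\|(I-P_{acc})|\psi_x\rangle\|^2=1-\|P_{acc}|\psi_x\rangle\|^2$ is $\ge(1-\lambda)+\epsilon$ when $x\notin L_2$ and $\le(1-\lambda)-\epsilon$ when $x\in L_2$, so the direct construction yields the isolated cut-point $1-\lambda$ rather than $\lambda$. I would handle this by observing that $\overline{L_2}$ is recognized by the MO-1QFA obtained from $\mathcal{A}_2$ by replacing $Q_{acc}$ with $Q\setminus Q_{acc}$, again with $n$ quantum basis states and with cut-point $1-\lambda$ isolated by the same $\epsilon$ — the key being that $p\mapsto 1-p$ preserves the gap width, so isolation $\epsilon$ is maintained. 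Applying the already-verified intersection construction to $L_1$ and $\overline{L_2}$ then recognizes $L_1\setminus L_2$ with an isolated cut-point and isolation $\epsilon$ using $k$ classical and $n$ quantum basis states, completing all four cases. The one subtlety worth stating explicitly in the write-up is exactly this interplay between complementation and the cut-point value, which is the only place where the parameter $\lambda$ (as opposed to the isolation $\epsilon$ and the state counts) is affected.
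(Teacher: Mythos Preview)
The paper does not supply its own proof of this lemma; it is quoted verbatim from \cite{QIU20151QFAC} and used as a black box (in fact only the intersection case is invoked, in Propositions~1 and~3). So there is no ``paper's proof'' to compare against here.

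Your construction is the standard one and is correct for $L_1\cap L_2$, $L_1\cup L_2$, and $L_2\setminus L_1$: running the DFA and the MO-1QFA in parallel with $U_{s\sigma}=U_\sigma$ and choosing the accept projector at each classical state among $\{O,I,P_{acc}\}$ does the job, and your case analysis is clean. You are also right to flag $L_1\setminus L_2$ as the delicate case and to observe that complementing the quantum outcome shifts the cut-point to $1-\lambda$ while preserving the isolation $\epsilon$ and the state counts. Strictly speaking, this means your argument establishes the lemma as stated only for three of the four operations; for $L_1\setminus L_2$ you obtain recognition with cut-point $1-\lambda$ isolated by $\epsilon$, not cut-point $\lambda$. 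That is almost certainly what the cited source actually proves as well (or else the statement there is phrased with ``some isolated cut-point''), and it is entirely adequate for every use the present paper makes of the lemma. If you want to match the lemma \emph{literally}, you would need an additional device to move the cut-point back to $\lambda$ without changing $n$, which in general is not available; it is more honest to note, as you do, that the specific value of the cut-point changes under complementation.
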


	Consider language $L(pq)=L(p)\cap L(q)=\{0^{zp}: z=0,1,2,\dots \}\cap \{0^{zq}: z=0,1,2,\dots \}=\{0^{zpq}: z=0,1,2,\dots\}$, where $p,q$ are prime numbers. The minimal DFA recognizing $L(p)$ has $p$ states. Given cut-point $\lambda$ and isolation radius $\epsilon$ satisfying $\lambda-\epsilon>0$, then there exists an MO-1QFA recognizing $L(q)$ with cut-point $\lambda$ isolated by $\epsilon$ and with $n$ quantum basis states, where $n$ is the minimal number of quantum basis states among all MO-1QFA which recognizes $L(q)$ with cut-point $\lambda$ isolated by $\epsilon$. Actually $n=\Theta(\log q)$, since $n=\Omega(\log q)$ and there exists an MO-1QFA with $\Theta(\log q)$ quantum basis states recognizing $L(q)$ with cut-point $\lambda$ isolated by $\epsilon$ \cite{strengths}. In addition, we restrict that $p>(1+\frac{2}{\epsilon})^{2n}$. By \totalref{Lemma}{1QFAC_construction}, we know that there exists a 1QFAC with $p$ classical states and $n$ quantum basis states recognizing $L(pq)$ with cut-point $\lambda$ isolated by $\epsilon$. The following proposition gives how many classical states are needed for a 1QFAC recognizing $L(pq)$ with fewer than $n$ quantum basis states and with cut-point $\lambda$ isolated by $\epsilon$.

\begin{proposition}\label{amount_of_classical_states}
	$p,q,n,\lambda,\epsilon ,  L(pq)$ are given above and let 1QFAC $\mathcal{A}$ recognize $L(pq)$ with cut-point $\lambda$ isolated by $\epsilon$ and with $k'$ classical states and $n'$ quantum basis states, where $n'<n$. Then $k'\geq pq=p2^{\Theta(n)}$.
\end{proposition}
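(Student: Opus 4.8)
The plan is to reduce the whole question to the length $l_{\mathcal{A}}$ of the minimal cycle of the classical part of $\mathcal{A}$. Since the alphabet is $\{0\}$, the classical part is a unary automaton that eventually runs into a unique cycle, and that cycle visits $l_{\mathcal{A}}$ distinct classical states, so $k'\ge l_{\mathcal{A}}$. Hence it suffices to prove $pq\mid l_{\mathcal{A}}$: then $k'\ge l_{\mathcal{A}}\ge pq$, and $pq=p\cdot q=p\cdot 2^{\Theta(n)}$ because $n=\Theta(\log q)$.

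To obtain the divisibility, I would first form the derived DFA $\mathcal{D}$ of $\mathcal{A}$ and $L(pq)$. It recognizes $L(pq)$ and reads $0$'s forever, so part (II) of \totalref{Lemma}{unary_DFA} gives $pq\mid l_{\mathcal{D}}$, while \totalref{Lemma}{cycle_length} gives $l_{\mathcal{D}}=l_{\mathcal{A}}l_0$ with $1\le l_0\le(1+\tfrac{2}{\epsilon})^{2n'}$. Since $n'<n$ and $p>(1+\tfrac{2}{\epsilon})^{2n}$ we get $l_0<p$; combined with $pq\mid l_{\mathcal{A}}l_0$ and the primality of $p,q$, this yields $p\mid l_{\mathcal{A}}$ (because $p\nmid l_0$) and also $q\mid l_{\mathcal{A}}$ or $q\mid l_0$. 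In the first sub-case $pq\mid l_{\mathcal{A}}$ and we are done; the crux is to rule out $q\nmid l_{\mathcal{A}}$ (equivalently $q\mid l_0$).

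So suppose $q\nmid l_{\mathcal{A}}$ and write $l_{\mathcal{A}}=pm$ with $\gcd(m,q)=1$. I would derive a contradiction by extracting from $\mathcal{A}$ an MO-1QFA $\mathcal{B}$ with only $n'<n$ quantum basis states that recognizes $L(q)$ with cut-point $\lambda$ isolated by $\epsilon$, which is impossible since $n$ is by definition the least number of quantum basis states of an MO-1QFA recognizing $L(q)$ with cut-point $\lambda$ isolated by $\epsilon$. Concretely: choose a classical state $\bar s$ on the cycle whose first-reaching length $a$ is a multiple of $p$ (one exists, since the $l_{\mathcal{A}}=pm$ consecutive reaching-lengths of the cycle states contain $m\ge 1$ multiples of $p$); let $V=U_{\bar s,0^{l_{\mathcal{A}}}}$ be the unitary that one complete classical cycle applies to the quantum register; and pick the offset $b\in\{0,\dots,q-1\}$ with $q\mid a/p+bm$. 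Take $\mathcal{B}$ with quantum basis states $Q$, initial state $|\psi_{0^{a+bl_{\mathcal{A}}}}\rangle$, single unitary $V$, and the accepting measurement induced by $\mathcal{M}_{\bar s}$ (after conjugating everything by a fixed unitary so that the accepting projector becomes diagonal in the basis — this changes neither the number of basis states nor any acceptance probability). On input $0^z$, $\mathcal{B}$ exactly reproduces the run of $\mathcal{A}$ on $0^{a+(b+z)l_{\mathcal{A}}}$, whose classical state is $\bar s$, so the accepting probabilities coincide; and a short computation using $a=pa'$, $l_{\mathcal{A}}=pm$, $q\mid a'+bm$ and $\gcd(m,q)=1$ shows $0^{a+(b+z)l_{\mathcal{A}}}\in L(pq)\iff q\mid z\iff 0^z\in L(q)$. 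Hence $\mathcal{B}$ recognizes $L(q)$ with cut-point $\lambda$ isolated by $\epsilon$ and $n'$ quantum basis states, a contradiction. Therefore $q\mid l_{\mathcal{A}}$, so $pq\mid l_{\mathcal{A}}\le k'$, i.e. $k'\ge pq=p\,2^{\Theta(n)}$.

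The step I expect to be the main obstacle is the third paragraph: making the passage from the hybrid automaton $\mathcal{A}$ to a bona fide MO-1QFA airtight — that once $\mathcal{A}$ is inside the classical cycle every block of $l_{\mathcal{A}}$ zeros acts by one and the same unitary on the quantum part; that the two offsets $a$ (a multiple of $p$) and $b$ can be chosen so the extracted automaton recognizes $L(q)$ exactly rather than a cyclic shift of it; and that the cut-point and the isolation radius are inherited verbatim. The number-theoretic bookkeeping ($p\mid l_{\mathcal{A}}$ and the coprimality manipulations) is routine once this scaffolding is in place.
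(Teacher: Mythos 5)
Your proposal is correct and follows essentially the same route as the paper: derive $pq\mid l_{\mathcal{A}}l_0$ from the derived DFA via Lemma \ref{unary_DFA}(II) and Lemma \ref{cycle_length}, use $l_0<p$ to get $p\mid l_{\mathcal{A}}$, and kill the case $q\nmid l_{\mathcal{A}}$ by extracting from one traversal of the classical cycle an MO-1QFA with $n'<n$ basis states recognizing $L(q)$, contradicting the minimality of $n$. The only differences are cosmetic: the paper argues by contradiction from $k'<pq$ (which forces $q\nmid l_{\mathcal{A}}$ immediately) and simplifies your offset bookkeeping by anchoring the extracted MO-1QFA at a string $w\in L$ lying on the cycle, so that $pq\mid|w|$ and no choice of $a,b$ is needed.
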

\begin{proof}
	Since $n=\Theta(\log q)$, we have $q=2^{\Theta(n)}$. Let $\mathcal{A}=(S, Q, \Sigma,\Gamma, s_0, |\psi_0 \rangle, \delta , \mathbb{U}, \mathcal{M} )$, where $\Sigma=\{0\}, |Q|=n', \mathcal{M}=\{\mathcal{M}_s\}_{s\in S}, \mathcal{M}_s=\{P_{s,\gamma}\}_{\gamma\in \Gamma},\mathbb{U}=\{U_{s\sigma}\}_{s\in S,\sigma \in \Sigma}$.  We prove it by contradiction. Assume $k'<pq$.
	
	Let DFA $\mathcal{D}$ be the derived DFA of $\mathcal{A}$ and $L$, and let $l_{\mathcal{D}}$ and $l_{\mathcal{A}}$ be the length of the minimal cycle of $\mathcal{D}$ and the classical part of $\mathcal{A}$, respectively. By \totalref{Lemma}{cycle_length}, we have $l_{\mathcal{D}}=l_{\mathcal{A}}l_0$, where $l_0 \in \mathbb{Z}^+$ and $l_0\leq (1+\frac{2}{\epsilon})^{2n'}<p$. By (II) of \totalref{Lemma }{unary_DFA}, we have $pq|l_{\mathcal{D}}$, that is, $pq|l_{\mathcal{A}}l_0$. Considering $l_{\mathcal{A}}\leq k'<pq$ and $l_0<p$, we get $p|l_{\mathcal{A}}, q\nmid l_{\mathcal{A}}$ and $q|l_0$.

	There exists $w\in L$ such that $\delta(s_w,0^{l_{\mathcal{A}}})=s_w$, since $l_{\mathcal{A}}$ is the length of the minimal cycle of the classical part of $\mathcal{A}$. Let $x_1=w0^{zl_{\mathcal{A}}}$. Then it holds that $\forall z\in \mathbb{Z}^+,|\psi_{x_1}\rangle=(U_{s_w,0^{l_A}})^z|\psi_{w}\rangle$ and $q|z\Rightarrow pq|zl_A \Rightarrow {\rm{Prob}}_{\mathcal{A},acc}(x_1)\geq \lambda+\epsilon,q \nmid z\Rightarrow pq \nmid zl_A \Rightarrow {\rm{Prob}}_{\mathcal{A},acc}(x_1)\leq \lambda-\epsilon$, since $\mathcal{A}$ recognizes $L(pq)$ with cut-point $\lambda$ isolated by $\epsilon$.

	Hence, we can construct an MO-1QFA $\mathcal{A}'$ with $n'$ quantum basis states recognizing $L(q)$ with cut-point $\lambda$ isolated by $\epsilon$: $\mathcal{A}'=(Q', \Sigma, |\psi_0'\rangle, \{U'_{\sigma}\}_{\sigma \in \Sigma}, Q_{acc})$, where $\mathcal{H}(Q')=\mathcal{H}(Q), |\psi_0'\rangle=|\psi_w\rangle, U'_0=U_{s_w,0^{l_A}}, \sum_{a\in Q_{acc}}|a\rangle \langle a|=P_{s_w,acc}$. It can be easily seen that $q|z \Rightarrow {\rm{Prob}}_{\mathcal{A}',acc}(0^z)\geq \lambda+\epsilon$ and $q \nmid z \Rightarrow {\rm{Prob}}_{\mathcal{A}',acc}(0^z)\leq \lambda-\epsilon$.

	It leads to a contradiction since $n$ is the minimal number of quantum basis states among all MO-1QFA which recognize $L(q)$ with cut-point $\lambda$ isolated by $\epsilon$ and $n'<n$. Thus, $k'\geq pq=p2^{\Theta(n)}$.
\end{proof}

 Now we are ready for presenting the proof of Theorem \ref{reduce_quantum}.

 {\bf  The proof of Theorem \ref{reduce_quantum}}:

\begin{proof}
	By  \totalref{Theorem}{1QFAC_size}, we know that there exists a 1QFAC with $k2^{O(n)}$ classical states and $1$ quantum basis states recognizing $L$ with cut-point $\lambda$ isolated by $\epsilon$. So we get $k'=k2^{O(n)}$. In the worst case, $k'=k2^{\Theta(n)}$. Proposition 1 shows that the worst case is possible for some languages. Therefore the bound in \totalref{Theorem}{reduce_quantum} is tight.



\end{proof}

	On the other hand, if we construct a 1QFAC recognizing any given regular language with fewer classical states, how many quantum basis states do we need? It is a pity that sometimes we can not reduce any classical state of some 1QFAC, otherwise we can reduce the classical state number of any 1QFAC to $1$ for any regular language. However, a 1QFAC with only one classical state can be regarded as an MO-1QFA, but  MO-1QFA can not recognize all regular languages with isolated cut-point. Thus, we study the lower bound of the classical states of 1QFAC in next section.

\subsection{Lower bound of the classical states of 1QFAC}
	Studying the lower bound of the classical states of 1QFAC allows us to discover that how many classical states of 1QFAC are needed. We can compare with DFA. In this section, we give several results to determine or estimate the lower bound of the classical state number of 1QFAC.
First we give the following lemmas.

\begin{lemma}[\cite{characterizations}]\label{approximation_of_U}
	Let $U$ be a unitary matrix. Then $\forall \epsilon>0,\forall N\in \mathbb{Z}^+, \exists n\in \mathbb{Z}^+$ and $n>N$ such that $\|(I-U^n)x \|<\epsilon$ holds for any vector $x$ with $\|x\|\leq 1$.
\end{lemma}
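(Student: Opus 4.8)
The plan is to prove Lemma~\ref{approximation_of_U}, which asserts that the powers of any unitary matrix $U$ come arbitrarily close to the identity infinitely often, with a uniform bound over all unit vectors. First I would reduce the operator statement to a spectral one. Since $U$ is unitary on $\mathbb{C}^n$, it is diagonalizable in an orthonormal eigenbasis $|v_1\rangle,\dots,|v_n\rangle$ with eigenvalues $\mu_1,\dots,\mu_n$ all lying on the unit circle, say $\mu_j=e^{2\pi i\theta_j}$ with $\theta_j\in[0,1)$. Then $U^n|v_j\rangle=e^{2\pi i n\theta_j}|v_j\rangle$, so for any $x=\sum_j c_j|v_j\rangle$ with $\|x\|\le 1$ we have $\|(I-U^n)x\|^2=\sum_j|c_j|^2\,|1-e^{2\pi i n\theta_j}|^2\le \max_j|1-e^{2\pi i n\theta_j}|^2$. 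Hence it suffices to find arbitrarily large $n$ making $\max_j|1-e^{2\pi i n\theta_j}|$ as small as we like, i.e.\ making all the fractional parts $\{n\theta_j\}$ simultaneously close to $0$.

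Next I would invoke simultaneous Dirichlet approximation (the pigeonhole argument of Dirichlet's theorem, or equivalently minimality/recurrence of the rotation on the torus $\mathbb{T}^n$). Concretely: partition the torus $[0,1)^n$ into $m^n$ cubes of side $1/m$; among the $m^n+1$ points $(\{t\theta_1\},\dots,\{t\theta_n\})$ for $t=0,1,\dots,m^n$, two of them, say at indices $t_1<t_2$, lie in the same cube, so $q:=t_2-t_1$ satisfies $\|q\theta_j\|_{\mathbb{R}/\mathbb{Z}}<1/m$ for every $j$, where $1\le q\le m^n$. Then every multiple $n=rq$ also has $\|n\theta_j\|$ controlled—actually I only need one such $q$ with $\|q\theta_j\|$ small, and then $q,2q,3q,\dots$ gives infinitely many $n$ with the same bound, which handles the ``$n>N$'' requirement by taking $r$ large. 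Choosing $m$ large enough that $1/m$ translates into $|1-e^{2\pi i q\theta_j}|<\epsilon$ for all $j$ (using $|1-e^{2\pi i\alpha}|\le 2\pi\|\alpha\|_{\mathbb{R}/\mathbb{Z}}$) finishes the argument, since then $\|(I-U^n)x\|<\epsilon$ for $n\in\{q,2q,3q,\dots\}$, all of which can be taken greater than $N$.

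I expect the main obstacle to be purely expository rather than mathematical: making the two quantifier layers fit together cleanly—first fixing $\epsilon$ and $N$, then choosing the pigeonhole parameter $m$ from $\epsilon$ alone, extracting $q$, and only afterwards selecting $r$ large enough that $n=rq>N$. One has to be careful that the bound $\|(I-U^n)x\|<\epsilon$ survives passing from $q$ to its multiples; this is immediate because $\|rq\,\theta_j\|_{\mathbb{R}/\mathbb{Z}}\le r\|q\theta_j\|_{\mathbb{R}/\mathbb{Z}}$ need \emph{not} hold in general, so instead I would note that $e^{2\pi i rq\theta_j}=(e^{2\pi i q\theta_j})^r$ and that if $w$ lies within $\epsilon$ of $1$ on the unit circle this does not force $w^r$ to—so the cleaner route is to fix $m$ first so that $1/m$ is small, get a single good $q$, and then observe that the sequence $q,2q,3q,\dots$ works only if we instead re-run Dirichlet with the target $N$ already in hand. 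The tidiest fix: apply the pigeonhole argument to the points $t=0,1,\dots,Nm^n$ in steps that are multiples of... — more simply, apply Dirichlet to get $q_0$ with $\|q_0\theta_j\|<1/(m)$, then among $q_0, 2q_0,\dots$ pick any term exceeding $N$; but to keep the bound one applies the box argument directly to indices $\{0,1,\ldots,M\}$ with $M$ huge, guaranteeing a gap $q$ with $N<q\le M$ and $\|q\theta_j\|<1/m$ provided $M/m^n$ is large enough by a second pigeonhole on disjoint blocks of length $N$. I would write it in whichever of these forms is shortest, citing \cite{characterizations} for the original statement, and present the torus/Dirichlet argument as the self-contained proof.
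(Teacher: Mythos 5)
Your argument is essentially correct, and it is worth noting at the outset that the paper itself supplies no proof of this lemma: it is imported verbatim from \cite{characterizations}, so any self-contained argument already goes beyond what the paper does. Your spectral reduction is sound: diagonalizing $U$ with eigenvalues $e^{2\pi i\theta_j}$ gives $\|(I-U^t)x\|^2=\sum_j|c_j|^2\,|1-e^{2\pi i t\theta_j}|^2\le\max_j|1-e^{2\pi i t\theta_j}|^2$ for $\|x\|\le 1$, and $|1-e^{2\pi i\alpha}|\le 2\pi\|\alpha\|_{\mathbb{R}/\mathbb{Z}}$ reduces everything to simultaneous Diophantine approximation. The pigeonhole step is fine, and your final fix for the requirement $t>N$ is the right idea; I would just commit to one clean version of it, e.g.\ apply the box argument to the $m^d+1$ points $(\{t\theta_1\},\dots,\{t\theta_d\})$ with $t$ ranging over $0,\,N+1,\,2(N+1),\dots,m^d(N+1)$ (here $d$ is the dimension of $U$, to avoid your clash of using $n$ for both the dimension and the exponent); two of them land in the same cube of side $1/m$, their index difference $q$ is a positive multiple of $N+1$ and hence exceeds $N$, and $\|q\theta_j\|_{\mathbb{R}/\mathbb{Z}}<1/m$ for all $j$. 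One correction: the inequality $\|r\alpha\|_{\mathbb{R}/\mathbb{Z}}\le r\|\alpha\|_{\mathbb{R}/\mathbb{Z}}$, which you say ``need not hold,'' is in fact always true (if $|\alpha-k|=\|\alpha\|_{\mathbb{R}/\mathbb{Z}}$ with $k\in\mathbb{Z}$, then $rk$ is an integer within $r\|\alpha\|_{\mathbb{R}/\mathbb{Z}}$ of $r\alpha$); the real reason the ``take multiples of $q$'' route fails is that this bound degrades linearly in $r$ and is useless for large $r$, so your decision to abandon it is correct even though the stated reason is not. For comparison, the proof in \cite{characterizations} and the standard route avoid diagonalization entirely: the powers $U^t$ lie in the compact unitary group, so a finite $\epsilon/2$-net plus the pigeonhole applied to $U^0,U^{N+1},U^{2(N+1)},\dots$ yields $t_1<t_2$ with $\|U^{t_1}-U^{t_2}\|<\epsilon$ in operator norm, and unitary invariance gives $\|I-U^{t_2-t_1}\|=\|U^{t_1}-U^{t_2}\|<\epsilon$ with $t_2-t_1>N$. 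That version is shorter and needs no spectral theorem; yours is more explicit and yields an effective bound $t\le m^d(N+1)$ on the exponent, which the compactness argument does not provide.
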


\begin{lemma}\label{x_equiv_y}
	Let 1QFAC $\mathcal{A}=(S, Q, \Sigma,\Gamma, s_0, |\psi_0 \rangle, \delta , \mathbb{U}, \mathcal{M} )$ recognize language $L$ with cut-point $\lambda$ isolated by $\epsilon$. $\forall x,y\in \Sigma^*$, if $s_x=s_y$ and $\| |\psi_x\rangle-|\psi_y\rangle \|<\epsilon$, then $x\equiv_{L} y$.
\end{lemma}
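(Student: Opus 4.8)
The plan is to argue by contradiction. Suppose $x\not\equiv_L y$ while $s_x=s_y$ and $\||\psi_x\rangle-|\psi_y\rangle\|<\epsilon$. By the definition of $\equiv_L$ there is a string $z\in\Sigma^*$ distinguishing $x$ from $y$; without loss of generality assume $xz\in L$ and $yz\notin L$.

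First I would observe that reading $z$ after $x$ and after $y$ follows the \emph{same} classical trajectory: since $s_x=s_y$, the sequence of classical states visited while scanning $z$ is identical in the two runs, so the same product of unitaries $U_{s_x,z}$ acts on the quantum part in both cases, and the final classical state coincides, namely $s_{xz}=\delta(s_x,z)=\delta(s_y,z)=s_{yz}$; call it $s'$. Consequently $|\psi_{xz}\rangle=U_{s_x,z}|\psi_x\rangle$ and $|\psi_{yz}\rangle=U_{s_x,z}|\psi_y\rangle$, and since unitaries preserve the norm,
\[
\||\psi_{xz}\rangle-|\psi_{yz}\rangle\|=\|U_{s_x,z}(|\psi_x\rangle-|\psi_y\rangle)\|=\||\psi_x\rangle-|\psi_y\rangle\|<\epsilon .
\]

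Next I would compare acceptance probabilities. Both are produced by the same projector $P_{s',acc}$, so $\mathrm{Prob}_{\mathcal A,acc}(xz)=\|P_{s',acc}|\psi_{xz}\rangle\|^2$ and $\mathrm{Prob}_{\mathcal A,acc}(yz)=\|P_{s',acc}|\psi_{yz}\rangle\|^2$. Applying Lemma~\ref{vector_distance} to these two unit vectors gives $|\mathrm{Prob}_{\mathcal A,acc}(xz)-\mathrm{Prob}_{\mathcal A,acc}(yz)|\le 2\||\psi_{xz}\rangle-|\psi_{yz}\rangle\|<2\epsilon$. On the other hand, since $\mathcal A$ recognizes $L$ with cut-point $\lambda$ isolated by $\epsilon$ and $xz\in L$, $yz\notin L$, we have $\mathrm{Prob}_{\mathcal A,acc}(xz)\ge\lambda+\epsilon$ and $\mathrm{Prob}_{\mathcal A,acc}(yz)\le\lambda-\epsilon$, whence the difference is at least $2\epsilon$. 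This contradicts the strict inequality just obtained, so no distinguishing $z$ exists and $x\equiv_L y$.

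There is essentially no hard step here; the only point that needs care is the first one, namely that $s_x=s_y$ forces the two continuation runs to use the identical unitary $U_{s_x,z}$ and to end in the identical classical state, which is what lets the norm gap be transported unchanged from $|\psi_x\rangle-|\psi_y\rangle$ to $|\psi_{xz}\rangle-|\psi_{yz}\rangle$. The remainder is a direct application of Lemma~\ref{vector_distance} together with the isolation hypothesis, mirroring the computation already used in the proof of Lemma~\ref{number_of_class}.
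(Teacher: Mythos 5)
Your proof is correct and follows essentially the same route as the paper: the paper simply invokes inequality~(\ref{distance_gap}), which was derived in the proof of Lemma~\ref{number_of_class} under exactly your hypotheses ($s_x=s_y$ and a distinguishing $z$), whereas you re-derive that inequality inline via Lemma~\ref{vector_distance} and the isolation condition. No substantive difference.
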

\begin{proof}
	Assume that $x\not\equiv_L y$, by inequality (\ref{distance_gap}), we have $\| |\psi_x\rangle-|\psi_y\rangle \|\geq\epsilon$, which contradicts $\| |\psi_x\rangle-|\psi_y\rangle \|<\epsilon$. Thus, $x\equiv_{L} y$.
\end{proof}

	Given string $x=x_1x_2\dots x_n$,  $\forall i,j\in \mathbb{N}$ with $1\leq i\leq j\leq n$, define $x_{[i,j]}=x_{i}\dots x_j$. In general, we specify $x_0=e_\epsilon$. We give a lower bound of the classical state number of 1QFAC which recognizes given finite regular language.

\begin{theorem}\label{finite_RL}
	Given a finite regular language $L$. Let 1QFAC $\mathcal{A}$ recognize $L$ with cut-point $\lambda$ isolated by $\epsilon$ and with $k$ classical states. Then $k\geq\max \limits_{x\in L}|x|+2$.
\end{theorem}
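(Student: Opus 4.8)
We want to show that if a 1QFAC $\mathcal{A}$ with $k$ classical states recognizes a finite language $L$ with cut-point $\lambda$ isolated by $\epsilon$, then $k \geq \max_{x\in L}|x| + 2$. Let $w \in L$ be a string of maximal length $\ell = |w|$, and write $w = w_1 w_2 \cdots w_\ell$. The natural idea is to track the classical state trajectory $s_0, s_{w_{[1,1]}}, s_{w_{[1,2]}}, \dots, s_{w_{[1,\ell]}} = s_w$ as $\mathcal{A}$ reads the prefixes of $w$; this gives $\ell + 1$ classical states $s_{w_{[1,i]}}$ for $i = 0, 1, \dots, \ell$. If all of these were distinct we would already get $k \geq \ell + 1$, and we need one more.

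First I would argue that the $\ell+1$ prefix states are pairwise distinct. Suppose $s_{w_{[1,i]}} = s_{w_{[1,j]}}$ for some $i < j$. The key point is that $w_{[1,i]}$ and $w_{[1,j]}$ are \emph{not} $\equiv_L$-equivalent: since $L$ is finite and $w$ has maximal length, appending $w_{[i+1,\ell]}$ to $w_{[1,i]}$ yields $w \in L$, whereas appending the same suffix to $w_{[1,j]}$ yields $w_{[1,j]}w_{[i+1,\ell]}$, a string strictly longer than $\ell$, hence not in $L$. So $w_{[1,i]} \not\equiv_L w_{[1,j]}$. Now by the contrapositive of Lemma~\ref{x_equiv_y} (or directly by inequality~(\ref{distance_gap})), if in addition the two classical states coincided we would need $\||\psi_{w_{[1,i]}}\rangle - |\psi_{w_{[1,j]}}\rangle\| \geq \epsilon$; this does not yet give a contradiction by itself, so the prefix states alone do not immediately force distinctness. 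To fix this I would instead invoke Lemma~\ref{approximation_of_U}: reading many copies of a loop at a repeated classical state drives the quantum state arbitrarily close to where it started, and combined with the isolation this would force a contradiction with finiteness of $L$ — essentially, a nontrivial classical cycle among the prefix states of $w$ would let us pump $w$ into arbitrarily long members of $L$. Hence the $\ell + 1$ prefix classical states are all distinct, giving $k \geq \ell + 1$.

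For the final ``$+2$'': I would exhibit one more classical state not among the prefix states of $w$. The candidate is a state $s_z$ reached by some string $z$ that is ``dead'' in the sense that no extension of $z$ lies in $L$; equivalently, the derived-DFA sink state. Since $L$ is finite and nonempty (if $L = \emptyset$ the bound is vacuous or handled trivially), the string $w c$ for any letter $c$, or more robustly any string long enough, has no continuation into $L$, so $s_{wc}$ is such a dead state. A dead state cannot equal any $s_{w_{[1,i]}}$, because from $s_{w_{[1,i]}}$ one can reach an accepting configuration (namely by reading $w_{[i+1,\ell]}$ and measuring), whereas from a dead state the acceptance probability stays below $\lambda - \epsilon$ forever — formally this again uses inequality~(\ref{distance_gap})/Lemma~\ref{x_equiv_y} together with the definition of isolated cut-point. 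Adding this state gives $k \geq \ell + 2 = \max_{x\in L}|x| + 2$.

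\textbf{Main obstacle.} The delicate step is establishing that the $\ell+1$ prefix classical states are genuinely distinct: coincidence of classical states does not by itself contradict anything, so one must combine the pigeonhole on classical states with a pumping argument on the quantum component. Making this rigorous means showing that a repeated classical state on the trajectory of $w$ yields an $s$-cycle in the classical part, then using Lemma~\ref{approximation_of_U} to find a power of the corresponding unitary loop that returns the quantum state to within $\epsilon$ of itself, and finally reading that the resulting pumped string must be accepted with probability $\geq \lambda + \epsilon$ — contradicting both $L$'s finiteness and the isolation. Handling the edge cases ($|w|=0$, i.e. $e_\epsilon \in L$, and $L$ having a single short string) also requires a little care but is routine.
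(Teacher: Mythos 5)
Your treatment of the first $\ell+1$ states is sound and coincides with the paper's Case~1: a repeated classical state among the prefixes of $w$ yields a classical loop, Lemma~\ref{approximation_of_U} supplies a power $m$ of the loop unitary that returns the quantum state to within $\epsilon$ of $|\psi_w\rangle$, and Lemma~\ref{x_equiv_y} then forces the pumped (strictly longer) string to be $\equiv_L w$, hence in $L$, contradicting maximality of $|w|$. The genuine gap is in your ``$+2$'' step. You assert that the ``dead'' state $s_{wc}$ cannot coincide with any prefix state $s_{w_{[1,i]}}$ because acceptance is reachable from the latter but not from the former. But in a 1QFAC the classical state alone does not determine future acceptance probabilities --- the quantum component does too. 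If $s_{wc}=s_{w_{[1,i]}}$, then reading $w_{[i+1,\ell]}$ from either configuration leads to the same final classical state $s_w$ and the same measurement, applied however to two different quantum vectors; inequality~(\ref{distance_gap}) only tells you those vectors are at distance at least $\epsilon$, which is perfectly consistent and not a contradiction. You yourself observed in the first half that the distance bound ``does not yet give a contradiction by itself,'' yet here you rely on exactly that.

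The repair is to run the pumping argument once more, which is precisely the paper's Case~2: if $s_{wc}=s_{w_{[1,i]}}$ for some $i\le\ell$, then $\delta(s_w,\,c\,w_{[i+1,\ell]})=s_w$, so $c\,w_{[i+1,\ell]}$ is a nonempty classical loop at $s_w$; by Lemma~\ref{approximation_of_U} choose $m$ with $\left\|\bigl((U_{s_w,\,c\,w_{[i+1,\ell]}})^m-I\bigr)|\psi_w\rangle\right\|<\epsilon$, and Lemma~\ref{x_equiv_y} gives $w\equiv_L w(c\,w_{[i+1,\ell]})^m$, a string longer than $w$ that must then lie in $L$ --- contradiction. (The paper packages both cases uniformly by applying the pigeonhole to the $|w|+2$ classical states visited while reading $w\sigma$, splitting on whether the repeated index is at most $|w|$ or equals $|w|+1$.) Your edge-case remarks are otherwise routine; note only that for $L=\emptyset$ the statement is vacuous, as the maximum ranges over an empty set.
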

\begin{proof}
	Let $\mathcal{A}=(S, Q, \Sigma,\Gamma, s_0, |\psi_0 \rangle, \delta , \mathbb{U}, \mathcal{M} )$, where $\mathbb{U}=\{U_{s\sigma}\}_{s\in S,\sigma \in \Sigma}$. Let $x$ be the longest string in $L$. By taking $\sigma \in \Sigma$, we have $x\sigma\not \in L$. Suppose the classical states of $\mathcal{A}$ reading $x\sigma$ are  $s_0, s_1, s_2, \cdots,  s_n, s_{n+1}$, in sequence.

	Assume that $k\geq|x|+2$ does not hold, that is, $k<|x|+2$. Then $\exists i,j\in\{0,1,\dots,n,n+1\},i<j$ such that $s_i=s_j$. Let $x'\in \Sigma^*$ satisfy $x=x_{[0,i]}x'$. We consider the following two cases.\\
\textbf{Case 1.} $j\leq n$. This case implies that $n\geq 1$. Take  $y=x_{[0,i]}x_{[i+1,j]}^m x'$ where $m\in\mathbb{Z}^+$. Since $s_i=s_j$, we have $s_y=s_x$ and $|\psi_y\rangle=U_{s_j,x'}U_{s_i,x_{[i+1,j]}}^m|\psi_{x_{[0,i]}}\rangle$. By \totalref{Lemma}{approximation_of_U}, there exists $m>1$ such that $\left\| |\psi_y\rangle-|\psi_x\rangle \right\|=\left\| U_{s_j,x'}((U_{s_i,x_{[i+1,j]}})^{m}-I)|\psi_{x_{[0,i]}}\rangle \right\|<\epsilon$. By \totalref{Lemma}{x_equiv_y}, we obtain $x\equiv_{L} y$. It contradicts that $y\not\in L$ and $x\in L$.\\
\textbf{Case 2.} $j=n+1$. Since $s_i=s_{n+1}$, we have $\delta(s_i,x')=s_n, \delta(s_n,\sigma)=s_{n+1}=s_i$, which implies $\delta(s_n,\sigma x')=s_n$. Take $y=x(\sigma x')^m$. We have $s_y=s_x=s_n$. By \totalref{Lemma}{approximation_of_U}, there exists $m>1$ such that $\left\| |\psi_y\rangle-|\psi_x\rangle \right\|=\| ((U_{s_n,\sigma x'})^m-I)|\psi_x\rangle \|<\epsilon$. By \totalref{Lemma}{x_equiv_y}, we obtain $x\equiv_{L} y$. It also contradicts that $y\not\in L$ and $x\in L$.

	Therefore, $k\geq|x|+2$ holds.
\end{proof}

With Theorem \ref{finite_RL} we have the following corollary describing the relationship between the classical state number of minimal DFA and 1QFAC for any  finite regular language.
\begin{corollary}
	Let $L$ be a finite regular language over alphabet $\Sigma$. Suppose the minimal DFA recognizing $L$ has $m$ states, and $k$ is the minimal number of classical states among all 1QFAC which recognize $L$ with isolated cut-point. Then if $|\Sigma|=1$ or $|L|=1$, we have $k=m$. If $|\Sigma|>1$, we have $2|\Sigma|^{k-2}\geq m$.
\end{corollary}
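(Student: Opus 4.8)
Throughout I assume $L\neq\emptyset$ (the case $L=\emptyset$ being degenerate), and I write $\ell=\max_{x\in L}|x|$ for the length of a longest word of $L$, so that \totalref{Theorem}{finite_RL} already gives $k\geq \ell+2$. For the reverse inequality, recall that any DFA with $m$ states is a special 1QFAC with $m$ classical states and a single quantum basis state (the quantum part being trivial): take every $U_{s\sigma}$ to be the identity on $\mathbb{C}^{1}$ and, for each classical state $s$, let $P_{s,acc}=1$ if $s$ is accepting and $P_{s,acc}=0$ otherwise. This recognizes $L$ exactly, hence with cut-point $\frac12$ isolated by $\frac12$, so the minimal classical state number $k$ also satisfies $k\leq m$. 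The whole corollary therefore reduces to comparing $\ell$ with $m$, which I would do by inspecting the minimal DFA of the finite language $L$.

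For the two equality cases I would show that the minimal DFA of $L$ has exactly $\ell+2$ states whenever $|\Sigma|=1$ or $|L|=1$. In both situations the reachable part of the minimal DFA is a simple path $q_0\to q_1\to\cdots\to q_\ell$ --- obtained by reading $0^{0},0^{1},\dots$ when $|\Sigma|=1$, or by reading the unique word $w$ letter by letter when $L=\{w\}$ --- together with a single dead (rejecting sink) state, which is present and distinct from every $q_i$ because $L$ is finite, nonempty, and each $q_i$ can still reach an accepting state. That the $q_i$ are pairwise distinct follows from the Myhill--Nerode relation (\totalref{Lemma}{Myhill-Nerode theorem}): for $i<j\leq\ell$ the word of length $\ell-i$ accepted from $q_i$ (namely $0^{\ell-i}$, resp.\ the length-$(\ell-i)$ suffix of $w$) is not accepted from $q_j$, since the words accepted from $q_j$ have length $\leq\ell-j<\ell-i$. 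Hence $m=\ell+2$, and with $\ell+2\leq k\leq m$ we conclude $k=m$.

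For the case $|\Sigma|>1$ I would bound $m$ crudely: every reachable state of the minimal DFA is either the dead state or the Myhill--Nerode class of a prefix of some word of $L$, and every such prefix has length at most $\ell$; therefore $m\leq 1+\sum_{i=0}^{\ell}|\Sigma|^{i}=1+\frac{|\Sigma|^{\ell+1}-1}{|\Sigma|-1}$. For $|\Sigma|\geq 2$ the right-hand side is at most $2|\Sigma|^{\ell}$: after clearing denominators this reduces to $|\Sigma|-2\leq |\Sigma|^{\ell}(|\Sigma|-2)$, which is immediate. Combining with $\ell\leq k-2$ from \totalref{Theorem}{finite_RL} yields $m\leq 2|\Sigma|^{\ell}\leq 2|\Sigma|^{k-2}$.

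The only non-mechanical point is the verification, in the two special cases, that no two states of the path collapse in the minimal DFA --- this is exactly what makes the bound exact there --- but it is the short Myhill--Nerode argument above; the geometric-series estimate is routine. I therefore expect no serious obstacle.
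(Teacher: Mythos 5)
Your proof is correct and follows essentially the same route as the paper: for $|\Sigma|>1$ the paper uses exactly your chain $m\leq 1+\sum_{i=0}^{\ell}|\Sigma|^{i}\leq 2|\Sigma|^{\ell}\leq 2|\Sigma|^{k-2}$ via Theorem \ref{finite_RL}, and it dismisses the $|\Sigma|=1$ or $|L|=1$ case as ``clear.'' Your only addition is to spell out that clear case (the path-plus-dead-state structure giving $m=\ell+2$, together with $k\leq m$ from viewing a DFA as a one-quantum-basis-state 1QFAC), which is a correct and welcome filling-in of detail rather than a different approach.
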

\begin{proof}
	The former case is clear, so we prove the later case. Let $l$ be the length of the longest string in $L$. Then $m\leq 1+\sum_{i=0}^l |\Sigma|^i=|\Sigma|^l+\sum_{i=0}^{l-1} |\Sigma|^i+1=|\Sigma|^l+\frac{|\Sigma|^l-1}{|\Sigma|-1}+1\leq 2|\Sigma|^l\leq 2|\Sigma|^{k-2}$.
\end{proof}

	 Next we prove the bound in \totalref{Theorem}{finite_RL} is exact lower bound.

\begin{theorem}\label{exactly_recognize}
	Given a finite regular language $L$. Let $l=\max \limits_{x\in L}|x|$. Then there exists a 1QFAC recognizing $L$ exactly with $l+2$ classical states and $m^l$ quantum basis states, where $m=|\Sigma|$.
\end{theorem}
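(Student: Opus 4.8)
The plan is to construct a 1QFAC that recognizes $L$ with zero error (cut-point $\lambda$ isolated by $\epsilon$ for, say, $\lambda=1/2$, $\epsilon=1/2$), using $l+2$ classical states to track the length of the input prefix read so far, and $m^l$ quantum basis states to record the actual input symbols. First I would set the classical state set to be $S=\{s_0,s_1,\dots,s_l,s_{\text{dead}}\}$, with $\delta(s_i,\sigma)=s_{i+1}$ for $0\le i<l$ and $\delta(s_l,\sigma)=\delta(s_{\text{dead}},\sigma)=s_{\text{dead}}$ for every $\sigma\in\Sigma$; so after reading a string of length $i\le l$ the classical state is $s_i$, and any string of length $>l$ lands in $s_{\text{dead}}$, which is immediately rejected (its measurement $\mathcal M_{s_{\text{dead}}}$ has $P_{s_{\text{dead}},\text{acc}}=O$). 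This accounts for exactly $l+2$ classical states.

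Next I would design the quantum part so that, after reading a string $x=\sigma_1\cdots\sigma_n$ with $n\le l$, the quantum state $|\psi_x\rangle$ faithfully encodes the tuple $(\sigma_1,\dots,\sigma_n)$ as a basis state of $\mathbb C^{m^l}$. Index the quantum basis by $\Sigma^l$ (identifying $|\Sigma|=m$ so $|\Sigma^l|=m^l$). The idea is to use the unitaries $U_{s_i\sigma}$ to "shift in" the symbol $\sigma$ in coordinate $i+1$: starting from a fixed basis vector representing the all-blank $l$-tuple, $U_{s_i\sigma}$ permutes the basis so that the $(i+1)$st component of the recorded tuple becomes $\sigma$ (a permutation matrix, hence unitary). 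After $n$ steps the state is the basis vector for the tuple $(\sigma_1,\dots,\sigma_n,\flat,\dots,\flat)$, and since $L$ is finite this basis vector determines whether $x\in L$. Then I set $\mathcal M_{s_n}$ to be the projective measurement whose accepting projector $P_{s_n,\text{acc}}$ projects onto the span of all basis vectors whose tuple, truncated to length $n$, lies in $L$ (and whose remaining $l-n$ components are $\flat$). This gives ${\rm Prob}_{\mathcal A,\text{acc}}(x)=1$ if $x\in L$ and $0$ if $x\notin L$ with $|x|\le l$; combined with the dead state handling longer strings, $\mathcal A$ recognizes $L$ exactly.

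The routine checks I would then dispatch: each $U_{s_i\sigma}$ is a genuine unitary (permutation matrices suffice), each $\mathcal M_{s_i}=\{P_{s_i,\text{acc}},P_{s_i,\text{rej}}\}$ is a legitimate projective measurement with $P_{s_i,\text{acc}}+P_{s_i,\text{rej}}=I$ on $\mathcal H(Q)$, and the classical transition map is well-defined and total. The main obstacle, and the only place needing care, is making the quantum encoding scheme precise enough that a single fixed unitary $U_{s_i\sigma}$ works uniformly regardless of which symbols were written into coordinates $1,\dots,i$ — i.e. defining a permutation of $\Sigma^l$ that updates exactly the $(i+1)$st coordinate to $\sigma$ and leaves the rest fixed; this is straightforward once one fixes the bijection between quantum basis states and $\Sigma^l$, but it is the step that must be written carefully. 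Everything else is bookkeeping, and the bound $l+2$ on classical states matches the lower bound of Theorem \ref{finite_RL}, so the construction is optimal in the classical state count.
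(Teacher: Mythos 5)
Your construction is essentially identical to the paper's: $l+2$ classical states counting the prefix length (plus a dead state for inputs longer than $l$), an $m^l$-dimensional quantum register viewed as $l$ tensor factors of dimension $m$ that records the symbols read so far, unitaries that write symbol $\sigma$ into the $(i+1)$st factor, and a measurement at $s_n$ projecting onto the length-$n$ words of $L$. The one bookkeeping point you flag — defining the "write $\sigma$" map as a genuine permutation — is resolved exactly as in the paper by letting the initial all-$|0\rangle_m$ state play the role of your blank $\flat$ and only requiring $U_{s_i\sigma}$ to send $|0\rangle_m\mapsto|\sigma\rangle_m$ on that factor (extended arbitrarily to a unitary), since each factor is written exactly once and the accepting projector acts as the identity on the unwritten factors.
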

\begin{proof}
	Let the 1QFAC be $\mathcal{A}=(S, Q, \Sigma,\Gamma, s_0, |\psi_0 \rangle, \delta , \mathbb{U}, \mathcal{M} )$, where $\Sigma=\{0,\dots,m-1\}, \mathcal{M}=\{\mathcal{M}_s\}_{s\in S}, \mathcal{M}_s=\{P_{s,\gamma}\}_{\gamma\in \Gamma},\mathbb{U}=\{U_{s\sigma}\}_{s\in S,\sigma \in \Sigma}$. For the sake of convenience, we denote by $|0\rangle_m$ and $I_m$ the zero vector and the identity matrix in $m$-dimension Hibert space, respectively. We define $\mathcal{A}$ as follows:
\begin{itemize}
\item $S=\{s_0,s_1,\dots,s_{l+1}\}$.
\item $\forall \sigma\in \Sigma$, $\delta(s_i,\sigma)=
\begin{cases}
s_{i+1},& i \in \{0,1,\dots,l\},\\
s_{l+1},& i=l+1.
\end{cases}$
\item $|\psi_0 \rangle=|0 \rangle_m^{\otimes l}$.
\item $P_{s_{l+1},acc}=0I_m^{\otimes l}, P_{s_{i},acc}=\sum_{w\in L\wedge |w|=i}|w\rangle \langle w|\otimes I_m^{\otimes (l-i)}, i=1,\dots,l$, where $|w\rangle=|w_1\rangle\cdots|w_i\rangle$ if $w=w_1\cdots w_i$ and $w_j\in \Sigma,j=1,\dots,i$. $P_{s_0,acc}=
\begin{cases}
I_m^{\otimes l},& e_\epsilon \in L,\\
0I_m^{\otimes l},& e_\epsilon \not\in L.
\end{cases}$
\item $\forall \sigma\in \Sigma$ and $i \in \{0,1,\dots,l-1\}$, $U_{s_{i}\sigma}|0\rangle_m=|\sigma\rangle_m$.
\end{itemize}

	The 1QFAC $\mathcal{A}$ works as follows: For input string $w\in \Sigma^*$, if $|w|\geq l+1$, then the final classical state after $\mathcal{A}$ reading $w$ is $s_{l+1}$ and $w$ will be rejected with probability 1. Otherwise the final classical state is $s_{|w|}$ and the final quantum state before measurement is $|w\rangle|0\rangle^{\otimes l-|w|}$. If $w\in L$, then $P_{s_{|w|},acc}|w\rangle|0\rangle^{\otimes l-|w|}=|w\rangle|0\rangle^{\otimes l-|w|}$ and ${\rm{Prob}}_{\mathcal{A},acc}(w)=1$. If $w\not\in L$, then ${\rm{Prob}}_{\mathcal{A},acc}(w)=0$. Therefore, $\mathcal{A}$ recognizes $L$ exactly.
\end{proof}

	We have given an exact lower bound of the classical state number of 1QFAC for recognizing any given finite regular language. Similarily, we can generalize \totalref{Theorem}{finite_RL} and give a more general result.

	Given a regular language $L$, for any $x\in \Sigma^*$, we say $x$ is $L$ \emph{self-reachable} if there exists $y\in \Sigma^+$ such that $x\equiv_L xy$, otherwise we say $x$ is not $L$ self-reachable.

\begin{lemma}\label{sx!=sy}
	Let 1QFAC $\mathcal{A}=(S, Q, \Sigma,\Gamma, s_0, |\psi_0 \rangle, \delta , \mathbb{U}, \mathcal{M} )$ recognize language $L$ with cut-point $\lambda$ isolated by $\epsilon$. Let $x,y,z\in\Sigma^*$ satisfy that $z$ is not $L$ self-reachable, and there exists $x',z'\in\Sigma^*$, $|z'x'|>0$ such that $z=xx',y=zz'$. Then $s_x\not=s_y$.
\end{lemma}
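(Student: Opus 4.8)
The plan is to argue by contradiction: suppose $s_x = s_y$, and exploit the fact that $z$ sits strictly between $x$ and $y$ in the sense that $z = xx'$ and $y = zz'$ with $|z'x'| > 0$. First I would handle the degenerate possibilities. If $|x'| = 0$ then $z = x$ and $|z'| > 0$, so $y = zz' = xz'$; since $s_x = s_y$, reading $z'$ from the classical state $s_x$ returns to $s_x$, i.e.\ $\delta(s_x, z') = s_x$. If instead $|x'| > 0$, then reading $x'$ from $s_x$ lands in $s_z$, and reading $z'$ from $s_z$ lands in $s_y = s_x$, so $\delta(s_x, x'z') = s_x$; writing $u = x'z' \in \Sigma^+$ (note $|u| = |x'z'| = |z'x'| > 0$) we again have a nonempty loop $\delta(s_x, u) = s_x$. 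So in all cases there is a nonempty word $u$ with $\delta(s_x, u) = s_x$ and $z = x u_{[1,|x'|]}$ is a prefix of $x u$ (when $|x'|=0$ we take $u = z'$ and $z = x$).

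Next I would iterate this loop and apply \totalref{Lemma}{approximation_of_U} to the unitary $V := U_{s_x, u}$ (the composed unitary for reading $u$ starting from classical state $s_x$, which is well-defined and unitary because $\delta(s_x,u)=s_x$ closes the classical cycle). Consider the strings $x u^m$ for $m \in \mathbb{Z}^+$. Their classical state is $s_x$ for every $m$, and their quantum state is $|\psi_{x u^m}\rangle = V^m |\psi_x\rangle$. By \totalref{Lemma}{approximation_of_U}, there is some $m > 1$ with $\| (I - V^m)|\psi_x\rangle \| < \epsilon$, hence $\| |\psi_{x u^m}\rangle - |\psi_x\rangle \| < \epsilon$. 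Since $s_{x u^m} = s_x$, \totalref{Lemma}{x_equiv_y} gives $x \equiv_L x u^m$.

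Finally I would derive the contradiction with the hypothesis that $z$ is not $L$ self-reachable. The subtle point is to locate $z$ inside the picture. Write $z = x x'$ with $|x'| \le |u|$; in fact $x' $ is a prefix of $u$ when $|x'|>0$, and $u = x' z'$. Then $z \cdot (z' u^{m-1}) = x x' z' u^{m-1} = x u^m$. So, using $x \equiv_L x u^m$ together with the Myhill--Nerode property that $\equiv_L$ is a right congruence (if $x \equiv_L x'$ then $xw \equiv_L x'w$), I would need to transfer the equivalence to $z$. Concretely: from $x \equiv_L x u^m$, appending $x'$ gives $z = x x' \equiv_L x u^m x'$; but $x u^m x' = x u^{m-1} (u x') = x u^{m-1} u x'$ and since $\delta(s_x, u)=s_x$ one more loop analysis shows $x u^m x' \equiv_L z y''$ for a suitable nonempty $y''$ — here $y'' = u^{m-1} x'$-ish, and crucially $|y''| > 0$ because $m > 1$ makes $u^{m-1}$ nonempty. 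That exhibits $z \equiv_L z y''$ with $y'' \in \Sigma^+$, i.e.\ $z$ is $L$ self-reachable, contradicting the hypothesis. (In the degenerate case $|x'|=0$, $z = x$ directly and $z \equiv_L z z'$ with $z' \in \Sigma^+$, immediately contradicting non-self-reachability of $z$.) Hence $s_x \ne s_y$.

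The main obstacle I anticipate is the careful bookkeeping in the last paragraph: making sure that after pumping the loop $u$ one obtains a genuinely \emph{nonempty} suffix witnessing self-reachability of $z$ (not of $x$), and correctly splitting $u = x'z'$ so that the loop can be reinserted both before and after the occurrence of $z$. The use of \totalref{Lemma}{approximation_of_U} and \totalref{Lemma}{x_equiv_y} is routine once the combinatorics of prefixes and the classical cycle are set up; the case distinction $|x'| = 0$ versus $|x'| > 0$ is what keeps the argument clean.
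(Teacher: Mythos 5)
Your argument is correct and rests on the same machinery as the paper (pump a nonempty classical cycle via \totalref{Lemma}{approximation_of_U}, then apply \totalref{Lemma}{x_equiv_y}), but you base the cycle at $s_x$ where the paper bases it at $s_z$, and this costs you an extra transfer step. The paper simply notes that $s_x=s_y$ forces $\delta(s_z,z'x')=\delta(s_y,x')=\delta(s_x,x')=s_z$, so the nonempty word $z'x'$ is already a classical loop at $s_z$; pumping there gives $z\equiv_L z(z'x')^m$ at once, contradicting non-self-reachability with no further bookkeeping. You instead loop at $s_x$ with $u=x'z'$, obtain $x\equiv_L xu^m$, and then must (i) invoke that $\equiv_L$ is a right congruence (true, and immediate from its definition, though the paper never needs it) to append $x'$, and (ii) use the conjugation identity $(x'z')^mx'=x'(z'x')^m$ to see that $xu^mx'$ is literally the string $z(z'x')^m$. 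Both steps go through, so there is no gap, but your final paragraph is looser than it needs to be: the witness of self-reachability is exactly $(z'x')^m$ --- a string identity, not merely an equivalence obtained by ``one more loop analysis'' --- rather than $u^{m-1}x'$, and its nonemptiness comes from $|z'x'|>0$ for any $m\ge 1$, so the condition $m>1$ plays no role there. Choosing $s_z$ as the basepoint, as the paper does, dissolves the case split on $|x'|$ and the entire transfer argument.
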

\begin{proof}
	Assume that $s_x=s_y$, we have $\delta(s_z,z'x')=\delta(\delta(s_z,z'),x')=\delta(s_y,x')=\delta(s_x,x')=s_z$. Similar to the proof of case 2 in \totalref{Theorem}{finite_RL}, there exists $m\in \mathbb{Z}^+$ such that $z\equiv_L z(z'x')^m$. It contradicts that $z$ is not $L$ self-reachable. Hence, $s_x\not=s_y$.
\end{proof}

	Given a regular language $L$. Let string $x=x_1x_2\dots x_n$ and specify $x_0=e_\epsilon$. Then there exists a minimal positive integer $m$ such that $0= i_0<i_1<\cdots<i_m=n+1$ and given any $k\in \{0,1,2,\dots,m-1\}$, all elements in $\{x_{[0,j]}|j\in \mathbb{N},i_k\leq j< i_{k+1}\}$ are $L$ self-reachable or none of them are $L$ self-reachable. Based on the above division, we get $m$ sets. Let $A_1,A_2,\cdots,A_r$ be the sets whose elements are not $L$ self-reachable in these $m$ sets and the elements in the other $m-r$ sets are $L$ self-reachable. We define $C_L(x)=\sum_{j=1}^{r}|A_j|+m-r$.

\begin{theorem}\label{infinite_RL}
	Given a regular language $L$, if a 1QFAC $\mathcal{A}$ recognizes $L$ with isolated cut-point and with $k$ classical states, then for any $x\in \Sigma^+, k\geq C_L(x)$.
\end{theorem}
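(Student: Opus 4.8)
The whole argument runs through Lemma~\ref{sx!=sy}, which I would first restate in a form tailored to prefixes of the fixed string $x = x_1 \cdots x_n$: if $u$ is a prefix of $w$, $w$ is a prefix of $v$, $|u| < |v|$, and $w$ is not $L$ self-reachable, then $s_u \ne s_v$. (This is Lemma~\ref{sx!=sy} with its $z$ taken to be $w$, its $x$ to be $u$, its $y$ to be $v$, and $x',z'$ the corresponding pieces, for which $|z'x'| = |v| - |u| > 0$.) From this I would extract the key corollary: \emph{if the prefix $x_{[0,c]}$ is not $L$ self-reachable, then $s_{x_{[0,c]}} \ne s_{x_{[0,j]}}$ for every $j \in \{0,1,\dots,n\}$ with $j \ne c$} --- when $j > c$, apply the restated lemma to $(u,w,v) = (x_{[0,c]},\, x_{[0,c]},\, x_{[0,j]})$, and when $j < c$ apply it to $(u,w,v) = (x_{[0,j]},\, x_{[0,c]},\, x_{[0,c]})$.

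Next I would separate representatives of distinct self-reachable blocks. By the minimality of $m$ in the definition of $C_L(x)$, two adjacent blocks of the partition cannot be of the same ``self-reachable / not-self-reachable'' type (they could otherwise be merged, contradicting minimality), so the types of the $m$ blocks strictly alternate; in particular between any two distinct self-reachable blocks there lies a not-self-reachable block $B$. If $c$ is any index in $B$, $a$ any index in the earlier self-reachable block and $b$ any index in the later one, then $a < c < b$, and applying the restated lemma to $(u,w,v) = (x_{[0,a]},\, x_{[0,c]},\, x_{[0,b]})$ gives $s_{x_{[0,a]}} \ne s_{x_{[0,b]}}$.

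Then I would assemble the witness set $T$: take every index that lies in a not-self-reachable block (these are exactly the indices realizing $A_1, \dots, A_r$, so $\sum_{j=1}^r |A_j|$ of them) together with one arbitrary representative index from each of the $m-r$ self-reachable blocks, giving $|T| = \sum_{j=1}^r |A_j| + (m-r) = C_L(x)$. The classical states $s_{x_{[0,j]}}$, $j \in T$, are pairwise distinct: if two chosen indices sit in different blocks, the conclusion follows from the corollary (if at least one of the two prefixes is not self-reachable) or from the previous paragraph (if both are representatives of self-reachable blocks); and two chosen indices can only sit in the same block when that block is not self-reachable, which is again the corollary. Since every $s_{x_{[0,j]}}$ belongs to $S$, we conclude $k = |S| \ge |T| = C_L(x)$.

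The only place that calls for genuine care --- rather than conceptual insight --- is the bookkeeping: verifying the strict alternation of block types from minimality of $m$, and checking, in each configuration of two chosen indices, that the prefix/length hypotheses of the restated lemma actually hold (which index plays $u$, which plays $w$, and that $|u| < |v|$). Packaging Lemma~\ref{sx!=sy} into the ``$u$ prefix of $w$ prefix of $v$'' form up front is precisely what keeps this bookkeeping under control.
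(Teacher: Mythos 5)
Your proof is correct and is essentially the paper's argument: both rest on Lemma~\ref{sx!=sy} applied through an intermediate non-self-reachable prefix, together with the alternation of block types forced by the minimality of $m$. The only difference is presentational --- you directly exhibit $C_L(x)$ prefixes with pairwise distinct classical states, whereas the paper argues by contradiction via the pigeonhole principle --- and your explicit verification of the alternation and of the prefix/length hypotheses is, if anything, slightly more careful than the paper's.
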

\begin{proof}
	Let $\mathcal{A}=(S, Q, \Sigma,\Gamma, s_0, |\psi_0 \rangle, \delta , \mathbb{U}, \mathcal{M} )$, $x=x_1x_2\dots x_n$ and specify $x_0=e_\epsilon$. Let $A_1,A_2,\cdots,A_r$ be defined above, we regard each element in $\bigcup\limits_{i=1}^rA_i$ as a set individually. Together with the $m-r$ sets whose elements are $L$ self-reachable, we get $C_L(x)$ sets.

	Assume that $k<C_L(x)$, by the pigeonhole principle, we know there exists $i,j\in\{0,1,\cdots,n\},i<j$ such that $x_{[0,i]},x_{[0,j]}$ belong to two distinct sets of the above $C_L(x)$ sets, respectively, and satisfy $s_{x_{[0,i]}}=s_{x_{[0,j]}}$. Based on the construction of these $C_L(x)$ sets, there exists $t\in \mathbb{N}$ such that $i\leq t\leq j$ and $x_{[0,t]}$ is not $L$ self-reachable, which contradicts \totalref{Lemma}{sx!=sy}. Therefore, $k\geq C_L(x)$.
\end{proof}

	Theorem \ref{infinite_RL} gives a method to estimate the lower bound of the classical state number of a 1QFAC for recognizing any given regular language. When the given regular language $L$ is finite, we have $\max \limits_{x\in L}C_L(x)=\max \limits_{x\in L}|x|+2$, which is the same as the bound in Theorem \ref{finite_RL}.

\section{State succinctness of 1QFAC}\label{sec:state_succinctness}
\label{sec: state_succinctness}
	In this section, we show that 1QFAC are exponentially more succinct than DFA and PFA for recognizing some regular languages that can not be recognized by MO-1QFA, MM-1QFA or multi-letter 1QFA.

	Consider the languages $L(h,p)=(1^*00^*10^h)^*\cap \{w:|w|=kp,k=0,1,2,\dots\}$ over \{0,1\}, where $h\in \mathbb{Z}^+$ and $p$ is a prime number. We first give the minimal DFA recognizing $L(h,p)$ and prove $L(h,p)$ can not be recognized by any  MO-1QFA, MM-1QFA and multi-letter 1QFA. Then, we prove that any PFA recognizing $L(h,p)$ has at least $p$ states. Finally, we show that there exists a 1QFAC recognizing $L(h,p)$ with $h+3$ classical states and $\Theta(\log p)$ quantum basis states.

\subsection{The minimal DFA recognizes $L(h,p)$}
	We first give DFA $\mathcal{D}_1=(Q, \Sigma, \delta, q_0, F)$ that recognizes $(1^*00^*10^h)^*$, where $\Sigma=\{0,1\}$, shown in Figure \ref{fig:DFA1} (the all-rejecting state $q_r$ is omitted).

\begin{figure}[h]
	\centering
	\includegraphics[width=0.7\textwidth]{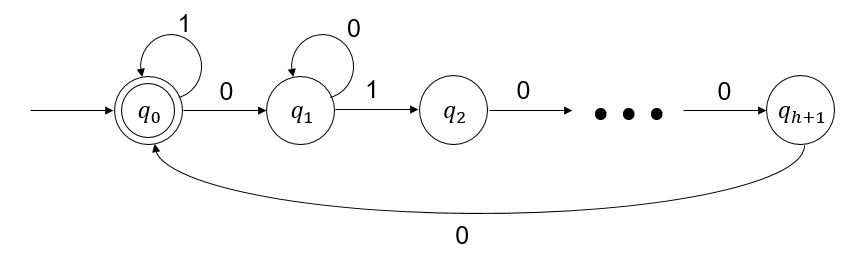}
	\caption{\label{fig:DFA1}The minimal DFA that recognizes $(1^*00^*10^h)^*$}
\end{figure}

	Then we give DFA $\mathcal{D}_2=(Q', \Sigma, \delta', q_{0,0}, F')$ that recognizes $L(h,p)$, where
\begin{itemize}
\item $Q'=\{q_r\}\bigcup\{q_{i,j}: i=0,1,...,h+1 ; j=0,1,...,p-1\}$, where $q_r$ is the all-rejecting state.
\item $F'=\{q_{0,0}\}$.
\item $\forall \sigma\in \Sigma$, $\delta'(q_{i,j},\sigma)=
\begin{cases}
q_r &\text{if } \delta(q_i,\sigma)=q_r,\\
q_{i', (j+1)\bmod p} &\text{if } \delta(q_i,\sigma)=q_{i'}\not = q_r.
\end{cases}$
\end{itemize}
It can be easily verified that the following lemma holds.

\begin{lemma}\label{minimal_DFA}
	$\mathcal{D}_2$ is the minimal DFA that recognizes $L(h,p)$.
\end{lemma}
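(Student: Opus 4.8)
The plan is to establish two things: that $\mathcal{D}_2$ recognizes $L(h,p)$, and that no DFA with fewer states can do so. For the first part I would argue directly from the construction. The automaton $\mathcal{D}_2$ is essentially the product of $\mathcal{D}_1$ (which recognizes $(1^*00^*10^h)^*$) with the modulo-$p$ counter that tracks $|w|\bmod p$; the second coordinate $j$ of a state $q_{i,j}$ advances by $1$ modulo $p$ on every letter read (as long as $\mathcal{D}_1$ has not died), so after reading $w$ with $\mathcal{D}_1$ not in $q_r$ the state is $q_{i,|w|\bmod p}$ where $q_i=\delta(q_0,w)$. The unique accepting state is $q_{0,0}$, so $w$ is accepted iff $\delta(q_0,w)=q_0$ (i.e. $w\in(1^*00^*10^h)^*$) \emph{and} $|w|\equiv 0\pmod p$, which is exactly $L(h,p)$. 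If $\mathcal{D}_1$ ever enters $q_r$, so does $\mathcal{D}_2$ and stays there, correctly rejecting.

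For minimality I would invoke the Myhill--Nerode theorem (\totalref{Lemma}{Myhill-Nerode theorem}): it suffices to show that the $(h+2)p+1$ states of $\mathcal{D}_2$ are all reachable and pairwise inequivalent. Reachability: for each $(i,j)$ exhibit a short word driving $q_{0,0}$ to $q_{i,j}$ — reach $q_i$ in $\mathcal{D}_1$ by a word $u_i$ of suitable length, then pad with a word $v$ of length $\equiv j-|u_i|\pmod p$ that returns $\mathcal{D}_1$ to $q_i$ (e.g. a block from $(1^*00^*10^h)^*$ of controllable length, or staying inside a loop), and $q_r$ is reachable by any illegal letter; distinguishability: given two distinct reachable states, either their $\mathcal{D}_1$-components differ, in which case a distinguishing suffix for $\mathcal{D}_1$ (padded to the right length mod $p$, if needed) separates them, or the components agree but $j\ne j'$, in which case a suffix whose length makes exactly one of the two counters hit $0$ while keeping $\mathcal{D}_1$ accepting separates them. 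The state $q_r$ is distinguished from every $q_{i,j}$ because from $q_{i,j}$ one can still reach acceptance but from $q_r$ one never can.

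The main obstacle I expect is the bookkeeping in the distinguishability argument: I need, for any prescribed residue class mod $p$ and any prescribed $\mathcal{D}_1$-state, a word of length in that class that lands $\mathcal{D}_1$ on that state and then continues to an accepting configuration — this requires knowing that $\mathcal{D}_1$ has loops of length coprime to, or at least flexible enough relative to, $p$ (the blocks $1^*00^*10^h$ have unbounded, hence all sufficiently large, lengths, so this is fine, but it needs to be said carefully). Since $p$ is prime and the available block lengths include all integers $\ge h+3$, one can always adjust the total length into any residue class, which is what makes the counter coordinate genuinely contribute $p$ distinct classes rather than collapsing. Once this flexibility is spelled out, both reachability and distinguishability follow routinely, and the verification that $\mathcal{D}_2$ has exactly $(h+2)p+1$ states matches the Myhill--Nerode count, giving minimality. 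This is the "easily verified" content the lemma refers to, so I would keep the write-up brief, emphasizing only the length-adjustment point.
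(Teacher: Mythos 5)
The paper gives no actual proof of this lemma (it is dismissed as ``easily verified''), so there is no argument to compare against; your outline is the natural verification and it is essentially correct. The correctness half --- reading $\mathcal{D}_2$ as the product of $\mathcal{D}_1$ with a mod-$p$ length counter, with sole accepting state $q_{0,0}$ --- is right, and the Myhill--Nerode count goes through: reachability, and distinguishability in the case $j\neq j'$, work exactly as you say, the key point being (as you correctly isolate) that the blocks $1^a00^b10^h$ realize every length $\geq h+2$, so $q_0$-loops of every residue class mod $p$ exist.

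One step needs more care than ``pad the distinguishing suffix to the right length mod $p$'': when the $\mathcal{D}_1$-components $q_i\neq q_{i'}$ differ but $j=j'$, a distinguishing suffix $u$ for $\mathcal{D}_1$ leaves one run at $q_0$ and the other at some state $q_{i''}\neq q_0$, and a padding word $v$ that loops $q_0$ back to $q_0$ may also carry $q_{i''}$ to $q_0$, re-merging the two runs and destroying the separation. The fix is easy but should be made explicit: every state in $\{q_1,\dots,q_{h+1}\}$ is sent to the dead state $q_r$ by reading $11$, while $q_0$ loops on $1$'s; so after $u$ one can first append $11$ (killing the second run, preserving the first at $q_0$) and only then pad with blocks to reach the desired residue. (Alternatively, take $u$ to be the shortest completion of $q_i$ to $q_0$; these shortest-completion lengths $0,\,h+1,\,h,\,\dots,\,1$ are pairwise distinct over $q_0,\dots,q_{h+1}$, so the other run cannot yet be at $q_0$, and the same killing-then-padding trick applies.) With that adjustment all $(h+2)p+1$ states are reachable and pairwise inequivalent, and minimality follows.
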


\subsection{$L(h,p)$ can not be recognized by any MO-1QFA, MM-1QFA or multi-letter 1QFA}
	We recall two useful results from \cite{characterizations,golovkins2002probabilistic}.

\begin{definition}[construction forbidden by MM-1QFA]\label{FMM}
	Given a minimal DFA $\mathcal{A}=(Q, \Sigma, \delta, q_0', F)$, we define \emph{construction forbidden by MM-1QFA} as: $q_1',q_2'$ are distinct states in $Q$ and $\exists x,y\in \Sigma^*$ such that $\delta(q_1',x)=\delta(q_2',x)=q_2'$ and $\delta(q_2',y)=q_1'$.
\end{definition}

\begin{lemma}[\cite{characterizations,golovkins2002probabilistic}]\label{lemma:FMM}
	A regular language $L$ can not be recognized by any MM-1QFA with isolated cut-point if its minimal DFA contains construction forbidden by MM-1QFA.
\end{lemma}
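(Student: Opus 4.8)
The plan is to argue by contradiction, exploiting the monotone loss of non-halting amplitude that is characteristic of MM-1QFA together with the near-periodicity of unitary evolution supplied by \totalref{Lemma}{approximation_of_U}; this forbidden-construction criterion is exactly the one of Brodsky--Pippenger and Golovkins--Kravtsev. First I would fix the data: suppose for contradiction that an MM-1QFA $M=(Q_M,\Sigma,|\psi_0\rangle,\{U_\sigma\}_{\sigma\in\Sigma\cup\{\$\}},Q_{acc},Q_{rej})$ recognizes $L$ with cut-point $\lambda$ isolated by $\epsilon$, and let the minimal DFA $\mathcal{A}=(Q,\Sigma,\delta,q_0',F)$ contain the construction of \totalref{Definition}{FMM}, i.e.\ distinct states $q_1',q_2'$ and words $x,y$ with $\delta(q_1',x)=\delta(q_2',x)=q_2'$ and $\delta(q_2',y)=q_1'$. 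Since $\mathcal{A}$ is minimal, by the Myhill--Nerode correspondence (\totalref{Lemma}{Myhill-Nerode theorem}) the states $q_1'$ and $q_2'$ are distinguishable, so I can fix a suffix $z$ for which exactly one of $\delta(q_1',z),\delta(q_2',z)$ lies in $F$, together with a word $w$ reaching $q_1'$ from $q_0'$.

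Next I would extract two membership-alternating string families from the pattern. Because $\delta(q_1',x)=q_2'$ and $\delta(q_2',y)=q_1'$, we have $\delta(q_1',xy)=q_1'$, hence $\delta(q_0',w(xy)^t)=q_1'$ for every $t\geq 0$; appending $z$ reaches $\delta(q_1',z)$, whereas inserting one extra $x$ reaches $\delta(q_1',xz)=\delta(q_2',z)$ (and the self-loop $\delta(q_2',x)=q_2'$ makes further $x$'s inert). Thus for every $t$ exactly one of $w(xy)^tz$ and $w(xy)^txz$ belongs to $L$, so isolation of the cut-point forces $\bigl|{\rm Prob}_{M,acc}(w(xy)^tz)-{\rm Prob}_{M,acc}(w(xy)^txz)\bigr|\geq 2\epsilon$ for all $t$.

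Then I would track $M$'s sub-normalized non-halting vector $\xi_t$ and accumulated accept mass $a_t$ after reading $w(xy)^t$. The defining feature of MM-1QFA is that $a_t$ is non-decreasing and $\|\xi_t\|$ non-increasing, both bounded; hence $a_t$ converges, the halting mass absorbed inside successive $(xy)$-blocks tends to $0$, and reading an $(xy)$-block is asymptotically norm-preserving on the surviving states. Applying \totalref{Lemma}{approximation_of_U} to the unitary factors then yields arbitrarily large $t$ at which inserting the single extra block $x$ before $z$ perturbs the surviving state by less than any prescribed $\delta$. Since the final accept probability is a Lipschitz functional of the surviving state plus already-accumulated mass (via the estimate behind \totalref{Lemma}{vector_distance}), this drives ${\rm Prob}_{M,acc}(w(xy)^tz)$ and ${\rm Prob}_{M,acc}(w(xy)^txz)$ within $<2\epsilon$ of each other for large $t$, contradicting the separation above and completing the argument.

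The hard part will be the third step: the intermediate ``accept/reject/non-halting'' measurements break pure unitarity, so \totalref{Lemma}{approximation_of_U} cannot be applied to one operator directly. The technical heart is to combine the monotone vanishing of the per-block halting increments with unitarity on the non-halting subspace to show that the inserted $x$-block is asymptotically a no-op, and to control precisely how the end-mark measurement $U_\$$ converts a small perturbation of the surviving state into a small perturbation of the accept probability.
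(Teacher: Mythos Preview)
The paper does not prove this lemma; it is quoted from \cite{characterizations,golovkins2002probabilistic} without argument, so there is no in-paper proof to compare against. Your outline follows the Brodsky--Pippenger strategy and the first two steps are correct: the strings $w(xy)^tz$ and $w(xy)^txz$ indeed land in $\delta(q_1',z)$ and $\delta(q_2',z)$ respectively, forcing a $2\epsilon$-gap between their acceptance probabilities for every $t$.

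The gap is in your third step. You correctly observe that the per-block halting mass vanishes, so reading one more $xy$---and hence its initial segment $x$---is asymptotically \emph{norm-preserving} on $\xi_t$. But norm-preserving is much weaker than ``no-op'': even on a limit vector $\xi_\infty$, the contraction $V_x$ may act as a genuine rotation, so $V_x\xi_\infty$ need not be close to $\xi_\infty$. Applying \totalref{Lemma}{approximation_of_U} to the unitary letter-operators yields recurrence of the $(xy)$-orbit, not that a single inserted $x$ is near the identity. Concretely, $\|V_x\xi_\infty\|=\|\xi_\infty\|$ does \emph{not} force $\xi_\infty$ into the unitary subspace $H_u(V_x)=\{v:\|V_x^nv\|=\|v\|\ \text{for all}\ n\}$; the shift $V_x:e_1\mapsto e_2,\ e_2\mapsto 0$ already shows this fails for general contractions. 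The published arguments need an additional structural ingredient---the decomposition of a finite-dimensional contraction into a unitary part and a part on which powers tend to zero---together with a more careful pumping scheme (for instance working at $q_2'$, where both $x$ and $yx$ are self-loops, and driving the non-halting vector into the \emph{common} unitary core of $V_x$ and $V_{yx}$ before invoking recurrence). Your sketch pinpoints the right obstacle, but the proposed resolution ``the inserted $x$-block is asymptotically a no-op'' is not correct as stated.
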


	We have the following proposition.

\begin{proposition}
	Neither MO-1QFA nor MM-1QFA can recognize $L(h,p)$ with isolated cut-point.
\end{proposition}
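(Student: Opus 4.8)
The plan is to apply Lemma~\ref{lemma:FMM} directly, which reduces the proposition for MM-1QFA (and hence for MO-1QFA, since MO-1QFA are weaker) to exhibiting a \emph{construction forbidden by MM-1QFA} inside the minimal DFA $\mathcal{D}_2$ of $L(h,p)$. By Lemma~\ref{minimal_DFA} we already know $\mathcal{D}_2$ is minimal, so it suffices to locate two distinct states $q_1',q_2'$ of $\mathcal{D}_2$ together with strings $x,y$ such that $\delta'(q_1',x)=\delta'(q_2',x)=q_2'$ and $\delta'(q_2',y)=q_1'$.

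First I would describe the relevant transition structure of $\mathcal{D}_2$. Its non-rejecting states are the pairs $q_{i,j}$ with $0\le i\le h+1$ and $0\le j\le p-1$, where $i$ tracks the position within the block pattern $1^*00^*10^h$ handled by $\mathcal{D}_1$ and $j$ counts the input length modulo $p$. Reading a full ``successful'' block $u=1\,0\,1\,0^h$ (length $h+3$) from the start state $q_{0,0}$ returns the first coordinate to $0$ while advancing the second coordinate by $(h+3)\bmod p$. Iterating $u$ a total of $p$ times returns both coordinates to their initial values, so $q_{0,0}$ lies on a cycle. More usefully, I would pick $q_1'=q_{0,0}$ and let $q_2'$ be a state reachable from $q_{0,0}$ that also lies on a loop passing through itself; concretely, reading $p$ copies of the block $u$ from $q_{0,0}$ comes back to $q_{0,0}$, and along the way one meets states $q_{0,j}$ for various residues $j$. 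Since $p$ is prime and $\gcd(h+3,p)$ may be $1$, after enough blocks every residue class appears; choosing $q_2'=q_{0,t}$ for a suitable $t\neq 0$, there is a string $x$ (a concatenation of blocks $u$) with $\delta'(q_{0,0},x)=q_{0,t}=q_2'$ and also $\delta'(q_{0,t},x)=q_{0,t}$ (take $x$ to consist of $p$ blocks, which is length $p(h+3)$, a multiple of $p$, so the second coordinate is unchanged, while $p$ blocks bring the first coordinate back to $0$); and a further string $y$ of blocks completes the cycle back from $q_{0,t}$ to $q_{0,0}$. This exhibits exactly the forbidden construction.

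The one point needing care is making the residues work out: I must ensure $q_1'\neq q_2'$, i.e. that I can genuinely reach a state $q_{0,t}$ with $t\neq 0$ using only length-multiples that keep $x$ itself of length divisible by $p$. The clean way is to use $x = u^{p}$ (length $p(h+3)$, divisible by $p$) as the self-loop string at \emph{any} state of the form $q_{0,j}$, and use $y=u^{\,r}$ for an appropriate $r$ as the connector; choosing the target state to be $q_{0,j}$ with $j=(h+3)\bmod p\neq 0$ (which holds provided $p\nmid h+3$) works immediately, and the remaining case $p\mid h+3$ can be handled by instead inserting an extra $1$ to shift the length, or simply by noting the construction can be placed at a different pair of $i$-coordinates. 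I expect this residue bookkeeping to be the only real obstacle; everything else is a direct invocation of Lemma~\ref{lemma:FMM} and the observation that MO-1QFA $\subseteq$ MM-1QFA in recognition power, so the bound for MM-1QFA immediately gives the bound for MO-1QFA.
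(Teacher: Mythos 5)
There is a genuine gap in the choice of the two states. In $\mathcal{D}_2$ the second index of a state is nothing but the input length modulo $p$: for every non-rejecting state $q_{i,j}$ and every string $w$ that does not lead to $q_r$, the state $\delta'(q_{i,j},w)$ has second index $(j+|w|)\bmod p$. Consequently, if a single string $x$ is to satisfy both $\delta'(q_1',x)=q_2'$ and $\delta'(q_2',x)=q_2'$ (and note that Definition~\ref{FMM} requires the \emph{same} $x$ for both conditions), then writing $j_1,j_2$ for the second indices of $q_1',q_2'$ you get $j_2+|x|\equiv j_2$ and $j_1+|x|\equiv j_2 \pmod p$, hence $|x|\equiv 0$ and $j_1=j_2$. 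So the two states of any forbidden construction must share the same second index and differ in the \emph{first} index. Your choice $q_1'=q_{0,0}$, $q_2'=q_{0,t}$ with $t\neq 0$ can therefore never work: your $x=u^{p}$ does self-loop at $q_{0,t}$, but it maps $q_{0,0}$ back to $q_{0,0}$, not to $q_{0,t}$, so the condition $\delta'(q_1',x)=q_2'$ fails for every $t\neq 0$. The "residue bookkeeping" you flag as the only obstacle is in fact an impossibility for the pair of states you chose, and the parenthetical fallback ("place the construction at a different pair of $i$-coordinates") is the essential idea, not an optional variant — yet it is only mentioned for the special case $p\mid h+3$ and never carried out.

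The paper's proof fixes exactly this point: it takes $q_1'=q_{0,0}$ and $q_2'=q_{1,0}$ (distinct in the first coordinate, equal in the second), $x=0^{p}$ (which sends $q_0\mapsto q_1$ and $q_1\mapsto q_1$ in $\mathcal{D}_1$, and has length $\equiv 0 \pmod p$ so the second index is preserved), and $y=10^{h}1^{l}$ with $1+h+l\equiv 0\pmod p$ to return to $q_{0,0}$. Your overall strategy — invoke Lemma~\ref{lemma:FMM} on the minimal DFA $\mathcal{D}_2$ and dispose of MO-1QFA via the inclusion of MO-1QFA in MM-1QFA — is the same as the paper's and is sound; only the concrete witness needs to be replaced as above.
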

\begin{proof}
	By \totalref{Lemma}{minimal_DFA}, we know DFA $\mathcal{D}_2$ is the minimal DFA that recognizes $L(h,p)$. In \totalref{Definition}{FMM}, take $q_1'=q_{0,0},q_2'=q_{1,0},x=0^p$ and $y=10^h1^l$, where $l\in\mathbb{Z}^+$ satisfies $|y|\equiv 0 \pmod{p}$. It can be verified that $\mathcal{D}_2$ contains construction forbidden by MM-1QFA. Hence any MM-1QFA can not recognize $L(h,p)$ with isolated cut-point (by \totalref{Lemma}{lemma:FMM}). Since any language recognized by MO-1QFA with isolated cut-point can also be recognized by MM-1QFA with isolated cut-point \cite{strengths}, the claimed result follows.
\end{proof}

	We then recall useful definitions and results from \cite{belovs2007multi}.

\begin{definition}[F-construction]\label{FML}
	Given a DFA $\mathcal{A}=(Q, \Sigma, \delta, q_0, F)$, define \emph{F-construction} as: $q_1',q_2'$ are distinct states in $Q$ and $\exists x,y\in \Sigma^+$ such that $\delta(q_1',x)=\delta(q_2',x)=q_2',\delta(q_1',y)=q_1'$ and $\delta(q_2',y)=q_2'$.
\end{definition}

\begin{lemma}[\cite{belovs2007multi}]\label{lemma:FML}
	A regular language $L$ can be recognized by a multi-letter 1QFA with isolated cut-point iff its minimal DFA does not contain F-construction.
\end{lemma}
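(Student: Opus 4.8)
The statement is an ``if and only if'', and I would prove its two directions by quite different means: a pumping argument for ``F-construction present $\Rightarrow$ not recognizable'', and an explicit automaton construction for ``F-construction absent $\Rightarrow$ recognizable''.

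\emph{Forward direction.} Suppose the minimal DFA $(Q,\Sigma,\delta,q_0,F)$ of $L$ contains an F-construction with distinct states $q_1',q_2'$ and words $x,y\in\Sigma^+$ as in \totalref{Definition}{FML}. Pick $w$ with $\delta(q_0,w)=q_1'$ and a word $z$ with exactly one of $\delta(q_1',z),\delta(q_2',z)$ in $F$ (both exist because the minimal DFA is reachable and reduced); say $\delta(q_1',z)\in F\not\ni\delta(q_2',z)$. Assume toward a contradiction that some $k$-letter 1QFA $\mathcal{B}$ recognizes $L$ with cut-point $\lambda$ isolated by $\epsilon$. For large parameters $a_0,a,A$ (each $\equiv 2(k-1)\bmod|y|$) and large $r$, set
\[
u=w\,y^{A}\,z\in L,\qquad v=w\,y^{a_0}\,(x\,y^{a})^{r}\,z\notin L,
\]
where membership follows from $\delta(q_1',y)=q_1'$, $\delta(q_1',x)=\delta(q_2',x)=q_2'$, $\delta(q_2',y)=q_2'$. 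A $k$-letter 1QFA is driven only by the length-$k$ window of symbols, so inside a run $y^{\ge k}$ the applied unitaries are exactly periodic and one period $y^{|y|}$ applies a fixed unitary $V$; an occurrence of $x$ flanked by $\ge k$ copies of $y$ on each side applies a fixed unitary $E_x$; the finitely many windows touching $w$ compose to a fixed unitary $\Psi$ and those touching $z$ to a fixed unitary $\Phi$; and, crucially, $\Psi$ and $\Phi$ are \emph{the same} for $u$ and for $v$, since both strings begin with $w\,y^{\ge k}$ and end with $y^{\ge k}\,z$. Classifying every window by the unique non-$y$ block it meets then gives, exactly,
\[
{\rm{Prob}}_{\mathcal{B},acc}(u)=\big\|P_{acc}\,\Phi\,V^{p_A}\,\Psi\,|\psi_0\rangle\big\|^2,\qquad
{\rm{Prob}}_{\mathcal{B},acc}(v)=\big\|P_{acc}\,\Phi\,(V^{p}E_x)^{r}\,V^{p_0}\,\Psi\,|\psi_0\rangle\big\|^2,
\]
with $p_A,p,p_0$ the integer counts of clean $y$-periods, linear in $A,a,a_0$. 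Now I would invoke \totalref{Lemma}{approximation_of_U} in a nested way: first fix $r$ with $E_x^{r}$ within $\delta$ of $I$ on all unit vectors; then fix the $y$-exponents so large that $V^{p_A},V^{p},V^{p_0}$ are each within $\delta/r$ of $I$, which forces $(V^{p}E_x)^{r}$ within $O(\delta)$ of $E_x^{r}$, hence of $I$. Both acceptance probabilities then lie within $O(\delta)$ of the single quantity $\|P_{acc}\Phi\Psi|\psi_0\rangle\|^2$, so for $\delta$ small $|{\rm{Prob}}_{\mathcal{B},acc}(u)-{\rm{Prob}}_{\mathcal{B},acc}(v)|<2\epsilon$, contradicting $u\in L$, $v\notin L$ and isolation. (If $\delta(q_1',z)\notin F\ni\delta(q_2',z)$ instead, the roles of $u$ and $v$ swap and the same inequality is violated.)

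\emph{Reverse direction.} Suppose the minimal DFA $\mathcal{A}=(Q,\Sigma,\delta,q_0,F)$ of $L$ has no F-construction, and put $k=|Q|^2+1$. I would build a $k$-letter 1QFA $\mathcal{B}$ with quantum basis $\{|q\rangle:q\in Q\}$, initial state $|q_0\rangle$, accepting set $F$, and every $U_w$ a permutation matrix, so arranged that after reading $\sigma_1\cdots\sigma_n$ the quantum state is exactly $|\delta(q_0,\sigma_1\cdots\sigma_n)\rangle$; then $\mathcal{B}$ recognizes $L$ with one-sided error $0$, hence with, say, cut-point $1/2$ isolated by $1/2$. A window containing a blank letter encodes the whole prefix read so far, so the required update is prescribed at a single point and extends to a permutation trivially. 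A full window $v\sigma$ with $|v|=k-1$ must send $|\delta(q_0,uv)\rangle\mapsto|\delta(q_0,uv\sigma)\rangle$ for \emph{every} $u$, which a permutation matrix can realize precisely when $\delta(\cdot,\sigma)$ is injective on $R_v:=\{\delta(q_0,uv):u\in\Sigma^*\}$. The heart of the proof is therefore the combinatorial claim: absence of an F-construction implies that $\delta(\cdot,\sigma)$ is injective on $R_v$ for every $\sigma$ and every $v$ of length $k-1>|Q|^2$. I would prove the contrapositive: reading $v$ from two reachable pre-images of two distinct $\sigma$-merging states of $R_v$ gives a walk of length $>|Q|^2$ in $Q\times Q$, which revisits a pair and hence yields distinct $\alpha,\beta$ with $\delta(\alpha,y')=\alpha$, $\delta(\beta,y')=\beta$ and $\delta(\alpha,x')=\delta(\beta,x')$ for some $x',y'\in\Sigma^+$; a short pumping step (replace $x'$ by $x'y'^{|Q|!}$ to drag the common image onto a cycle of $x'y'^{|Q|!}$, and use that every $y'$-recurrent state is fixed by $y'^{|Q|!}$) upgrades this to an F-construction in the literal sense of \totalref{Definition}{FML}.

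\emph{Main obstacle.} In both directions the difficulty is careful bookkeeping rather than a single deep step. In the forward direction one must count exactly how many length-$k$ windows straddle each junction and verify that $\Psi$, $\Phi$ and the residual powers of $V$ genuinely coincide for $u$ and $v$; this is what forces the congruences $a_0,a,A\equiv 2(k-1)\bmod|y|$ and the precise nesting order of the (finitely many) applications of \totalref{Lemma}{approximation_of_U}. In the reverse direction the delicate point --- and where I would expect to spend most effort --- is the last step of the combinatorial claim: the raw pigeonhole output is a merge together with two parallel $y'$-cycles, whereas \totalref{Definition}{FML} requires the merge target to be one of the two cycling states, so one needs the modular-arithmetic alignment above to bring it to exactly that form.
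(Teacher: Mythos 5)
This lemma is imported: the paper states it as a known result of Belovs, Rosmanis and Smotrovs \cite{belovs2007multi} and gives no proof of its own, so there is no in-paper argument to compare yours against. Judged on its own terms, your two-part plan is essentially a reconstruction of the original proof and is sound in outline. For necessity, decomposing the run of a $k$-letter 1QFA on $w\,y^{A}z$ versus $w\,y^{a_0}(xy^{a})^{r}z$ into fixed boundary unitaries plus powers of a single period unitary $V$, and then applying \totalref{Lemma}{approximation_of_U} in the nested order you describe, is the standard argument; note that the congruence conditions you impose on the exponents are actually unnecessary --- each block $y^{a}$ has length a multiple of $|y|$, so every $y$-run ends at the same phase, the boundary windows $\Psi,\Phi,E_x$ are automatically identical across occurrences, and the clean-period counts $p_A,p,p_0$ range over all sufficiently large integers, which is what \totalref{Lemma}{approximation_of_U} needs. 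For sufficiency, building a permutation-matrix (multi-letter reversible) automaton that tracks the DFA state with window length $k-1=|Q|^2$ and reducing correctness to injectivity of $\delta(\cdot,\sigma)$ on the reachable set $R_v$ is exactly the route of \cite{belovs2007multi}; your pigeonhole step plus $|Q|!$-powering does upgrade a merge with two parallel $y'$-cycles to a literal F-construction in the sense of \totalref{Definition}{FML} (one must power up the new $x$ as well as the new $y$ so that the common image becomes a fixed point of both, and then take as $q_1'$ whichever of the two cycling states differs from that image), though, as you acknowledge, this is the step that has to be written out in full before the proof is complete.
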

	
	Similarly, we have the following proposition.

\begin{proposition}
	No multi-letter 1QFA can recognize $L(h,p)$ with isolated cut-point.
\end{proposition}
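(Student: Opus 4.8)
The plan is to invoke Lemma \ref{lemma:FML}, which says that a regular language is recognizable by a multi-letter 1QFA with isolated cut-point if and only if its minimal DFA does not contain an F-construction. Since Lemma \ref{minimal_DFA} tells us $\mathcal{D}_2$ is the minimal DFA recognizing $L(h,p)$, it suffices to exhibit an F-construction inside $\mathcal{D}_2$: two distinct states $q_1',q_2'$ and strings $x,y\in\Sigma^+$ with $\delta'(q_1',x)=\delta'(q_2',x)=q_2'$, $\delta'(q_1',y)=q_1'$, and $\delta'(q_2',y)=q_2'$.

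First I would reuse essentially the same witnesses as in the MM-1QFA proposition. Take $q_1'=q_{0,0}$ and $q_2'=q_{1,0}$. For the string $x$, note that reading a single $1$ from $q_{0,0}$ (staying in the ``$1^*$'' part of the block) and from $q_{1,0}$ should both lead, after a suitable number of steps, into the same state; concretely one picks $x$ so that $\delta'(q_{0,0},x)=\delta'(q_{1,0},x)=q_{1,0}$, e.g. $x=10^h1^{l_1}0^{p-\text{(something)}}\cdots$ chosen so the length is a multiple of $p$ and the $\mathcal{D}_1$-component returns to $q_1$. The key is that in $\mathcal{D}_1$ there is a word taking $q_0$ to $q_1$ and also a word taking $q_1$ to $q_1$, and by concatenating/padding with symbols whose total length is $\equiv 0\pmod p$ one forces the second (mod-$p$ counter) coordinate back to $0$. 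For $y$ I would take a nonempty loop at both states simultaneously: since $q_{0,0}$ is the accepting ``block-complete'' state, a full block $w\in 1^*00^*10^h$ of length divisible by $p$ loops $q_{0,0}$ back to $q_{0,0}$; one checks the same word (or an appropriate related word of length $\equiv 0 \pmod p$) also loops $q_{1,0}$ back to itself, again exploiting that the $\mathcal{D}_1$-component has a closed walk through $q_1$ and the counter coordinate is controlled purely by length.

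The main obstacle will be the bookkeeping of the mod-$p$ counter coordinate: every candidate word for $x$ and $y$ must simultaneously (a) realize the required transition in the underlying DFA $\mathcal{D}_1$ and (b) have length in the correct residue class mod $p$ so that the second index $j$ lands where we need it. Since $p$ is prime and the alphabet contains $0$ and $1$, padding with extra $0$'s inside the $0^*$ segments or extra $1$'s inside the $1^*$ segments lets us adjust the length by $1$ at a time without disturbing the $\mathcal{D}_1$-transition, so a suitable length in any residue class is always attainable; making this precise is the only real content. Once such $x,y$ are written down, verifying $\delta'(q_1',x)=\delta'(q_2',x)=q_2'$ and $\delta'(q_1',y)=q_1'$, $\delta'(q_2',y)=q_2'$ is a direct computation from the definition of $\delta'$, so $\mathcal{D}_2$ contains an F-construction, and Lemma \ref{lemma:FML} yields that no multi-letter 1QFA recognizes $L(h,p)$ with isolated cut-point.
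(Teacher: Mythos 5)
Your overall strategy coincides with the paper's: apply Lemma~\ref{lemma:FML} to the minimal DFA $\mathcal{D}_2$ and exhibit an F-construction on the pair $q_1'=q_{0,0}$, $q_2'=q_{1,0}$. The problem is that the entire content of this proof is the explicit pair of witnesses $(x,y)$, and you never actually produce one; moreover, the constructions you sketch would fail. For $x$, your candidate beginning $10^h\cdots$ desynchronizes the two trajectories: from $q_{0,0}$ the leading $1$ self-loops at $q_0$ and the $0^h$ then lands in $q_1$, whereas from $q_{1,0}$ the leading $1$ jumps to $q_2$ and the $0^h$ returns to $q_0$; a subsequent block $1^{l_1}$ with $l_1\geq 2$ then drives the first trajectory into the all-rejecting state. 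The far simpler choice $x=0^p$ already works, since $0$ sends $q_0$ to $q_1$, self-loops at $q_1$, and $p$ symbols reset the counter. For $y$, a single block $1^a00^b10^h$ of length divisible by $p$ does \emph{not} in general loop $q_{1,0}$ back to itself (already $a\geq 2$ sends $q_1$ through $q_2$ into $q_r$), and your ``pad by one symbol at a time'' argument breaks down precisely because the same inserted symbol must be consistent with \emph{both} trajectories: a position at which one trajectory sits in a $0^*$ self-loop at $q_1$ is a position at which the other trajectory is counting through the rigid $0^h$ tail, so the insertion destroys one of the two loops. Controlling the residue of $|y|$ modulo $p$ is therefore not a routine padding matter for a simultaneous loop.

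The repair is what the paper does: take the word $10^h10^h$, which one checks is a loop at $q_0$ \emph{and} at $q_1$ in $\mathcal{D}_1$ (the two trajectories are cyclic shifts of one another), and raise it to the $p$-th power so that its length $p(2h+2)$ is divisible by $p$ and the counter coordinate returns to $0$. With $x=0^p$ and $y=(10^h10^h)^p$ the three conditions $\delta'(q_{0,0},x)=\delta'(q_{1,0},x)=q_{1,0}$, $\delta'(q_{0,0},y)=q_{0,0}$, $\delta'(q_{1,0},y)=q_{1,0}$ are immediate, and Lemma~\ref{lemma:FML} finishes the argument. Without such explicit witnesses (or a correct general mechanism for producing them), your proof is incomplete at exactly the step that carries all the weight.
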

\begin{proof}
	By \totalref{Lemma}{minimal_DFA}, we know DFA $\mathcal{D}_2$ is the minimal DFA that recognizes $L(h,p)$. In \totalref{Definition}{FML}, taking $q_1'=q_{0,0},q_2'=q_{1,0},x=0^p$ and $y=(10^h10^h)^p$, we know that $\mathcal{D}_2$ contains F-construction. Hence, by \totalref{Lemma}{lemma:FML}, the proposition holds.
\end{proof}

\subsection{Any PFA recognizing $L(h,p)$ has at least $p$ states}
\begin{lemma}[\cite{strengths,mereghetti2001note}]\label{PFA_p_states}
	Let $p'$ be a prime number. Then any PFA recognizing unary language $L(p')=\{0^{kp'}:k=0,1,\dots\}$ with isolated cut-point has at least $p'$ states.
\end{lemma}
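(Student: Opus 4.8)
The plan is to show that any PFA recognizing $L(p')$ with isolated cut-point must carry a primitive $p'$-th root of unity among the eigenvalues of its single-letter transition matrix, and then to use the Perron--Frobenius structure of stochastic matrices to conclude that this already forces at least $p'$ states. This is essentially the argument of \cite{strengths,mereghetti2001note}.

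Concretely, suppose $\mathcal{A}=(S,\{0\},M,\rho,F)$ with $|S|=n$ recognizes $L(p')$ with cut-point $\lambda$ isolated by $\epsilon$, where $M$ is the stochastic transition matrix on the letter $0$. Set $f(k)={\rm{Prob}}_{\mathcal{A},acc}(0^k)=\rho M^k\eta_F$, so $f(k)\geq\lambda+\epsilon$ when $p'\mid k$ and $f(k)\leq\lambda-\epsilon$ otherwise. Since $0\leq f(k)\leq 1$ and $f$ satisfies the linear recurrence given by the characteristic polynomial of $M$ (Cayley--Hamilton), boundedness rules out polynomial growth and the standard theory of linear recurrences yields a decomposition $f(k)=\sum_{\mu}c_\mu\mu^k+e(k)$, where the sum ranges over the eigenvalues $\mu$ of $M$ of modulus $1$ and $e(k)\to 0$ as $k\to\infty$ (the contribution of eigenvalues of modulus $<1$).

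First I would isolate the ``period-$p'$ part'' of $f$ by averaging along arithmetic progressions. For each residue $r\in\{0,1,\dots,p'-1\}$ consider $g(r)=\lim_{N\to\infty}\frac{1}{N}\sum_{m=0}^{N-1}f(mp'+r)$. Substituting the decomposition and using that $\frac{1}{N}\sum_{m<N}(\mu^{p'})^m\to 1$ if $\mu^{p'}=1$ and $\to 0$ otherwise, together with the fact that the Ces\`aro average of $e(\cdot)$ vanishes, the limit exists and equals $g(r)=\sum_{\mu:\,\mu^{p'}=1}c_\mu\mu^{r}$; in particular $g$ is periodic with period dividing $p'$. On the other hand the recognition hypothesis gives $g(0)\geq\lambda+\epsilon$ while $g(r)\leq\lambda-\epsilon$ for $r=1,\dots,p'-1$, so $g$ is non-constant, and since $p'$ is prime its period is exactly $p'$. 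As $g$ is a combination of $p'$-th roots of unity each of which is an eigenvalue of $M$, were every such $\mu$ with $c_\mu\neq 0$ equal to $1$ then $g$ would be constant; hence some primitive $p'$-th root of unity $\omega$ is an eigenvalue of $M$.

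It then remains to deduce $n\geq p'$ from the fact that the stochastic matrix $M$ has the eigenvalue $\omega$ with $|\omega|=1$ equal to its spectral radius. Here I would invoke the Frobenius description of the peripheral spectrum of a nonnegative matrix, equivalently the spectral theory of finite Markov chains: the modulus-$1$ eigenvalues of $M$ come only from its recurrent communicating classes, a recurrent class $C$ of period $d$ contributes precisely the $d$-th roots of unity, and $|C|\geq d$ because $C$ splits into $d$ non-empty cyclic subclasses. Thus $\omega$ being a primitive $p'$-th root of unity forces some recurrent class to have period divisible by $p'$, hence at least $p'$ states, so $n\geq p'$. The main obstacle is exactly this last Perron--Frobenius input: it is stochasticity (nonnegativity plus unit row sums) that prevents a small matrix, say a $2\times 2$ rotation, from carrying a primitive $p'$-th root of unity, and any attempt to shortcut the counting would stumble there.
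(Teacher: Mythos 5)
The paper does not actually prove this lemma---it is imported verbatim from \cite{strengths,mereghetti2001note}---so there is no in-paper argument to compare against; your proof is a correct, self-contained reconstruction along the standard lines of those references. The one step worth tightening is the semisimplicity of the peripheral eigenvalues (needed to get $e(k)\to 0$): rather than inferring it from boundedness of $f$ alone (which works, but requires the extra observation that a nonzero exponential sum over distinct unimodular frequencies cannot tend to zero), it is cleaner to note that $M^k$ is stochastic for every $k$, hence power-bounded, which immediately rules out nontrivial Jordan blocks on the unit circle. Everything else---the Ces\`aro averaging along residues mod $p'$ that produces a nonconstant $p'$-periodic limit function and hence a primitive $p'$-th root of unity in the spectrum of $M$, and the Perron--Frobenius fact that such an eigenvalue must arise from a recurrent class whose period, and therefore cardinality, is at least $p'$---is sound.
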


\begin{lemma}
	Any PFA recognizing $L(h,p)$ has at least $p$ states.
\end{lemma}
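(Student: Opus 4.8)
The plan is to reduce the claim to the unary case handled by Lemma \ref{PFA_p_states}. The key observation is that $L(h,p)$ restricted to a suitable ``test family'' of strings behaves exactly like the unary language $L(p)=\{0^{kp}:k=0,1,\dots\}$, so any PFA for $L(h,p)$ can be converted into a PFA for $L(p)$ with no more states. Concretely, fix a single string $u\in(1^*00^*10^h)^*$ whose length is a positive multiple of $p$ -- for instance take $u=(10^h)^{?}$ adjusted, or more simply note $10^h1$ is not quite in the star-language, so use $u=0\,1\,0^h$ which lies in $1^*00^*10^h$ and has length $h+2$; then $u^p$ has length $p(h+2)\equiv 0\pmod p$ and $u^p\in(1^*00^*10^h)^*$, hence $u^p\in L(h,p)$. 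More generally $u^{kp}\in L(h,p)$ for every $k$, while $u^{j}$ for $p\nmid j$ has length $j(h+2)$, and since $p$ is prime and $p\nmid(h+2)$ would need to be arranged -- if $p\mid h+2$ pick instead $u=0010^h$ of length $h+3$ or adjust the exponent so that $\gcd(|u|,p)=1$; such a $u$ always exists because we may pad the middle block of $0$s. So we obtain a word $u$ with $u^j\in L(h,p)\iff p\mid j$.

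First I would make this choice of $u$ precise: we need $u\in 1^*00^*10^h$ and $\gcd(|u|,p)=1$, which is achievable by taking $u=0^a10^h$ with $a\ge 1$ chosen so that $a+h+1\not\equiv 0\pmod p$ (possible since we have infinitely many choices of $a$ and only one forbidden residue). Then $u^j=(0^a10^h)^j$ lies in $(1^*00^*10^h)^*$ for all $j\ge 0$ and has length $j(a+h+1)$, so $u^j\in L(h,p)$ iff $p\mid j(a+h+1)$ iff $p\mid j$.

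Next, suppose $\mathcal{B}=(S,\{0,1\},M,\rho,F)$ is a PFA recognizing $L(h,p)$ with isolated cut-point $\lambda$ and isolation $\delta$. Define a new PFA $\mathcal{B}'$ over the unary alphabet $\{0\}$ with the same state set $S$, the same initial distribution $\rho$ and accepting set $F$, and transition matrix $M'(0)=M(0^a1 0^h)=M(0)^aM(1)M(0)^h$, which is again a stochastic matrix. Then for every $j$, the acceptance probability of $\mathcal{B}'$ on $0^j$ equals the acceptance probability of $\mathcal{B}$ on $u^j$. Since $u^j\in L(h,p)\iff p\mid j\iff 0^j\in L(p)$, the automaton $\mathcal{B}'$ recognizes $L(p)$ with the same cut-point $\lambda$ isolated by $\delta$. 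By Lemma \ref{PFA_p_states}, $\mathcal{B}'$ has at least $p$ states, hence so does $\mathcal{B}$.

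The only real obstacle is the elementary number-theoretic bookkeeping of choosing $u$ so that the exponents in $L(h,p)$ pick out exactly the multiples of $p$; once $\gcd(|u|,p)=1$ is secured, the ``homomorphic substitution'' trick on the PFA transition matrices is routine, because a product of stochastic matrices is stochastic and acceptance probabilities compose multiplicatively. I expect no difficulty beyond verifying that $u^j$ never accidentally leaves the star-language, which is immediate from the form $u=0^a10^h$.
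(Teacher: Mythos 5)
Your proposal is correct and follows essentially the same route as the paper: the paper also picks $0^l10^h$ with $|0^l10^h|\not\equiv 0\pmod p$, forms the unary PFA with transition matrix $M(0)^lM(1)M(0)^h$ on the same state set, and invokes Lemma \ref{PFA_p_states}. Your added care about $\gcd(|u|,p)=1$ is exactly the paper's condition, so nothing is missing.
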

\begin{proof}
	Assume that there exists a PFA $\mathcal{P}_1=(S,\{0,1\},M_1,\rho,F)$ recognizing $L(h,p)$ with isolated cut-point having fewer than $p$ states. Take $l\in \mathbb{Z}^+$ such that $|0^l10^h| \bmod p\not=0 $, then we get $p|d\Leftrightarrow (0^l10^h)^d\in L(h,p)$. Let  PFA $\mathcal{P}_2=(S,\{0\},M_2,\rho,F)$, where $M_2(0)=M_1(0)^lM_1(1)M_1(0)^h$. We can see that PFA $\mathcal{P}_2$ recognizes $L(p)=\{0^{kp}:k=0,1,\dots\}$ with   isolated cut-point and with fewer than $p$ states. However it contradicts \totalref{Lemma}{PFA_p_states}. Hence, the lemma holds.
\end{proof}

\subsection{Succinctness of 1QFAC}
	In summary, we get a proposition showing the state succinctness of 1QFAC.

\begin{proposition}\label{succinct_1QFAC}
	Let $h$ be any positive integer and let $p$ be any prime number. Then there exists regular languages $L(h,p)$ ($L(h,p)=(1^*00^*10^h)^*\cap \{w:|w|=kp,k=0,1,2,\dots\}$) satisfying
\begin{enumerate}[(I)]
\item Neither MO-1QFA nor MM-1QFA can recognize $L(h,p)$ with isolated cut-point.
\item No multi-letter 1QFA can recognize $L(h,p)$ with isolated cut-point.
\item The minimal DFA recognizing $L(h,p)$ has $(h+2)p+1$ states.
\item Any PFA recognizing $L(h,p)$ with isolated cut-point has at least $p$ states.
\item $\forall \epsilon>0$, there exists a 1QFAC with $h+3$ classical states and $\Theta(\log p)$ quantum basis states recognizing $L(h,p)$ with one-side error $\epsilon$.
 \end{enumerate}
\end{proposition}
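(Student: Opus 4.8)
Parts (I)--(IV) require no new work: (I) and (II) are exactly the two propositions proved just above from the minimal DFA $\mathcal{D}_2$ via the ``construction forbidden by MM-1QFA'' criterion (\totalref{Lemma}{lemma:FMM}) and the ``F-construction'' criterion (\totalref{Lemma}{lemma:FML}); (III) follows from \totalref{Lemma}{minimal_DFA} by counting $|Q'|=1+(h+2)p$; and (IV) is the lemma proved just above, which reduces a hypothetical sub-$p$-state PFA for $L(h,p)$ to one for the unary language $L(p)$ and then invokes \totalref{Lemma}{PFA_p_states}. So the whole task is to establish (V).

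For (V) the plan is to split the two constraints defining $L(h,p)=(1^*00^*10^h)^*\cap\{w:p\mid|w|\}$ between the classical and the quantum part of the 1QFAC. The classical part is taken to be the DFA $\mathcal{D}_1$ of $(1^*00^*10^h)^*$ of Figure~\ref{fig:DFA1} together with its all-rejecting state, which has $h+3$ states; for every classical state $s$ and every $\sigma\in\{0,1\}$ the unitary $U_{s\sigma}$ is one and the same unitary $U$ on $\mathbb{C}^{d}$ (so that each input letter, whatever it is, advances a ``length counter'' by one step), with $d=\Theta(\log p)$ to be fixed below; $|\psi_0\rangle$ is a fixed unit vector of $\mathbb{C}^{d}$; and the measurement $\mathcal{M}_s$ in a classical state $s$ that is accepting in $\mathcal{D}_1$ is $\{\,|\psi_0\rangle\langle\psi_0|,\ I-|\psi_0\rangle\langle\psi_0|\,\}$ (outcomes ``accept''/``reject''), while $\mathcal{M}_s$ in every other classical state always outputs ``reject''. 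With this design, on input $w$ the final classical state is accepting iff $w\in(1^*00^*10^h)^*$, and in that case the acceptance probability is exactly $|\langle\psi_0|U^{|w|}|\psi_0\rangle|^{2}$; so it only remains to exhibit $U$ and $|\psi_0\rangle$ on a space of dimension $\Theta(\log p)$ with $|\langle\psi_0|U^{n}|\psi_0\rangle|^{2}=1$ when $p\mid n$ and $\le\epsilon$ when $p\nmid n$. (One may equivalently first wrap this gadget as an MO-1QFA for $\{w:p\mid|w|\}$ with one-sided error and then invoke \totalref{Lemma}{1QFAC_construction}, whose construction visibly preserves one-sided error; but building the 1QFAC directly is self-contained.)

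For the gadget, set $\omega=e^{2\pi i/p}$, pick residues $j_1,\dots,j_d\in\mathbb{Z}_p$, let $U=\mathrm{diag}(\omega^{j_1},\dots,\omega^{j_d})$ on $\mathbb{C}^{d}$ and $|\psi_0\rangle=\tfrac{1}{\sqrt d}\sum_{l=1}^{d}|l\rangle$; then $\langle\psi_0|U^{n}|\psi_0\rangle=\tfrac1d\sum_{l=1}^{d}\omega^{j_l n}$, which equals $1$ whenever $p\mid n$, so the perfect-acceptance (one-sided) behaviour on $L(h,p)$ comes for free. For $n$ with $m:=n\bmod p\neq 0$ one needs $\bigl|\tfrac1d\sum_{l}\omega^{j_l m}\bigr|^{2}\le\epsilon$ for every $m\in\{1,\dots,p-1\}$. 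Choosing $j_1,\dots,j_d$ independently and uniformly from $\mathbb{Z}_p$: for a fixed $m\neq 0$ the summands $\omega^{j_l m}$ are i.i.d., of modulus $1$, and of mean $0$ because $p$ is prime (so $l\mapsto j_l m$ is uniform on $\mathbb{Z}_p$ and $\sum_{t\in\mathbb{Z}_p}\omega^{t}=0$); applying a Hoeffding bound to the real and imaginary parts gives $\Pr\bigl[\,\bigl|\tfrac1d\sum_l\omega^{j_l m}\bigr|\ge\sqrt\epsilon\,\bigr]\le 4e^{-d\epsilon/4}$, and a union bound over the $p-1$ values of $m$ makes this $<1$ once $d=\lceil(4/\epsilon)\ln(4p)\rceil=\Theta(\log p)$. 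Hence a good choice of $j_1,\dots,j_d$ exists, completing the construction with $h+3$ classical states and $\Theta(\log p)$ quantum basis states; the matching $\Omega(\log p)$ lower bound needed to write $\Theta$ follows from the $\Omega(\log p)$ state lower bound for $1$QFA on unary $L(p)$ \cite{strengths} (or from \totalref{Theorem}{1QFAC_size}, since here $m=(h+2)p+1$ and $k=h+3$). The main obstacle is precisely getting \emph{one-sided} error -- acceptance probability exactly $1$ on $L(h,p)$, not merely above a threshold -- out of only $O(\log p)$ quantum basis states: writing $U$ diagonally in Fourier-mode form makes the ``$=1$'' half automatic, and the real content is the character-sum estimate that a random set of $\Theta(\log p)$ modes simultaneously keeps $\bigl|\tfrac1d\sum_l\omega^{j_l m}\bigr|$ small for every nonzero residue $m$, where primality of $p$ is exactly what guarantees each mode has mean zero.
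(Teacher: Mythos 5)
Your proof of (V) is correct and takes essentially the same route as the paper: the identical split of $L(h,p)$ into the $(h+3)$-state classical DFA $\mathcal{D}_1$ for $(1^*00^*10^h)^*$ and a $\Theta(\log p)$-dimensional quantum counter for the input length modulo $p$, which the paper obtains by citing the Ambainis--Freivalds MO-1QFA for $L(p)$ \cite{strengths} and combining via \totalref{Lemma}{1QFAC_construction}. The only difference is that you inline a self-contained probabilistic re-derivation (diagonal Fourier modes plus Hoeffding and a union bound) of that cited $O(\log p)$ one-sided-error gadget and assemble the 1QFAC directly rather than through the lemma; this is extra work but entirely sound.
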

\begin{proof}
	 (I)(II)(III)(IV) have been already proved in the previous subsections. The rest is to prove (V). For any $\epsilon>0$, there exists a DFA recognizing $(1^*00^*10^h)$ with $h+3$ states and an MO-1QFA with $\Theta(\log p)$ quantum basis states recognizing $L(p)$ with one-side error $\epsilon$ \cite{strengths}. Hence, by \totalref{Lemma}{1QFAC_construction}, (V) holds.
\end{proof}

\begin{remark}
	By (III) and (V) of \totalref{Proposition}{succinct_1QFAC}, it holds that the bound in \totalref{Theorem}{1QFAC_size} is attainable.
\end{remark}

	When $h$ is small and $p$ is large enough, \totalref{Proposition}{succinct_1QFAC} shows that 1QFAC are exponentially more succinct than DFA and PFA for recognizing some regular languages that can not be recognized by MO-1QFA, MM-1QFA or multi-letter 1QFA.

\section{Simulation}\label{sec:simulation}

	In this section, we show that some 1QFAC can be simulated by MO-1QFA, and multi-letter 1QFA can be simulated by 1QFAC. Then we induce a result regarding a quantitative relationship of the state number between multi-letter 1QFA and DFA.

First we show that any 1QFAC whose DFA is reversible  can be simulated by  MO-1QFA.

\begin{theorem}
	Let 1QFAC $\mathcal{A}_{1QFAC}=(S, Q, \Sigma,\Gamma, s_0, |\psi_0 \rangle, \delta , \mathbb{U}, \mathcal{M} )$  satisfy that $\forall s_1,s_2\in S, \forall \sigma\in\Sigma, \delta(s_1,\sigma)\not=\delta(s_2,\sigma)$ if $s_1\not=s_2$. Then there exists an MO-1QFA simulating $\mathcal{A}_{1QFAC}$ with $|S||Q|$ quantum basis states.
\end{theorem}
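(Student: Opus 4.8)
The plan is to build an MO-1QFA whose Hilbert space is the tensor product $\mathbb{C}^{|S|}\otimes\mathbb{C}^{|Q|}$, using the first factor to track the classical state of $\mathcal{A}_{1QFAC}$ and the second factor to carry its quantum state. The point of the reversibility hypothesis $\delta(s_1,\sigma)\neq\delta(s_2,\sigma)$ whenever $s_1\neq s_2$ is exactly that, for each fixed $\sigma$, the map $s\mapsto\delta(s,\sigma)$ is an injection of $S$ into $S$, hence a permutation of $S$; so it can be realized by a permutation matrix $V_\sigma$ on $\mathbb{C}^{|S|}$, which is unitary. This is the device that lets us fold the otherwise-irreversible DFA transitions into a unitary operator.

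First I would set the initial state of the MO-1QFA to be $|s_0\rangle|\psi_0\rangle$, matching the notation $|\phi_x\rangle=|s_x\rangle|\psi_x\rangle$ from the definition block. Next, for each $\sigma\in\Sigma$ I would define the unitary $W_\sigma$ on $\mathbb{C}^{|S|}\otimes\mathbb{C}^{|Q|}$ by specifying its action on basis vectors of the form $|s\rangle|\psi\rangle$ as $W_\sigma\bigl(|s\rangle|\psi\rangle\bigr)=|\delta(s,\sigma)\rangle\,U_{s\sigma}|\psi\rangle$, i.e. $W_\sigma=\sum_{s\in S}|\delta(s,\sigma)\rangle\langle s|\otimes U_{s\sigma}$. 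I would check unitarity of $W_\sigma$ directly: since $\{|\delta(s,\sigma)\rangle\}_{s\in S}$ is an orthonormal basis (the $\delta(\cdot,\sigma)$ injectivity again) and each $U_{s\sigma}$ is unitary, $W_\sigma^\dagger W_\sigma=\sum_{s}|s\rangle\langle s|\otimes U_{s\sigma}^\dagger U_{s\sigma}=I$. An easy induction on $|x|$ then shows that on input $x=\sigma_1\cdots\sigma_n$ the MO-1QFA reaches the state $W_{\sigma_n}\cdots W_{\sigma_1}\bigl(|s_0\rangle|\psi_0\rangle\bigr)=|s_x\rangle|\psi_x\rangle$, which is precisely $|\phi_x\rangle$; the classical bookkeeping in the first register reproduces $\delta$, and the controlled unitaries in the second register reproduce the definition $|\psi_x\rangle=U_{\delta(s_0,\sigma_1\cdots\sigma_{n-1})\sigma_n}\cdots U_{s_0\sigma_1}|\psi_0\rangle$.

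It then remains to choose the accepting set so that the single final measurement of the MO-1QFA reproduces $\mathrm{Prob}_{\mathcal{A}_{1QFAC},acc}(x)=\|P_{s_x,acc}|\psi_x\rangle\|^2$. For this I would take $Q_{acc}$ to be the set of basis states $|s\rangle|q\rangle$ with $|q\rangle$ lying in the range of $P_{s,acc}$; more precisely, the accepting projector is $P_{acc}=\sum_{s\in S}|s\rangle\langle s|\otimes P_{s,acc}$, which is a projector because the $|s\rangle\langle s|$ are orthogonal and each $P_{s,acc}$ is a projector, and it is diagonal in the chosen orthonormal basis so it corresponds to a genuine subset $Q_{acc}\subseteq S\times Q$. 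Then $\|P_{acc}|\phi_x\rangle\|^2=\bigl\||s_x\rangle\otimes P_{s_x,acc}|\psi_x\rangle\bigr\|^2=\|P_{s_x,acc}|\psi_x\rangle\|^2$, as required, so the MO-1QFA accepts $x$ with exactly the probability of $\mathcal{A}_{1QFAC}$, and in particular recognizes the same language with the same cut-point and isolation. Finally I would count: the basis states are indexed by $S\times Q$, so the MO-1QFA has $|S||Q|$ quantum basis states. The only mild subtlety — the "main obstacle", such as it is — is making sure that each $\mathcal{M}_s=\{P_{s,\gamma}\}_{\gamma}$ being a measurement on all of $\mathcal{H}(Q)$ (not just a subspace) is what guarantees $\sum_s|s\rangle\langle s|\otimes(I-P_{s,acc})$ is the complementary projector, so that $\{P_{acc},I-P_{acc}\}$ is a valid two-outcome measurement; everything else is routine bookkeeping.
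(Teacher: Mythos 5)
Your construction is exactly the one in the paper: the same tensor-product state space $\mathbb{C}^{|S|}\otimes\mathbb{C}^{|Q|}$, the same unitaries $\sum_{s\in S}|\delta(s,\sigma)\rangle\langle s|\otimes U_{s\sigma}$ (with unitarity secured by the injectivity of $\delta(\cdot,\sigma)$), the same initial state $|s_0\rangle|\psi_0\rangle$, and the same accepting projector $\sum_{s}|s\rangle\langle s|\otimes P_{s,acc}$, yielding identical acceptance probabilities and the $|S||Q|$ state count. Your write-up is correct and in fact spells out the unitarity check and the induction more explicitly than the paper does.
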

\begin{proof}
	Let $ \mathcal{M}=\{\mathcal{M}_s\}_{s\in S}, \mathcal{M}_s=\{P_{s,\gamma}\}_{\gamma\in \Gamma},\mathbb{U}=\{U_{s\sigma}\}_{s\in S,\sigma \in \Sigma}$ and define MO-1QFA $\mathcal{A}_{MO}=(Q, \Sigma,|s_0\rangle |\psi_0 \rangle,\\  \{U'_{\sigma}\}_{\sigma \in \Sigma}, Q_{acc})$ where $U'_{\sigma}=\sum_{s\in S}|\delta(s,\sigma)\rangle\langle s|\otimes U_{s\sigma}$ and the projector onto the subspace generated by $Q_{acc}$ is $P_{acc}=\sum_{a\in Q_{acc}}|a\rangle \langle a|=\sum_s |s\rangle\langle s|\otimes P_{s,acc}$. It can be verified that $U'_{\sigma}$ is unitary and for any $x=x_1x_2\dots x_n$, $U'_{x_n}\cdots U'_{x_2}U'_{x_1}|s_0\rangle |\psi_0 \rangle=|s_x\rangle |\psi_x \rangle$ and ${\rm Prob}_{\mathcal{A}_{MO},acc}(x)={\rm Prob}_{\mathcal{A}_{1QFAC},acc}(x)$. Consequently, the theorem holds.
\end{proof}

Next we show that any $k$-letter 1QFA can be simulated by 1QFAC.

\begin{theorem}\label{ML-1QFA_simulation}
	Let $k$-letter 1QFA $\mathcal{A}_{k-letter}=(Q, \Sigma, |\psi_0\rangle, \{U_{w}\}_{w \in (\{\Lambda\}\cup \Sigma)^k}, Q_{acc})$. Then there exists a 1QFAC simulating $\mathcal{A}_{k-letter}$ with $\sum_{i=0}^{k-1}|\Sigma|^{i}$ classical states and $|Q|$ quantum basis states.
\end{theorem}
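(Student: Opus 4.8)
The plan is to build a 1QFAC whose classical component tracks enough of the input to know which unitary a $k$-letter 1QFA would apply next, while the quantum component directly mirrors the $k$-letter 1QFA's Hilbert space. The key observation is that the evolution of a $k$-letter 1QFA depends only on the last $k$ input letters read so far (padded with blanks $\Lambda$ when fewer than $k$ letters have been seen, as in equation (\ref{final_state_ML-1QFA})). So if the classical state remembers the suffix of length $\min(k-1,n)$ of the input read so far, then upon reading the next letter $\sigma$ the 1QFAC knows exactly the length-$k$ window $w$ and can apply $U_w$ to the quantum state. The number of such suffixes (including the empty one, and all shorter-than-$(k-1)$ ones arising at the start) is $\sum_{i=0}^{k-1}|\Sigma|^i$, which is the claimed classical state count; the quantum basis state count is just $|Q|$, inherited verbatim.

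Concretely, I would define the 1QFAC $\mathcal{A}_{1QFAC}=(S,Q,\Sigma,\Gamma,s_0,|\psi_0\rangle,\delta,\mathbb{U},\mathcal{M})$ with $S=\bigcup_{i=0}^{k-1}\Sigma^i$ (strings of length less than $k$), $s_0=e_\epsilon$, and the classical transition
\[
\delta(s,\sigma)=\begin{cases} s\sigma,& |s|<k-1,\\ (\text{last }k-1\text{ letters of }s\sigma),& |s|=k-1.\end{cases}
\]
For the unitaries, set $U_{s\sigma}=U_{\Lambda^{k-1-|s|}\,s\,\sigma}$ when $|s|<k-1$, and $U_{s\sigma}=U_{s\sigma}$ (i.e. the $k$-letter unitary indexed by the length-$k$ word $s\sigma$) when $|s|=k-1$. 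The measurement $\mathcal{M}_s=\{P_{s,\text{acc}},P_{s,\text{rej}}\}$ is taken to be the \emph{same} for every classical state $s$, namely $P_{s,\text{acc}}=\sum_{a\in Q_{acc}}|a\rangle\langle a|$ and $P_{s,\text{rej}}=I-P_{s,\text{acc}}$, matching the single measurement of the $k$-letter 1QFA.

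The verification is an induction on the length $n$ of the input $x=\sigma_1\cdots\sigma_n$: I would show that after reading $\sigma_1\cdots\sigma_j$ the classical state of $\mathcal{A}_{1QFAC}$ equals the suffix of $\sigma_1\cdots\sigma_j$ of length $\min(j,k-1)$, and that the quantum state $|\psi_{\sigma_1\cdots\sigma_j}\rangle$ equals the partial state $U_{\cdots}\cdots U_{\Lambda^{k-1}\sigma_1}|\psi_0\rangle$ appearing in (\ref{final_state_ML-1QFA}). The base case $j=0$ is immediate, and the inductive step is exactly the defining property of $\delta$ and $\mathbb{U}$ above: the classical state at step $j-1$ together with $\sigma_j$ recovers precisely the length-$k$ window (blank-padded if $j<k$) that the $k$-letter 1QFA uses at step $j$. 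Hence $|\psi_x\rangle$ for the 1QFAC coincides with the final state of $\mathcal{A}_{k-letter}$, so ${\rm Prob}_{\mathcal{A}_{1QFAC},acc}(x)={\rm Prob}_{\mathcal{A}_{k-letter},acc}(x)$ for all $x$, giving the simulation.

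The only mildly delicate point — the main obstacle, such as it is — is the bookkeeping of the blank-padding at the beginning of the input, i.e. making sure the two cases of (\ref{final_state_ML-1QFA}) ($n<k$ versus $n\geq k$) are both reproduced correctly by a single clean description of $\delta$ and $\mathbb{U}$, and that $\Lambda\notin\Sigma$ never needs to appear as an actual classical state. I would handle this by letting the classical state store only genuine input letters and interpreting "a state of length $i<k-1$" as implicitly prefixed by $\Lambda^{k-1-i}$ when selecting the unitary; this keeps $|S|=\sum_{i=0}^{k-1}|\Sigma|^i$ exactly and requires no special-casing beyond what is written above.
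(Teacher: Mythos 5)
Your construction is essentially identical to the paper's: the paper also takes the classical states to be the (blank-padded) length-$(k-1)$ suffixes of the input read so far, uses $\delta(s_w,\sigma)=s_{w_2\cdots w_{k-1}\sigma}$, sets $U_{s_w\sigma}=U_{w\sigma}$ and a uniform accepting projector $\sum_{a\in Q_{acc}}|a\rangle\langle a|$ for every classical state, and verifies that the final quantum state matches equation (\ref{final_state_ML-1QFA}). Your proposal is correct and matches the paper's proof, including the count $|S|=\sum_{i=0}^{k-1}|\Sigma|^i$.
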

\begin{proof}
	The case that $k=1$ is easy. We discuss $k>1$. Let 1QFAC be $\mathcal{A}_{1QFAC}=(S, Q, \Sigma,\Gamma, s_{\Lambda^{k-1}}, |\psi_0 \rangle, \delta , \mathbb{U}, \mathcal{M} )$ and $\mathcal{M}=\{\mathcal{M}_s\}_{s\in S}, \mathcal{M}_s=\{P_{s,\gamma}\}_{\gamma\in \Gamma},\mathbb{U}=\{U_{s\sigma}\}_{s\in S,\sigma \in \Sigma}$, where
\begin{itemize}
\item $S=\{s_w:w\in (\{\Lambda\}\cup \Sigma)^{k-1}$ and $w$ take all possible strings that form by the last $k-1$ letters received by $\mathcal{A}_{k-letter} \}$. We have $|S|= \sum_{i=0}^{k-1}|\Sigma|^{i}$.
\item $\forall \sigma\in\Sigma, w=w_1\dots w_{k-1},\delta(s_w,\sigma)=
\begin{cases}
s_{w_2\dots w_{k-1}\sigma} & k>2,\\
s_{\sigma} & k=2.
\end{cases}$
\item $\forall s_w\in S, U_{s_w\sigma}=U_{w\sigma}$.
\item $\forall s\in S,P_{s,acc}=\sum_{a\in Q_{acc}}|a\rangle \langle a|$.
\end{itemize}
We show that for any input string $x$, the final quantum state of $\mathcal{A}_{k-letter}$ is the same as that of $\mathcal{A}_{1QFAC}$. Let $x=\sigma_1\sigma_2\cdots\sigma_n$. Assume $n\geq k>2$ (other cases can be proved similarly), the final quantum state of $\mathcal{A}_{1QFAC}$ is
\begin{align}
|\psi_x^{1QFAC} \rangle&=U_{\delta(s_{\Lambda^{k-1}},\sigma_1\cdots \sigma_{n-1}) \sigma_n}\cdots U_{\delta(s_{\Lambda^{k-1}},\sigma_1) \sigma_2}U_{s_{\Lambda^{k-1}} \sigma_1}|\psi_0 \rangle\\
&=U_{\sigma_{n-k+1}\sigma_{n-k+2}\cdots \sigma_n}\cdots U_{\Lambda^{k-2}\sigma_1\sigma_2}U_{\Lambda^{k-1}\sigma_1}|\psi_0\rangle.
\end{align}
By equation (\ref{final_state_ML-1QFA}), $|\psi_x^{1QFAC}\rangle$ is equal to the final quantum state of $\mathcal{A}_{k-letter}$ receiving $x$. Hence the theorem holds.

\end{proof}

To conclude this section, we present a result concerning the  relationship of state number between multi-letter 1QFA and DFA.
\begin{theorem}
	If a $k$-letter 1QFA whose alphabet is $\Sigma$ recognizes regular language $L$ with cut-point $\lambda$ isolated by $\epsilon$ and with $n$ quantum basis states, then the minimal DFA of $L$ has $m$ states, where $m\leq (\sum_{i=0}^{k-1}|\Sigma|^{i})(1+\frac{2}{\epsilon})^{2n}$.
\end{theorem}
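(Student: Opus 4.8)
The plan is to chain together the two results already at hand: the simulation of a $k$-letter 1QFA by a 1QFAC from \totalref{Theorem}{ML-1QFA_simulation}, and the bound relating the state number of a 1QFAC to that of the minimal DFA from \totalref{Theorem}{1QFAC_size} (more precisely its underlying inequality in the proof, $m\leq\sum_{s\in S}t_s\leq k(1+\frac{2}{\epsilon})^{2n}$). First I would invoke \totalref{Theorem}{ML-1QFA_simulation} on the given $k$-letter 1QFA $\mathcal{A}_{k\text{-}letter}$ with $n$ quantum basis states: this yields a 1QFAC $\mathcal{A}_{1QFAC}$ with $\sum_{i=0}^{k-1}|\Sigma|^i$ classical states and $n$ quantum basis states that simulates $\mathcal{A}_{k\text{-}letter}$, hence recognizes the same language $L$ with the same cut-point $\lambda$ isolated by the same $\epsilon$.

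Next I would feed this 1QFAC into the argument behind \totalref{Theorem}{1QFAC_size}. The key point is that that theorem's proof does not merely give $m = k2^{O(n)}$ asymptotically; it actually establishes the explicit inequality $m \leq \sum_{s\in S} t_s \leq k(1+\frac{2}{\epsilon})^{2n}$, where $k=|S|$ is the number of classical states and $t_s\leq(1+\frac{2}{\epsilon})^{2n}$ comes from \totalref{Lemma}{number_of_class}. Substituting $k = |S| = \sum_{i=0}^{k-1}|\Sigma|^i$ and the quantum dimension $n = |Q|$, I get exactly $m \leq \left(\sum_{i=0}^{k-1}|\Sigma|^i\right)(1+\frac{2}{\epsilon})^{2n}$, which is the claimed bound. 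One small bookkeeping caveat to state carefully: the letter $k$ is overloaded — it denotes the ``$k$" of the $k$-letter automaton in the statement, while in \totalref{Theorem}{1QFAC_size} it denotes the classical state count — so in the write-up I would rename the classical-state count (say to $k'=\sum_{i=0}^{k-1}|\Sigma|^i$) to avoid collision.

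The only genuinely substantive point to verify is that \totalref{Lemma}{number_of_class} and \totalref{Theorem}{1QFAC_size} apply verbatim to the 1QFAC produced by the simulation — i.e., that it really is a bona fide 1QFAC recognizing $L$ with cut-point $\lambda$ isolated by $\epsilon$, not merely computing the same acceptance probabilities on inputs of length $\geq k$. Since \totalref{Theorem}{ML-1QFA_simulation} asserts the final quantum states coincide for \emph{every} input string $x$ (all the short cases $n<k$ included, as noted parenthetically in its proof), the acceptance probabilities agree on all of $\Sigma^*$, so the isolation carries over intact. There is essentially no obstacle here; the whole proof is two sentences of composition. If anything, the ``hard part'' is purely expository: making the substitution of the classical-state count into the DFA bound look clean despite the clash of notation, and citing the inequality form rather than the $O$-notation form of \totalref{Theorem}{1QFAC_size}.

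\begin{proof}
	By \totalref{Theorem}{ML-1QFA_simulation}, there exists a 1QFAC $\mathcal{A}_{1QFAC}=(S,Q,\Sigma,\Gamma,s_0,|\psi_0\rangle,\delta,\mathbb{U},\mathcal{M})$ with $|S|=\sum_{i=0}^{k-1}|\Sigma|^{i}$ classical states and $|Q|=n$ quantum basis states such that the final quantum state of $\mathcal{A}_{1QFAC}$ on every input $x$ coincides with that of the given $k$-letter 1QFA. Hence $\mathcal{A}_{1QFAC}$ recognizes $L$ with cut-point $\lambda$ isolated by $\epsilon$. Applying \totalref{Lemma}{Myhill-Nerode theorem} and \totalref{Lemma}{number_of_class} to $\mathcal{A}_{1QFAC}$ exactly as in the proof of \totalref{Theorem}{1QFAC_size}, we obtain
	\begin{equation}
		m\leq\sum_{s\in S}t_s\leq |S|\left(1+\frac{2}{\epsilon}\right)^{2n}=\left(\sum_{i=0}^{k-1}|\Sigma|^{i}\right)\left(1+\frac{2}{\epsilon}\right)^{2n},
	\end{equation}
	where $t_s\leq(1+\frac{2}{\epsilon})^{2n}$ is the quantity from \totalref{Lemma}{number_of_class}. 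This proves the theorem.
\end{proof}
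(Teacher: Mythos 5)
Your proof is correct and follows exactly the paper's route: the paper's own proof is simply ``Immediate from Theorem \ref{1QFAC_size} and Theorem \ref{ML-1QFA_simulation},'' composing the simulation with the explicit inequality $m\leq\sum_{s\in S}t_s\leq |S|(1+\frac{2}{\epsilon})^{2n}$. Your added care about the notational clash on $k$ and about citing the inequality form rather than the $O$-notation form is sensible but does not change the argument.
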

\begin{proof}
	Immediate from \totalref{Theorem}{1QFAC_size} and \totalref{Theorem}{ML-1QFA_simulation}.
\end{proof}

\section{Conclusions}
Quantum-classical computing models and algorithms have been thought of as important subjects in future study of quantum computing from theory to physical realization due to the difficulty of realizing large-scale universal quantum computers nowadays.  	
1QFAC are a kind of quantum-classical hybrid models that contain quantum and classical states interacting to each other. State complexity in finite automata is an attractive and practical area, so, in this paper, we have centered on studying the state complexity problems concerning 1QFAC, and clarifying the essential relationships between quantum states and classical states in 1QFAC.

 We have proved with improvement the basic relationships between the state number of DFA and 1QFAC for any given regular language. In particular,  we have clarified the trade-offs between the quantum basis states and classical states of 1QFAC and given a lower bound of the classical state number of 1QFAC for recognizing any given regular language. In a way, these results also have shown the superiorities and the limitations of 1QFAC. In addition, we have constructed a regular language showing that 1QFAC does have state advantages over other one-way finite automata such as DFA and PFA, where this language  can not be recognized by MO-1QFA, MM-1QFA or multi-letter 1QFA.

  Finally, we have verified the simulation between 1QFAC, MO-1QFA and multi-letter 1QFA, and showed a quantitative relationship of the state number between multi-letter 1QFA and DFA. However, there are still open problems concerning 1QFAC that are worthy of further consideration:
\begin{itemize}
\item We have given a lower bound of the classical state number of 1QFAC for recognizing any given regular language, but this bound is still not tight if the given language is infinite language, so a natural problem is what is the tight lower bound if the given language is infinite?

\item We say language $L$ is recognized by 1QFAC with cut-point if ${\rm{Prob}}_{acc}(x)> \lambda$ for $x\in L$ and ${\rm{Prob}}_{acc}(x)\leq \lambda$ for $x\not\in L$ for some $\lambda$ in $(0,1)$. Then how to characterize  the language class recognized by 1QFAC with cut-point?
\end{itemize}

\section*{Acknowledgements}
This work is supported in part by the National Natural Science Foundation of China (Nos. 61876195, 61572532) and the Natural Science Foundation of Guangdong Province of China (No. 2017B030311011).	

\end{document}